\newtheorem{assumption}{Assumption}
\crefname{assumption}{assumption}{assumptions}
\crefname{appendix}{supplement}{supplements}
\newcommand{\SMref}[1]{\Cref{#1}}
\begin{document}

\title{Bayesian Level Set Clustering}

\author{\name David Buch\thanks{Joint first authors; MD is the corresponding author.} \email davidbuch42@gmail.com \\
       \addr Two Sigma\\ New York, New York 10013, USA
       \AND
       \name Miheer Dewaskar\footnotemark[1] \email mdewaskar@unm.edu \\
       \addr Department of Mathematics and Statistics\\ University of New Mexico\\ Albuquerque,  New Mexico 87106, USA
       \AND
       \name David B. Dunson \email dunson@duke.edu\\
       \addr Department of Statistical Science\\ Duke University\\Durham, North Carolina 27708, USA}

\editor{Debdeep Pati}

\maketitle

\begin{abstract}%
Classically, Bayesian clustering
interprets each component of a mixture model as a cluster. The inferred clustering posterior is highly sensitive to any inaccuracies in the kernel within each component. %
As this kernel is made more flexible, problems arise in identifying the underlying clusters in the data. To address this pitfall, this article proposes a fundamentally different approach to Bayesian clustering that decouples the problems of clustering and flexible modeling of the data density $f$. Starting with an arbitrary Bayesian model for $f$ and a loss function for defining clusters based on $f$, we develop a Bayesian decision-theoretic framework for density-based clustering. Within this framework, we develop a Bayesian level set clustering method to cluster data into connected components of a level set of $f$. 
We provide  theoretical support, including clustering consistency, and highlight  performance in a variety of simulated examples. An application to astronomical data illustrates improvements over 
 the popular \texttt{DBSCAN} algorithm in terms of accuracy, insensitivity to tuning parameters, and providing uncertainty quantification.
\end{abstract}

\begin{keywords}
Bayesian nonparametrics, DBSCAN, Decision theory, Density-based clustering,  Loss function, Nonparametric density estimation
\end{keywords}

\begin{bibunit}[apalike]
\section{Introduction}
\label{s:introduction}
In the Bayesian literature, when clustering is the goal, it is standard practice to model the data as arising from a mixture of unimodal probability distributions \citep{lau2007bayesian, wade2018bayesian}. The observations are then grouped according to their association with a mixture component. 
Bayesian clustering has potential advantages over algorithmic and frequentist approaches, providing natural hierarchical modeling, uncertainty quantification, and the ability to incorporate prior information \citep{wade2023bayesian}. 
However, limitations appear in trying to apply the mixture model framework when clusters cannot be well represented by simple parametric kernels. Even when clusters are \textit{nearly} examples of simple parametric components, mixture model-based clustering can be brittle and result in  \textit{cluster splitting} \citep{miller2018robust, cai2021finite,chaumeny2022bayesian}.  
A potential solution is to use more flexible kernels \citep{malsiner2017identifying}. However, as the components are made more flexible, mixture models become difficult to fit and identify, since the multitude of reasonable models for a dataset tends to explode as the flexibility of the pieces increases \citep{ho2016convergence, ho2019singularity}. 

\begin{figure}%
	\centering
	\subfloat[\centering Bayesian Model-Based Clustering Point Estimate]{{\includegraphics[width=0.45\textwidth]{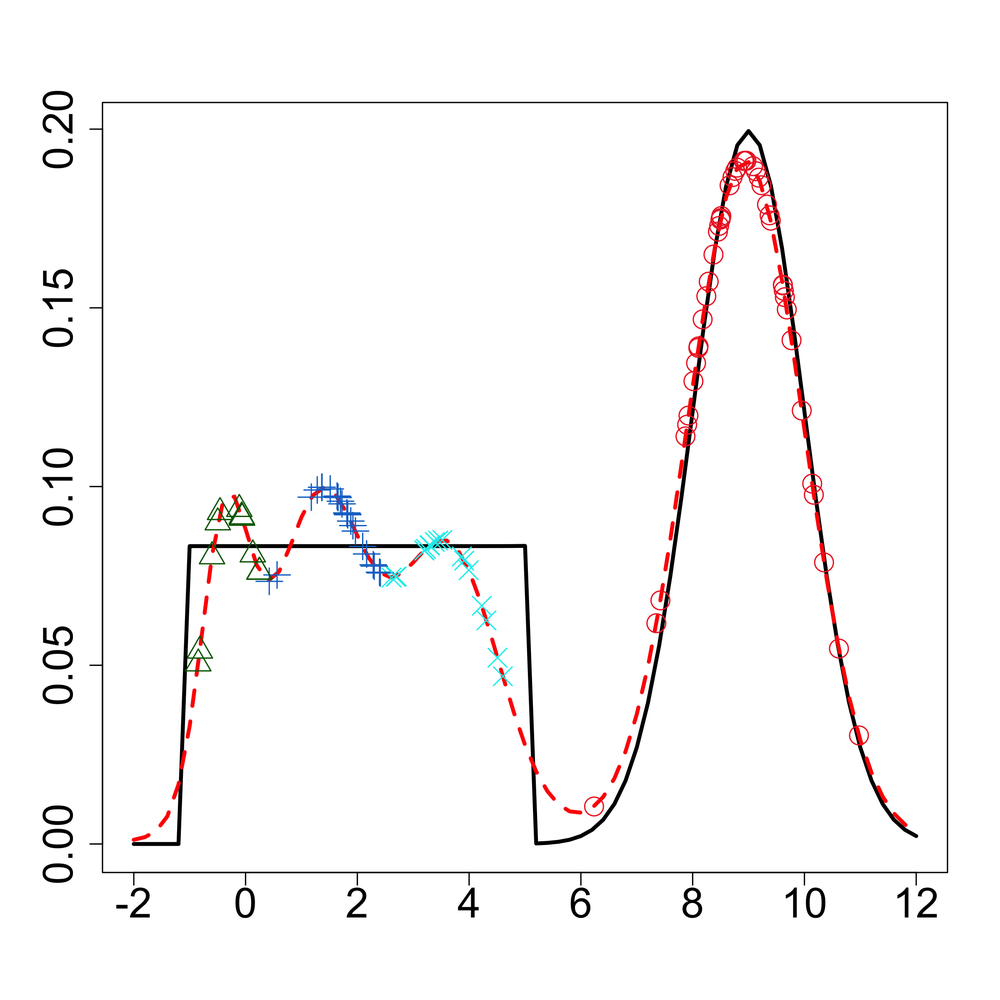} }}%
	\qquad
	\subfloat[\centering Bayesian Level Set Clustering Point Estimate]{{\includegraphics[width=0.45\textwidth]{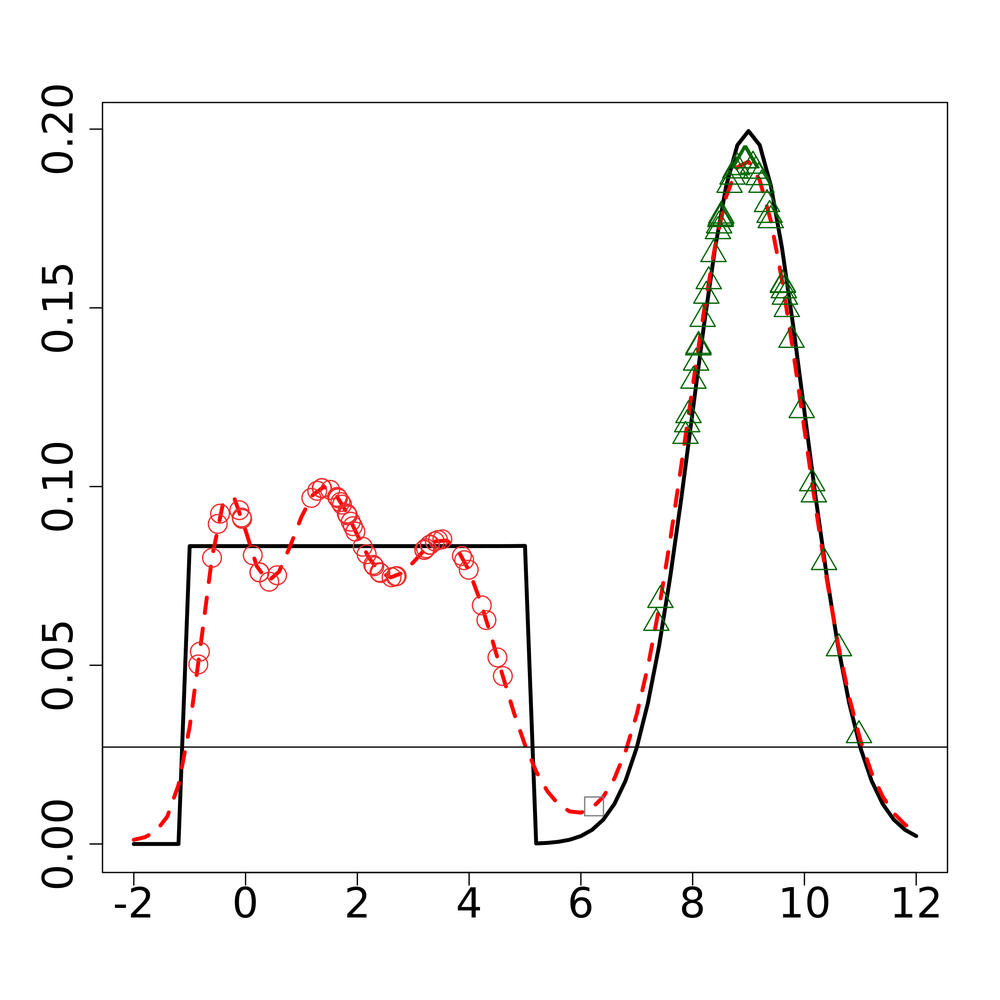} }}%
	\caption{We see the cluster splitting phenomenon among the clusters obtained (left) by fitting a Dirichlet process mixture of Gaussian 
    prior and finding the partition that minimizes expected VI loss under the posterior. Our Bayesian level set clustering (\texttt{BALLET}) point estimate based on the same prior (right) does not suffer from this phenomenon, despite the obvious bias in the posterior expectation of the density caused by the poor choice of prior distribution. We display a random subsample of the data in both plots, with their y-coordinates set to the expectation of the density at their locations, and with cluster assignments reflected by the color and shape of the points. The dashed red line is the expected density under the posterior. The solid line shown in black is the true data-generating density. The density level $\lambda=0.028$ denoted by the horizontal line (right) was  selected using an elbow heuristic in \Cref{s:ballet-tuning-parameter} (see \Cref{fig:illustration-elbow-selection}).} %
	\label{fig:mix-uniform-normal}
\end{figure}

Rather than avoid Bayesian clustering when the mixture approach fails, we propose decoupling the problems of modeling the data density and inferring clusters. Suppose that the data are drawn from the sample space $\cX$, and denote by $\Den$ the density space on $\cX$. Then, letting $\ClustSpace[\cX]$ refer to the space of all possible partitions of $\cX$, we can define functions $\Psi: \Den \to \ClustSpace[\cX]$ that map from densities on $\cX$ to partitions of $\cX$. In the example in Figure \ref{fig:mix-uniform-normal} (b), $\Psi(f)$ was chosen as the partition of $\cX$ induced by the connected components of $\{x \in \cX : f(x) \geq \lambda\}$ at the $\lambda=0.028$ level. Partitions of the sample space determine well-defined clusterings since, for any sample $\cX_n = \{x_1, \ldots, x_n\} \subseteq \cX$, a partition of $\cX$ induces a partition on $\cX_n$. For a particular $\Psi$ and data set $\cX_n$, we denote maps from the densities on $\cX$ to the partition on $\cX_n$ induced by $\Psi$ with the lower case $\psi : \Den \to \ClustSpace[\cX_n]$. Here we have suppressed the dependence on the sample $\cX_n$ to simplify the notation. 
 
Let $D\{\psi(f), \bC\}$ denote the loss for clustering $\bC \in \ClustSpace[\cX_n]$ relative to clustering $\psi(f) \in \ClustSpace[\cX_n]$. If $f_0$ is the true data-generating density, then the target clustering is $\bC_0=\psi(f_0)$. In practice $f_0$ is unknown, so we represent uncertainty in the unknown density using a Bayesian posterior $f \sim P_M(\cdot|\cX_n)$ based on the model $M$. This allows us to define a Bayesian decision-theoretic estimator $\cbayes$, obtained by minimizing the expected posterior loss: $\cbayes = \argmin_{\bC \in \ClustSpace[\cX_n]} E_{f \sim P_M(\cdot|\cX_n)}[D\{\psi(f), \bC\}]$, and to quantify the uncertainty in the clustering.

There is a substantial non-Bayesian literature on clustering based on the data-generating density $f$
\citep{menardi2016modal, campello2020density, bhattacharjee2021survey}. In this article, applying our decision-theoretic Bayesian paradigm for density-based clustering, we propose a new framework for Bayesian level set clustering. Level set clustering
\citep{rinaldo2010generalized, sriperumbudur2012consistency, jiang2017density, jang2019dbscan} is a popular approach that groups data points that fall into the same high-density region, while allowing these regions to have complex shapes. %
Our Bayesian approach has substantial advantages over current algorithmic approaches, such as \texttt{DBSCAN} \citep{ester1996density,schubert2017dbscan}, which we will illustrate in various examples. Advantages include accuracy, less sensitivity to tuning parameters, and uncertainty quantification in clustering. 

Our approach starts with the posterior under any nonparametric Bayesian model for $f$ as the input, defines a loss function appropriate for level set clustering, and develops efficient algorithms for producing Bayes clustering estimates, while also providing a characterization of uncertainty in clustering. We develop supporting theory and demonstrate advantages over model-based and algorithmic level set clustering in various applications. The code for implementing our methodology is available at \url{https://github.com/davidbuch/ballet_article} and can be applied to
data $\cX_n$ and samples $f^{(1)}, \dots, f^{(s)}$ from the posterior distribution of $f$ under any Bayesian model.

As a teaser motivating Bayesian level set clustering over a mixture-based approach, Figure
\ref{fig:mix-uniform-normal} shows clusters produced by (a) a traditional Bayesian clustering approach and (b) our proposed approach. Here, the black line is the true density $f_0$ and both methods rely on fitting the same Dirichlet process mixture of Gaussians to the data to obtain a posterior for $f$. Although the use of Gaussian kernels leads to a noticeable bias in density estimation in the left mode of $f_0$, our inferred level set clusters, which depend on the posterior distribution of the level set $\{x: f(x) \geq \lambda \}$ for our chosen level $\lambda$, are not affected by this. In contrast, an approach that equates clusters to mixture components sub-divides the uniform component into several subclusters. An interesting aspect of level set clustering is no attempt is made to cluster data points falling in low density regions; see \Cref{fig:synthdata-pe} for an example motivated by cosmology.

\subsection{Contributions} 

The closest literature relevant to our work is that of Bayesian estimation of level sets of densities studied by \cite{gayraud2005rates,gayraud2007consistency} and the results in \cite{li2021posterior} on posterior contraction and credible regions for level curves. The frequentist estimation of level curves using bootstrap to characterize uncertainty is studied in \cite{chen2016density}. Compared to the previous work on level set estimation, here we develop a practical method to compute a consistent Bayesian estimator of the induced clustering of the data and describe the associated uncertainty. Obtaining Bayesian clustering approaches that have appealing frequentist asymptotic properties is challenging under the predominant mixture model approach, particularly without making unrealistic assumptions such as correct kernel specification. Consequently, new Bayesian clustering methodologies based on the merging of components of an overfitted mixture of Gaussians \citep{dombowsky2023bayesian,aragam2020identifiability}, the use of repulsive priors in the cluster means \citep{petralia2012repulsive,xie2020bayesian,beraha2022mcmc} and the addition of entropic regularization \citep{franzolini2024entropy}, have been proposed to improve the reliability of Bayesian clustering. With similar motivation, here we propose a Bayesian framework for \emph{density-based clustering}  \citep{menardi2016modal,campello2020density,bhattacharjee2021survey} that is consistent under suitable assumptions (\Cref{thm:consistency}). We show how the standard Bayesian decision-theoretic clustering machinery can be adapted to handle density-based clustering by modifying the loss function \eqref{eq:density-based-estimator}. Focusing on level set clustering, we leverage the current algorithmic and theoretical understanding \citep{schubert2017dbscan,sriperumbudur2012consistency} to implement our Bayesian level set clustering methodology \texttt{BALLET} and establish its consistency (\Cref{cor:ballet-consistency}). Finally, in illustrating the application of \texttt{BALLET} to various datasets, we discuss practical strategies to choose the level $\lambda$ (\Cref{s:ballet-tuning-parameter}) and highlight the advantages offered by describing the clustering uncertainty  associated with \texttt{BALLET}  in  a comprehensive analysis of astronomical sky survey data (\Cref{s:real-data-analysis}).

\section{Bayesian Level Set Clustering Methodology}
\label{s:methods}

\subsection{Level Set Clusters and Sub-partitions}
\label{ss:level-set-clusters}
We start by expanding on the notational conventions of Section \ref{s:introduction}. Suppose that our data $\mathcal{X}_n = \{x_1, \ldots, x_n\}$ are drawn independently from an unknown density $f_0 \in \Den$ on the sample space $\cX$ taken to be $\R^d$ in much of this article, where $\Den$ denotes the space of densities on $\cX$ with respect to the Lebesgue measure. Let $S_{\lambda,f_0} \doteq \{x \in \cX : f_0(x) \geq \lambda\}$ denote the $\lambda$ level set of $f_0$, and temporarily let $W_{1}^{f_0}, \dots, W_{k^*}^{f_0}$ denote the \textit{topologically connected components} of $S_{\lambda, f_0}$. In \Cref{fig:connected-components-fig}, $S_{\lambda,f_0}$ is the colored region on the $x$-axis, with colors corresponding to the different choices of $\lambda$ indicated by the dashed lines. When $d=1$, this region will either be a single interval $S_{\lambda,f_0} = W_{1}^{f_0}$ with $k^*=1$, or more generally, be a union $S_{\lambda,f_0} = W_{1}^{f_0} \cup \cdots \cup  W_{k^*}^{f_0}$ of $k^* \in \{0,1,2,\ldots\}$ disjoint intervals. The \emph{level set clustering} $\bC_{0} =\psi_{\lambda}(f_0)$ of the data points $\cX_n$ associated with $f_0$ is the collection $\bC_{0} = \{C_1^{f_0}, \dots, C_k^{f_0}\}$ of $k \leq k^*$ non-empty sets in $\{W_1^{f_0} \cap \cX_n, \dots, W_{k^*}^{f_0} \cap \cX_n\}$. For instance, the level set clustering corresponding to $\lambda = 0.1$ in \Cref{fig:connected-components-fig} is a grouping of data points $\cX_n$ (not shown) based on whether they fall in a common blue interval or not. Data points that fall outside all of the blue intervals will be called \emph{noise points}.

A level set clustering $\bC = \{C_1, \ldots, C_k\}$ of $\cX_n$ is a \textit{sub-partition}, since $C_i \cap C_j = \emptyset$ for all $i \neq j$ and  $\cup_{i=1}^k C_i \subseteq \cX_n$ but, unlike regular partitions, the presence of noise points not assigned to any cluster can lead to $\cup_{i=1}^k C_i \neq \cX_n$. We call the observations in   
$A = \cup_{i=1}^k C_i$ as \textit{active} or \textit{core points}, while the remaining observations $I = \cX_n \setminus A$ are \textit{inactive} or \textit{noise points}. In Figure \ref{fig:mix-uniform-normal}(b) a noise point is shown in gray. 
Every sub-partition of size $k$ with some noise points can be mapped to a unique partition of size $k + 1$, where the extra set in the partition consists of the noise points.  However, this mapping is not one-to-one because the information on the identity of the noise cluster is lost (see example at the beginning of \Cref{s:ia-loss-function}). Instead, to preserve information about noise points, we explicitly work with the non-standard setup of regarding a clustering as a sub-partition rather than a partition.
To this end, we repurpose the notation $\ClustSpace[\cX_n]$ to denote the space of all \textit{sub-partitions} of $\cX_n$. Note that this is a strict expansion since $\ClustSpace[\cX_n]$ also contains all partitions of $\cX_n$. 

\begin{figure}
    \centering
    \begin{tikzpicture}
		\begin{axis}[
			width=12cm, height=5cm,
			domain=-6:6, samples=2000,
			xmin=-6, xmax=6, ymin=-0.06, ymax=0.35,
			axis x line=middle, axis y line=left,
			xlabel={$x$}, ylabel={$f(x)$},
			clip=false,
			legend style={at={(0.5,1.1)}, anchor=south, legend columns=-1} %
			]
			
			\def\sigma{0.7}
			\def\muA{-3}
			\def\muB{0.7}
			\def\muC{3}
			\def\wA{0.5}
			\def\wB{0.25}
			\def\wC{0.25}
			
			\def\levone{0.15}
			\def\levtwo{0.1}
			\def\levthree{0.05}
			
			\pgfmathdeclarefunction{mixpdf}{1}{%
				\pgfmathparse{%
					(\wA)/(sqrt(2*pi)*\sigma)*exp(-((#1-\muA)^2)/(2*\sigma^2)) +
					(\wB)/(sqrt(2*pi)*\sigma)*exp(-((#1-\muB)^2)/(2*\sigma^2)) +
					(\wC)/(sqrt(2*pi)*\sigma)*exp(-((#1-\muC)^2)/(2*\sigma^2))
				}%
			}
			
			\addplot[name path=mixture, very thick, domain=-6:6] {mixpdf(x)};
			\addplot[gray,dashed,thick, green] coordinates {(-6,\levone) (6,\levone)};
			\addplot[gray,dashed,thick, blue] coordinates {(-6,\levtwo) (6,\levtwo)};
			\addplot[gray,dashdotted,thick, orange] coordinates {(-6,\levthree) (6,\levthree)};
			\addplot[line width=3pt, green, domain=-6:6, unbounded coords=jump]
			({x}, { (mixpdf(x) >= \levone) ? 0.02 : nan });
			
			\addplot[line width=3pt, blue, domain=-6:6, unbounded coords=jump]
			({x}, { (mixpdf(x) >= \levtwo) ? 0 : nan });
			\addplot[line width=3pt, orange, domain=-6:6, unbounded coords=jump]
			({x}, { (mixpdf(x) >= \levthree) ? -0.02 : nan });
		\end{axis}
\end{tikzpicture}
    \caption{Topological connected components of the level set $\{x : f_0(x) \geq \lambda \}$ for a mixture $f_0$ (black curve) of Gaussians based on three colored choices for the level $\lambda$.  Changing $\lambda$ can result in discovery of anywhere from zero to three components (clusters). }
    \label{fig:connected-components-fig}
\end{figure}

\subsection{On the interpretation of level set clusters and the choice of level $\lambda$}

Level set clustering is primarily meant to discover connected regions of high (population) density separated by regions of low density, and the parameter $\lambda$ determines what `high' means here. While a reasonable choice of $\lambda$ may be apparent in certain applications (see \Cref{s:real-data-analysis}), we now discuss strategies from the literature when this is not the case. 

When the clusters are expected to be well-separated from each other (e.g.~\Cref{fig:toy-challenge-dpmm,fig:mix-uniform-normal}), simple strategies to tune $\lambda$ based on elbow plots \citep{ester1996density} and deciding on a small fraction ($\nu = \int f(x) 1_{\{f(x) < \lambda\}} dx$) of noise points in advance \citep{cuevas2001further} are useful and robust.  See \Cref{s:ballet-tuning-parameter} for our implementation.

In general however, as seen in \Cref{fig:connected-components-fig}, care is needed to select the level $\lambda$ and in some cases a single appropriate $\lambda$ does not exist (see \Cref{fig:level-set-pathological} and \cite{menardi2016modal,campello2020density}). In such scenarios, one should study the \emph{cluster-tree} \cite{campello2015hierarchical,wang2019dbscan,steinwart2023adaptive} obtained by running level set clustering across a range of values of $\lambda > 0$. It is common to visualize \citep{clustree} and process \citep{campello2015hierarchical,scrucca2016identifying} this tree to extract clusters that remain stable across a range of values of $\lambda$. This motivates our persistent clustering implementation in \Cref{s:persistent-clustering}.

\subsection{Decision-Theoretic Framework}
\label{ss:decision-theoretic-framework}

We focus on finding the sub-partition of data $\cX_n$ associated with the connected components of $S_\lambda$. We let $\psilam: \Den \mapsto \ClustSpace[\cX_n]$ be the \textit{level-$\lambda$ clustering function}, by which we mean that $\psilam(f)$ returns the sub-partition $\bC$ of $\cX_n$ associated with the level-$\lambda$ connected components of $f$.

We start by choosing a Bayesian model $M$ for the unknown density $f$. Examples of $M$ include not only kernel mixture models but also Bayesian nonparametric approaches that do not involve a latent clustering structure, such as Polya trees \citep{lavine1992some, ma2017adaptive} and logistic Gaussian processes \citep{lenk1991towards, tokdar2007towards}. Under $M$, we obtain a posterior distribution $P_M(f|\cX_n)$ for the unknown density of the data. This also induces a posterior on the $\lambda$ level set of $f$. Based on this posterior, we define $\widehat{\psi}_{\lambda,M}$ as an estimator of $\psi_{\lambda}(f_0)$.

Let $D\{\psilam(f), \bC\}$ denote a loss function measuring the quality of sub-partition $\bC$ relative to the ground truth $\psilam(f)$. The Bayes estimator \cite[e.g.][Section 4.4.1]{berger2013statistical} 
of the sub-partition then corresponds to the value that minimizes the expectation of the loss under the posterior of $f$: 
\begin{equation}
	\widehat{\psi}_{\lambda, M}(\cX_n) = \argmin_{\bC \in \ClustSpace[\cX_n]} \Epost[D\{\psilam(f), \bC\}]. \label{eq:psihat}
\end{equation}
In practice, we use a Monte Carlo approximation based on samples $f^{(1)}, \dots, f^{(S)}$ from $P_M(f|\cX_n)$: $\widehat{\psi}_{\lambda, M}(\cX_n) \approx \argmin_{\bC \in \ClustSpace[\cX_n]} \sum_{s = 1}^S D\{\psilam(f^{(s)}), \bC\}$.

Three major roadblocks stand in the way of calculating this estimator. First, evaluating $\psilam(f^{(s)})$
is problematic, as identifying connected components of level sets of $f^{(s)}$ is extremely costly if the data are in even a moderately high-dimensional space. Instead, we will use a surrogate clustering function $\psitilde_\lambda$, which approximates the true clustering function and is more tractable. We will discuss this in more detail in Section \ref{ss:surrogate-clustering-function}.

The second roadblock is the fact that we must design an appropriate loss function $D$ to use in estimating the level set clustering. Since these objects are sub-partitions, usual loss functions on partitions that are employed in model-based clustering will be inappropriate. We will discuss the issue further and introduce an appropriate loss in Section \ref{s:ia-loss-function}.

Finally, optimizing the risk function over the space of all sub-partitions, as shown in Equation \eqref{eq:psihat}, will be computationally intractable, since the
number of elements in $\ClustSpace[\cX_n]$ is immense. However, leveraging on the current Bayesian clustering literature, we adapt the discrete optimization algorithm of \cite{dahl2022search} to handle our case of sub-partitions.

Having addressed these issues, we refer to the resulting class $\{\widehat{\psi}_{\lambda, M}\}$ as Bayesian level set (\texttt{BALLET}) estimators. In Section \ref{s:theory} we show that, under suitable models $M$ for density $f$, the \texttt{BALLET} estimator $\widehat{\psi}_{\lambda, M}$ consistently estimates the level-$\lambda$ clustering based on $f_0$. 

\subsection{Surrogate Clustering Function}
\label{ss:surrogate-clustering-function}

Computing the clustering function $\psilam(f)$ based on the level set $\Slamf = \{x \in \cX : f(x) \geq \lambda \}$ involves two steps. The first identifies the subset of observations $\Activef = \Slamf \cap \cX_n$, called the active points for $f$, and the second separates the active points according to the topologically connected components of $\Slamf$. The first step is no more difficult than evaluating $f$ at each of the $n$ observations and checking whether $f(x_i) \geq  \lambda$ for $i \in \{1, \dots, n\}$. However,  identifying the connected components of $\Slamf$ can be computationally intractable unless $\cX$ is one-dimensional. This is a familiar challenge in algorithmic level set clustering \citep{campello2020density}. 

A common approach with theoretical support \citep{devroye1980detection, rinaldo2010generalized,sriperumbudur2012consistency} is to approximate the level set $\Slamf$ with a tube of diameter $\delta > 0$ around the active points: $\tube(A) = \cup_{x_i \in A} B(x_i,\delta/2)$, where $B(x, \delta/2)$ is the open ball of radius $\delta/2$ around $x$ and $A = \Activef$ denotes the active points. Calculating the connected components of $\tube(A)$ is straightforward. If we define $G_{\delta}(A)$ as the \emph{$\delta$-neighborhood graph} with vertices $A$ and edges $\{(x,x')\in A\times A \; |\,  \|x-x'\| < \delta \}$, then two points $x, x' \in \cX$ lie in the same connected component of $\tube(A)$ if and only if there exist active points $x_i, x_j \in A$ such that $\|x - x_i\| < \frac{\delta}{2}$, $\|x' - x_j\| < \frac{\delta}{2}$ and $x_i$, $x_j$ are connected by a path in $G_{\delta}(A)$. The problem is further simplified since we only need to focus on the active points: Any $x_i, x_j \in A$ lies in the same connected component of $\tube(A)$ if and only if $x_i, x_j$ are connected by a path in $G_{\delta}(A)$. \Cref{lem:cciscc} in \Cref{ss:prove-level-set-clustering} provides more details.

Hence, we define a computationally-tractable surrogate clustering function 
\begin{equation}
	\label{eq:levelset-clust-function}
	\psiclust(f) = \CC\{G_\delta(\Activef)\}
\end{equation}
where the dependence on the density $f$ and level $\lambda$ enter through the active points $\Activef = \{x  \in \cX_n | f(x) \geq \lambda \}$, and $\CC$ is the function that maps graphs to the \textit{graph-theoretic} connected components of their vertices \citep[][Chapter 3]{sanjoy2008algorithms}.  %

In \Cref{s:comparison-to-dbscan}, we discuss how the \texttt{DBSCAN} clustering algorithm \citep{ester1996density,schubert2017dbscan} essentially corresponds to evaluating $\psiclust(\hat{f})$ for a certain density estimator $\hat{f}$ of $f_0$.  In fact, for a general $f$, the computational complexity of evaluating $\psiclust(f)$ is comparable to that of \texttt{DBSCAN} with the additional cost of evaluating $f$ at the data points $\cX_n$.

Compared to the clustering point estimate $\psiclust(\hat{f})$ obtained by inserting a density estimator $\hat{f}$ based on $\cX_n$, the main motivation behind our Bayesian clustering machinery of \cref{eq:psihat} is to account for the variability of $\psiclust(f)$ in the posterior distribution of $f$. We expect our Bayesian point estimate of \cref{eq:psihat} to be more reliable than $\psiclust(\hat{f})$ in difficult level set clustering problems involving substantial uncertainty in density estimation.

Our clustering $\psiclust(f)$ depends on the choice of the parameter $\delta > 0$. 
 For some $k \in \nat$, $\gamma \in [0,1)$, and an estimate $\hat{f}$ of $f_0$, we suggest the data adaptive value of
 \begin{equation}
 	\label{eq:data-adaptive-delta}
 	 \hat{\delta}=q_{1-\gamma}\{\delta_k(x_i): x_i \in \Activef[\hat{f}] \},
 \end{equation}
 the $1-\gamma$ quantile of the $k$-nearest neighbor distance $\delta_k(x)$ among the estimated active data points $\Activef[\hat{f}]$, with our default choice of $\gamma=0.01$. The intuition here is that the value $\hat{\delta}$ will be smaller than the required distance between disjoint level $\lambda$ clusters of $f_0$ if the $k$-closest data points to most ($> 99\%$) of the active points are known to belong to the same cluster as the initial point. The choice of $k$ here also needs to be large enough to ensure that the level $\lambda$ cluster of $f_0$ is not disconnected by the skeleton graph $G_{\hat{\delta}}(\Activef[\hat{f}])$. Noting that the performance of \texttt{BALLET} clustering was not sensitive to our choice of $k$ (e.g. \Cref{fig:ballet-dbscan-compare-param-sensitivity}), we use the default value of $k=\lceil \log n \rceil$ in our analysis. In \Cref{sss:acurracy-of-level-set-estimator}, we theoretically study the accuracy of approximating $\psi_{\lambda}(f_0)$ by $\psiclust(\hat{f})$.  For suitably large $C > 0$, as long as $k \in [C \log n, n/C]$ and $\gamma < 1$, using $\delta=\hat{\delta}$ from \eqref{eq:data-adaptive-delta} will lead to consistent \texttt{BALLET} clustering with high probability (\Cref{lem:sufficiency-of-data-adaptive-estimator} and \Cref{lem:finite-sample-bound-for-assumption2}).

\subsection{Loss Function for Comparing Sub-partitions}
\label{s:ia-loss-function}

In order for \eqref{eq:psihat} to have the interpretation of a posterior Fréchet mean, $D: \ClustSpace[\cX_n] \times \ClustSpace[\cX_n] \to [0,\infty]$ must be chosen to be a metric on the space of sub-partitions $\ClustSpace[\cX_n]$. While any standard loss function on partitions (see \cite{dahl2022search}) has a natural  extension to sub-partitions, this does not result in a  metric on $\ClustSpace[\cX_n]$. For example, consider the popular Binder's loss $L_{\text{Binder}}$ which is a metric on the space of partitions (\cite{binder1978bayesian,wade2018bayesian}). Given a subset $C \subseteq \cX_n$ and its complement $C' = \cX_n \setminus C$, what should be the resulting loss between the sub-partitions $\bC = \{C\}$ and $\bC' = \{C'\}$? While $\bC$ and $\bC'$ are incredibly different when considered as level set clustering, the induced partitions are the same resulting in the loss $L_{\text{Binder}}(\{C, \cX_n \setminus C\}, \{C', \cX_n \setminus C'\})=0$.

Instead we now propose a modification of the Binder's loss, which will be a metric on the space of sub-partitions $\ClustSpace[\cX_n]$. Our Inactive/Active (IA) Binder's loss takes the form of Binder's loss for data points that are active in both partitions, with a penalty for points active in one partition and inactive in the other. We represent any sub-partition $\bC = \{C_1, \ldots, C_k\} \in \ClustSpace[\cX_n]$ with a length $n$ allocation vector $\bc = (c_1, \ldots, c_n) \in \{0,1\ldots, k\}^n$ such that $c_i = h$  if $x_i \in C_h$ and $c_i = 0$ if $x_i \in \cX_n \setminus \cup_{h=1}^k C_h$. Given two partitions $\bC, \bC'$ with active sets $A, A' \subseteq \cX_n$ and allocation vectors $\bc, \bc'$, the loss between them is defined as
\begin{align}
	\label{eq:ia-binder-loss}
	&\Lia(\bC, \bC') \nonumber\\
	&=  (n - 1)\big(
	m_{ai}\left|A \cap I' \right| + m_{ia} \left|I \cap A' \right|\big) + \sum_{\substack{ 1 \leq i < j \leq n \\ x_i, x_j \in A \cap A'}} a\1{(c_i = c_j; c'_i \neq c'_j)} + b\1{(c_i \neq c_j; c'_i = c'_j)},
\end{align}
where $I =  \cX_n \setminus A$ and $I' = \cX_n \setminus A'$ denote the inactive sets of $\bC$ and $\bC'$.
The loss is a well-defined function of $\bC$ and $\bC'$ since the right-hand side is invariant to any permutation of the active labels in $\bc$ and $\bc'$.
The summation term is the Binder's loss with parameters $a, b > 0$  restricted to points active in both sub-partitions. The first two terms, based on parameters $m_{ai}, m_{ia} > 0$, correspond to a loss of $(n-1)m_{ai}$ and $(n-1)m_{ia}$ incurred by points that are
active in $\bC$ but inactive in $\bC'$ and vice versa. We focus mainly on the setting where $a=b$ and $m_{ai} = m_{ai} = m \geq a/2$ with our default choice of $a=b=1$ and $m=1/2$ used throughout our analysis. Under these conditions \Cref{thm:IABender-is-a-metric} in \Cref{ss:level-set-theory} shows that $\Lia$ is a metric on $\ClustSpace[\cX_n]$. Our starting point is \Cref{rem:sum-form-of-penalty}, which provides an alternate representation of this loss.

Given \textit{any} distribution on $\bC$, we can compute the Bayes risk for an estimate $\bC'$ as the posterior expectation of the IA-Binder's loss:
\begin{align}
	\label{eq:ia-binder-risk}
	R_{\text{IA-Binder}}(\bC') =& E\{L_{\text{IA-Binder}}(\bC, \bC')\} \notag \\
	=& (n - 1)\bigg\{m_{ai}\sum_{i = 1}^n \Pr(x_i \in A) \1{(x_i \in I')} + m_{ia}\sum_{i = 1}^n \Pr(x_i \in I)\1{(x_i \in A')}\bigg\} \quad + \notag\\
	& \quad \quad \sum_{1 \leq i < j \leq n} \1{(x_i \in A', x_j \in A')} \big\{a\Pr(x_i \in A, x_j \in A, c_i = c_j)\1{(c'_i \neq c'_j)} \quad + \notag \\ 
	& \quad \quad \quad \quad \quad \quad \quad b\Pr(x_i \in A, x_j \in A, c_i \neq c_j)\1{(c'_i = c'_j)}\big\}.
\end{align}
The probabilities are computed based on the random clustering 
$\bC = \psiclust(f)$, where $f$ is drawn from the posterior $P_M( \cdot | \cX_n)$. 
Our \texttt{BALLET} estimator for level-$\lambda$ clustering is then
\begin{align}
	\label{eq:ballet-estimator}
	\BALLET &= \argmin_{\bC' \in \ClustSpace[\cX_n]} E_{f \sim P_M(\cdot|\cX_n)}[L_{\text{IA-Binder}}\{\psiclust(f), \bC'\}]  \\
	&\approx \argmin_{\bC' \in \ClustSpace[\cX_n]} \sum_{s = 1}^S L_{\text{IA-Binder}}\{\psiclust(f^{(s)}), \bC'\}, \nonumber
\end{align}
where the dependence of the estimator on the data is mediated by the posterior distribution $P_M(\cdot| \cX_n)$ from which we generate samples $f^{(1)}, \dots, f^{(S)}$. We precompute Monte Carlo estimates of the probabilities appearing in equation \eqref{eq:ia-binder-risk}. Then, estimating $\BALLET$ is based on optimizing the objective function. We rely on a modification of the algorithm of \cite{dahl2022search} described in \Cref{s:ballet-search-algorithm}.

When the posterior uncertainty in $f$ is small, one may use a heuristic \texttt{BALLET} \emph{plugin} estimate $\hat{\bC} = \psiclust(\hat{f})$ that avoids the expensive optimization in \eqref{eq:ballet-estimator} by directly computing the level set clusters of the posterior mean density $\hat{f}(x) \approx \frac{1}{S}\sum_{s = 1}^S f^{(s)}(x)$. While in many instances we found the \texttt{BALLET} plugin estimate to have similar performance to our \texttt{BALLET} estimator \eqref{eq:ballet-estimator} (e.g.~\Cref{tab:synthdata-tabular-results,tab:edsgc-edcci-coverage-full,tab:edsgc-abell-coverage-full}), the two estimates can be different (\Cref{fig:ballet_vs_plugin}). As a general principle, we always recommend the use of a Bayes estimator that directly targets the quantity of interest over a two-stage plugin approach  (see \Cref{ss:plugin-vs-decision-theoretic}).

\section{Credible Bounds}
\label{s:credible-bounds}

In addition to a clustering point estimate, we characterize the uncertainty. One popular strategy in Bayesian clustering is to examine the $n \times n$ posterior similarity matrix, whose $i,j$th entry contains the co-clustering probability $\Pr(c_i = c_j | \cX_n)$. Such summaries are complicated in our case by the fact that the entry $i$ and/or $j$ may be inactive. An appealing alternative is to adapt the method of \cite{wade2018bayesian} to compute credible balls for level set sub-partitions. 

To find a credible ball around the point estimate $\widehat{\bC}$ with credible level $1 - \alpha$ for $\alpha \in [0,1]$, we first find  
\begin{equation}
	\epsilon^* \doteq \argmin_{\epsilon > 0} P_M\{\psiclust(f) \in B_\epsilon(\widehat{\bC}) | \cX_n\} \geq 1 - \alpha,
	\label{eq:credible-ball-radius}
\end{equation}
the smallest radius $\epsilon=\epsilon^*$ such that the ball $B_\epsilon(\widehat{\bC}) = \{\bC' \in \ClustSpace[\cX_n] : \Lia(\widehat{\bC}, \bC') \leq \epsilon\}$ of radius $\epsilon$ around $\widehat{\bC}$ has a posterior coverage probability of at least $1-\alpha$. Then, the posterior distribution will assign a posterior probability close to $1 - \alpha$ to the event that $B_{\epsilon^*}(\widehat{\bC})$ contains $\bC = \psiclust(f)$, the unknown level set sub-partition.

The $1 - \alpha$ coverage credible ball $B_{\epsilon^*}(\widehat{\bC})$ typically contains a large number of possible sub-partitions. To summarize 
credible balls in the space of data partitions, \cite{wade2018bayesian} recommend identifying \textit{vertical} and \textit{horizontal} bounds based on the partial ordering of partitions associated with a Hasse diagram. The vertical upper bounds were defined as the partitions in $B_{\epsilon^*}(\widehat{\bC})$ that contained the smallest number of sets; vertical lower bounds, accordingly, were the partitions in $B_{\epsilon^*}(\widehat{\bC})$ that contained the largest number of sets; horizontal bounds were the partitions in $B_{\epsilon^*}(\widehat{\bC})$ that were the \textit{farthest} from $\widehat{\bC}$ at distance $\Lia$.

In our setting, in addition to similarity of sub-partitions in terms of their clustering structure, we must also compare inclusion or exclusion of observations from the active set. Uncertainty in the clustering structure will be partly attributable to uncertainty in which points are active. Fortunately, the space of sub-partitions is a lattice with an associated Hasse diagram (\SMref{s:subpartition-lattice}). We can move \textit{down} the sub-partition lattice by splitting clusters or removing items from the active set, while we can move \textit{up} the lattice of sub-partitions by merging clusters or absorbing noise points into the active set.

We propose the following computationally efficient algorithm for computing upper and lower bounds for the credible ball. Suppose we know our credible ball radius $\epsilon^*$ from \Cref{eq:credible-ball-radius} needed to achieve the desired coverage. We seek our upper bound by starting at the point estimate and greedily adding to the active set, one at a time, the item from the inactive set that has the greatest posterior probability of being active and reexamining the resulting connected components; this continues until we find a sub-partition that is farther than $\epsilon^*$ from the point estimate. To find a lower bound, we perform the analogous greedy removal process. The resulting bounds from applying this algorithm can be seen in  \Cref{fig:tsne-ballet-bounds,fig:synthdata-ballet-bounds,fig:edsgc-ballet-bounds,fig:tsne-ballet-bounds-lvls}.

\section{Consistency theory}
\label{s:theory}

In \Cref{ss:general-dclust-theory} we develop a general consistency theorem for Bayesian density-based clustering under three intuitive assumptions. Next in \Cref{ss:level-set-theory}, we carefully apply this result to our \texttt{BALLET} estimator $\BALLET$ from \eqref{eq:ballet-estimator} and derive mild conditions under which our method will be consistent. In the process, we provide theoretical guarantees on the accuracy of our surrogate clustering function from  \Cref{ss:surrogate-clustering-function}, indicating the choices of the parameter $\delta$ that lead to consistent estimation of level-$\lambda$ clusters. Indeed, our data adaptive choice of $\hat{\delta}$ in \eqref{eq:data-adaptive-delta} will be seen to satisfy this condition under suitable assumptions.

\subsection{A general consistency result for Bayesian density-based clustering}
\label{ss:general-dclust-theory}

In this section we show asymptotic consistency of a generic Bayesian density-based clustering estimator of the form
\begin{align}
	\label{eq:density-based-estimator}
	\cbayes &= \argmin_{\bC \in \ClustSpace[\cX_n]} E_{f \sim P_M(\cdot|\cX_n)}[D\{\psitilde(f), \bC\}], 
\end{align}
where $D$ is a loss on the space $\ClustSpace[\cX_n]$ of data sub-partitions and $\psitilde: \Den \to \ClustSpace[\cX_n]$ is an easy-to-compute surrogate 
that approximates the target density-based clustering function $\psi: \Den \to \ClustSpace[\cX_n]$. Similar to previous sections, we omit notation for the implicit dependence of $D$, $\psitilde$, and $\psi$ on $\cX_n$ and $n$. 
We will assume that the loss $D$ is a metric that takes values in $[0,1]$. We state our consistency result in terms of convergence in probability. Recall that a sequence of random variables $\{X_n\}_{n \geq 1}$ converges to zero in probability, denoted by $X_n \pconv 0$ as $n \to \infty$, if $\lim_{n \to \infty} \Pr(|X_n| > \epsilon) = 0$ for every fixed $\epsilon > 0$. 

Under some mild assumptions stated later, the following theorem establishes consistency of the estimator \eqref{eq:density-based-estimator}. In particular, when data $\cX_n$ are generated independently from $f_0$, it states that the Bayesian density-based clustering estimator defined in \eqref{eq:density-based-estimator} will be close to the target clustering $\psi(f_0)$ in terms of loss $D$ for large values of $n$. 

\begin{theorem}(Consistency of density-based clustering) Suppose that \Cref{ass:dist-bounded,ass:sup-norm-concentration,ass:continuity-at-f0} stated below hold, and $\cX_n = \{x_1, \ldots, x_n \} \iid f_0$. Then 
	$$
	0 \leq D\{\cbayes, \psi(f_0)\} \leq 2\tau_{1}(\cX_n) + 2\tau_{2}(\cX_n) \pconv 0 \qquad \text{as $n \to \infty$,}
	$$ 
	where $\cbayes$ is the density-based clustering estimate \eqref{eq:density-based-estimator} and the error terms $\tau_{1}$ and $\tau_{2}$ are as defined in  \Cref{ass:sup-norm-concentration,ass:continuity-at-f0}.%
	\label{thm:consistency}
\end{theorem}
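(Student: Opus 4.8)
The plan is to obtain the displayed inequality from two soft facts — that $\cbayes$ minimizes the posterior expected loss and that $D$, being a metric, is symmetric and obeys the triangle inequality — and then to conclude convergence by squeezing the nonnegative quantity $D(\cbayes, \psi(f_0))$ below the null sequence $2\delta_1(\cX_n)+2\delta_2(\cX_n)$, whose vanishing in probability is built into \Cref{ass:sup-norm-concentration,ass:continuity-at-f0}. A few preliminaries: $\ClustSpace[\cX_n]$ is a finite set, so the $\argmin$ in \eqref{eq:density-based-estimator} is attained; fix any minimizer $\cbayes$. The target $\psi(f_0)$ is a sub-partition of $\cX_n$, hence an admissible competitor in \eqref{eq:density-based-estimator}. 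All expectations $E_{f\sim P_M(\cdot|\cX_n)}[\cdot]$ are taken conditionally on $\cX_n$, and one should first record that $f\mapsto D(\psitilde(f),\bC)$ is measurable for each fixed $\bC$, so these expectations are well defined.

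\emph{Step 1: optimality plus triangle inequality.} Set $\rho_n := E_{f\sim P_M(\cdot|\cX_n)}[D(\psitilde(f),\psi(f_0))]$. By optimality of $\cbayes$ over $\ClustSpace[\cX_n]$ and admissibility of $\psi(f_0)$, together with symmetry of $D$,
\[
E_{f\sim P_M(\cdot|\cX_n)}\!\big[D(\cbayes,\psitilde(f))\big]\;=\;E_{f\sim P_M(\cdot|\cX_n)}\!\big[D(\psitilde(f),\cbayes)\big]\;\le\;\rho_n .
\]
Since $D$ is a metric (\Cref{ass:dist-bounded}), $D(\cbayes,\psi(f_0))\le D(\cbayes,\psitilde(f))+D(\psitilde(f),\psi(f_0))$ for every $f$; the left side is constant in $f$, so taking posterior expectations of both sides gives $D(\cbayes,\psi(f_0))\le 2\rho_n$.

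\emph{Step 2: bounding $\rho_n$ and concluding.} Let $\mathcal{N}_n\subseteq\Den$ be the sup-norm neighbourhood of $f_0$ featuring in \Cref{ass:continuity-at-f0}, on which $D(\psitilde(f),\psi(f_0))\le \delta_2(\cX_n)$, and let $\delta_1(\cX_n)=P_M(f\notin\mathcal{N}_n\mid\cX_n)$ be the posterior tail probability controlled by \Cref{ass:sup-norm-concentration}. Splitting the expectation over $\{f\in\mathcal{N}_n\}$ and its complement and using $D\le 1$,
\[
\rho_n \;=\; E_{f}\!\big[D(\psitilde(f),\psi(f_0))\1{f\in\mathcal{N}_n}\big] + E_{f}\!\big[D(\psitilde(f),\psi(f_0))\1{f\notin\mathcal{N}_n}\big] \;\le\; \delta_2(\cX_n)+\delta_1(\cX_n).
\]
Combining with Step 1 yields $0\le D(\cbayes,\psi(f_0))\le 2\delta_1(\cX_n)+2\delta_2(\cX_n)$. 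Finally, \Cref{ass:sup-norm-concentration,ass:continuity-at-f0} give $\delta_1(\cX_n)\pconv 0$ and $\delta_2(\cX_n)\pconv 0$ under $\cX_n\iid f_0$, so the upper bound tends to $0$ in probability; a nonnegative random variable dominated by a sequence converging to $0$ in probability also converges to $0$ in probability, which finishes the proof.

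\emph{Anticipated obstacle.} For this abstract statement the argument is essentially routine, and the only points needing care are that $\psi(f_0)\in\ClustSpace[\cX_n]$ so that it can serve as a competitor in \eqref{eq:density-based-estimator}, the measurability of the integrands, and the (harmless) factor-of-two loss from the triangle-inequality bound. The genuine work lies not here but in verifying the hypotheses — in particular \Cref{ass:continuity-at-f0} — for the concrete surrogate $\psiclust$ of \eqref{eq:levelset-clust-function}: one must show the $\delta$-neighbourhood graph on the active points reconstructs the connected components of the $\lambda$-level set of $f_0$ whenever $f$ is sup-norm close to $f_0$, which is exactly where results such as \Cref{lem:level-set-containment}, regularity assumptions on $f_0$ and its level set, and the (implicitly $n$-dependent) scaling of $\delta$ enter.
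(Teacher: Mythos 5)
Your proposal is correct and follows essentially the same argument as the paper: triangle inequality plus symmetry of $D$ and optimality of $\cbayes$ give the factor-of-two bound by $E_{f\sim P_M(\cdot|\cX_n)}[D(\psitilde(f),\psi(f_0))]$, and splitting that expectation over the sup-norm neighbourhood of $f_0$ (using $D\le 1$) yields $\delta_1(\cX_n)+\delta_2(\cX_n)$. The extra remarks on measurability, attainment of the argmin, and admissibility of $\psi(f_0)$ are harmless additions the paper leaves implicit.
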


We now discuss the assumptions underlying \Cref{thm:consistency}. The proof of \Cref{thm:consistency}, provided in \Cref{ss:proof-of-consistency}, captures the intuition that as long as the posterior distribution of $f$ is concentrated around $f_0$ in terms of a metric $\rho$ on $\Den$ (\Cref{ass:sup-norm-concentration}) that can guarantee that the two clusterings $\psitilde(f)$ and $\psi(f_0)$ are close (\Cref{ass:continuity-at-f0}), then our Bayesian density-based clustering estimator $\cbayes$ will also be close to $\psi(f_0)$ by using the triangle inequality for $D$ (\Cref{ass:dist-bounded}).

\begin{assumption}
	Suppose that $D:  \ClustSpace[\cX_n] \times  \ClustSpace[\cX_n] \to [0,1]$ is a metric. 
	\label{ass:dist-bounded}
\end{assumption}

Next, we assume that the Bayesian model $M$ for the unknown density $f$ is such that its posterior distribution $P_M(\cdot|\cX_n)$, under samples $\cX_n = \{ x_1, \ldots, x_n \}$ drawn independently from $f_0$,  
 contracts at rate $\epsilon_n$ to $f_0$ in some metric $\rho$ on the space of densities $\Den$. 

\begin{assumption}[Posterior contraction] If $\cX_n = \{ x_1, \ldots, x_n \}$ are drawn independently from $f_0$, then there is a metric $\rho$ on $\Den$ and there is a non-negative sequence of numbers $\{\epsilon_n\}_{n \geq 1}$ converging to zero such that 
	$$
	\tau_{1}(\cX_n) \doteq P_M\left(f : \rho(f,f_0) \geq \epsilon_n K_n \big| \cX_n\right) \pconv 0 \text{ as } n \to \infty,
	$$
	 for every non-negative sequence $\{K_n\}_{n \in \nat}$ that diverges to infinity. 
	\label{ass:sup-norm-concentration}
\end{assumption}

\begin{assumption} There is a non-negative sequence $\{K_n\}_{n \in \nat}$ that diverges to infinity such that $\tau_{2}(\cX_n) \doteq \sup_{f \in \Den : \rho(f,f_0) \leq K_n \epsilon_n} D\{\psitilde(f), \psi(f_0)\} \pconv 0$ as $n \to \infty$, where $\rho$ and $\epsilon_n$ are as given in \Cref{ass:sup-norm-concentration}.
	\label{ass:continuity-at-f0}
\end{assumption}	

 \Cref{ass:continuity-at-f0,ass:sup-norm-concentration} are related in that we need a common sequence $\{(\epsilon_n, K_n)\}_{n \geq 1}$ and the same metric $\rho$ on $\Den$ such that both \Cref{ass:continuity-at-f0,ass:sup-norm-concentration} hold. 
Standard posterior contraction results \cite[e.g.][Chapter 9]{ghosal2017fundamentals} can establish the condition in \Cref{ass:sup-norm-concentration} for various models $M$ and suitable rates $\epsilon_n \to 0$ when $\rho$ is the Hellinger or total-variation metric on $\Den$. However, here one may need contraction in a stronger metric $\rho$ on $\Den$ to ensure continuity of the clustering functional $\psi: \Den \to \ClustSpace[\cX_n]$ to guarantee \Cref{ass:continuity-at-f0} even when  $\psitilde=\psi$. For example, for our application to level set clustering we will use the $L^\infty$ metric $\rho(f,g) = \|f-g\|_\infty \doteq \sup_{x \in \cX} |f(x) - g(x)|$ in \Cref{ss:level-set-theory}. Similarly, we expect to use a metric $\rho$ that captures uniform convergence of both the density $f$ and its derivatives to satisfy \Cref{ass:continuity-at-f0} when $\psi$ describes modal clustering \cite[see the introduction of][]{shen2017posterior}. Thus establishing posterior contraction results in stronger metrics $\rho$ than the standard Hellinger distance is a promising active area of research \citep{gine2011rates,  
castillo2014bayesian,castillo2017polya,naulet2022adaptive,shen2017posterior,li2021posterior} that can help establish consistency of Bayesian density-based clustering.

\subsection{Application to level set clustering}
\label{ss:level-set-theory}

Note that \eqref{eq:ballet-estimator} represents a special case of \eqref{eq:density-based-estimator}, when $\psitilde = \psiclust$ is the surrogate clustering function defined in \eqref{eq:levelset-clust-function}, $\psi = \psi_{\lambda}$ is the level-$\lambda$ clustering function defined in \Cref{ss:decision-theoretic-framework}, and $D = \binom{n}{2}^{-1}L_{\text{IA-Binder}}$ is a rescaled version of the Inactive-Active Binder loss \eqref{eq:ia-binder-loss}. We will fix this choice of $\psi, \psitilde$ and $D$ throughout this section. We show that \Cref{ass:continuity-at-f0,ass:dist-bounded,ass:sup-norm-concentration} are satisfied for suitable choices of the parameter $\delta > 0$ and suitable conditions on the density $f_0$, level $\lambda > 0$, and model $M$. 
Following the existing level set clustering theory \citep[e.g.][]{sriperumbudur2012consistency,rinaldo2010generalized,jiang2017density}, in this section we will use the $L^\infty$ metric $\rho(f,g) = \|f-g\|_\infty = \sup_{x \in \cX} |f(x) - g(x)|$ on $\Den$ in \Cref{ass:continuity-at-f0,ass:sup-norm-concentration}. While posterior consistency of the density $f$ in the $L^\infty$ metric is  a strong requirement, \Cref{rem:extend-contraction} briefly discusses how this requirement might be weakened.

\subsubsection{Properties of IA-Binder's loss}

To establish the validity of \Cref{ass:dist-bounded} we study the properties of our Inactive-Active Binder loss \eqref{eq:ia-binder-loss}. The following theorem proved in \Cref{ss:metric-properties} shows that \Cref{ass:dist-bounded} is satisfied for suitable choices of constants in our Inactive-Active Binder loss \eqref{eq:ia-binder-loss}.

\begin{theorem}  Suppose $0 < a = b \leq 1$, $m = m_{ia} = m_{ai} \leq 1$, and $a \leq 2m$. Then $D = \binom{n}{2}^{-1} \Lia$ is a metric on $\ClustSpace[\cX_n]$ that is bounded above by 1.
	\label{thm:IABender-is-a-metric}
\end{theorem}

The following remark, which will be useful to interpret the conclusion of \Cref{thm:consistency}, describes when the distance $D$ between two sub-partitions $\bC_1, \bC_2 \in \ClustSpace[\cX_n]$ will be small.

\begin{remark} We say that a pair of distinct points $x_i, x_j \in \cX_n$ is clustered differently by $\bC_1$ and $\bC_2$ if the activity status of either $x_i$ or $x_j$ is different across  $\bC_1$ and $\bC_2$, or else both $x_i$ and $x_j$ are active in both $\bC_1$ and $\bC_2$ but the two points belong to the same cluster in $\bC_1$ (or $\bC_2$) but to different clusters in $\bC_2$ (or $\bC_1$). Importantly, $\Lia$ can be expressed as a sum of non-negative penalties over distinct pairs of points from $\cX_n$
\begin{equation*}
    \Lia(\bC_1,\bC_2) =  \sum_{1 \leq i < j \leq n}^n \phi_{i,j},
\end{equation*}
where the penalty $\phi_{i,j} \in \{0,a,m,2m\}$ takes a positive value of at least $\min(a,m)$ when the pair $x_i,x_j$ is  clustered differently by $\bC_1$ and $\bC_2$. (See \eqref{eq:sum-representation} in \Cref{ss:metric-properties} for exact details.) Thus for the choice of $a, m \in [1/2,1]$ and any $\epsilon \in (0,1/2)$, if the rescaled loss $D(\bC_1,\bC_2) = \binom{n}{2}^{-1} \Lia(\bC_1,\bC_2)$ is less than $\epsilon$ then at most $2\epsilon$ fraction of all pairs of points from $\cX_n$ will be clustered differently by $\bC_1$ and $\bC_2$. Conversely, if at most $\epsilon$ fraction of all pairs of points from $\cX_n$ are clustered differently by $\bC_1$ and $\bC_2$ then $D(\bC_1,\bC_2) < 2\epsilon$.
\label{rem:sum-form-of-penalty}
\end{remark}

\subsubsection{Accuracy of our level-set clustering surrogate}
\label{sss:acurracy-of-level-set-estimator}

We now examine \Cref{ass:continuity-at-f0} here, while \Cref{ass:sup-norm-concentration} will be examined in \Cref{sss:models-with-linfty-contraction}. 

The following result, proved in \Cref{ss:prove-level-set-clustering}, demonstrates that \Cref{ass:continuity-at-f0}  will be satisfied as long as the density $f_0$ satisfies some mild conditions and $\gamma = K_n \epsilon_n \to 0$.  Generally speaking, we require that $f_0: \R^d \to [0,\infty)$ is continuous and vanishing in the tails (\Cref{ass:uniform-continuity}), is not flat around the level $\lambda$ (\Cref{ass:mass-decay}), and has a level-$\lambda$ clustering that is stable with respect to small perturbations in $\lambda$ (\Cref{ass:robust-cluster}). Under these conditions, with high-probability our surrogate clustering estimator $\psiclust[\delta](f)$ from \Cref{ss:surrogate-clustering-function} will be close to the true clustering $\psi_\lambda(f_0)$ in terms of our distance $D$ as long as $f$ is close to $f_0$ in the $L^\infty$ metric and $\delta$ lies in a suitable range.

\begin{theorem} Suppose $\cX = \R^d$ and the density $f_0$ and the level $\lambda > 0$ satisfy \Cref{ass:uniform-continuity,ass:mass-decay,ass:robust-cluster} in \Cref{ss:prove-level-set-clustering}. Suppose further that $f_0$ is 
$\alpha$-H\"older continuous for some $\alpha \in (0,1]$, the dataset $\cX_n = \{x_1, \ldots, x_n\}$ is drawn independently from $f_0$ with $n \geq 16$, and $D$ is the re-scaled loss in \Cref{thm:IABender-is-a-metric}.  Then, depending on $f_0$,  there are finite constants $C_0, \bar{\delta}, \bar{\gamma} > 0$ such that%
	$$
	\sup_{f: \|f-f_0\|_\infty \leq \gamma } D\{\psiclust(f), \psi_\lambda(f_0)\} \leq C_0 \bigg\{\max(\gamma, \delta^{\alpha}) + \sqrt{\frac{\ln n}{n}} \bigg\}
	$$
	holds uniformly over all $\delta \in [\rnld, \bar{\delta})$ and $\gamma \in (0,\bar{\gamma})$ with probability at least $1-\frac{1+n}{n^2}$. Here $\rnld \doteq \rnldval$ where $v_d$ is the volume of the unit Euclidean ball in $d$ dimensions.
	\label{lem:finite-sample-bound-for-assumption2}
\end{theorem}

The constraint $\delta \geq \rnld$ in \Cref{lem:finite-sample-bound-for-assumption2} ensures that, with high probability, every open ball  $B(x,\delta/2)$ contained in  $S_{\lambda}$ will also contain at least one data point $x_i \in \cX_n \cap B(x,\delta/2)$. This key result is used in \Cref{lem:level-set-containment} to show that the level set estimator $T_{\delta}(A_{f, \lambda})$ from \Cref{ss:surrogate-clustering-function} will be suitably close to $S_\lambda$ when $\|f-f_0\|_\infty$ and $\delta$ are small (and $\delta \geq \rnld$). The following lemma proved in \Cref{ss:analysis-of-data-adaptive-estimator} shows that our data adaptive choice of $\hat{\delta}$ in \eqref{eq:data-adaptive-delta} will satisfy conditions of \Cref{lem:finite-sample-bound-for-assumption2} with high probability if $\log n \ll k \ll n$ as $n \to \infty$.

\begin{lemma}
		\label{lem:sufficiency-of-data-adaptive-estimator}
		Suppose the assumptions of \Cref{lem:finite-sample-bound-for-assumption2} are satisfied and the density estimator $\hat{f}$ satisfies $\|\hat{f}-f\|_\infty \leq \lambda/2$. Then there is a finite constant $L > 0$ depending on $f_0$ and $\lambda$ such that if $k \in [L \ln n,  n/L]$ then $\hat{\delta} \in [\rnld, \bar{\delta})$ with probability at least $1-2e^{-\frac{1}{32}\sqrt{\frac{k}{d \ln n}}}$.
\end{lemma}

\subsubsection{Consistency of level set clustering}
\label{sss:models-with-linfty-contraction}

\Cref{ass:sup-norm-concentration} requires posterior contraction around $f_0$ in the $L^\infty$ norm. While such contraction results can be obtained when the model $M$ is based on a parametric family that contains $f_0$, the search for such results when $M$ is a  non-parametric model is currently an active area of research. For univariate density estimation on $\cX=[0,1]$, such contraction rates have been established for 
kernel mixture models, random histogram priors, P\'olya trees, Gaussian process and wavelet series priors on the log density
\citep{gine2011rates,  
castillo2014bayesian,castillo2017polya,naulet2022adaptive}. 
For multivariate density estimation on $\cX = [0,1]^d$, %
refer to 
\cite{li2021posterior} and references therein. %

Combining all the results in this section leads to the following corollary of \Cref{thm:consistency}.

\begin{corollary} Suppose $\cX=\R^d$, density $f_0 \in \Den$ and level $\lambda > 0$ satisfy \Cref{ass:uniform-continuity,ass:mass-decay,ass:robust-cluster} in \Cref{ss:prove-level-set-clustering}, and data $\cX_n = \{x_1, \ldots, x_n\}$ are drawn independently from $f_0$. Recall the \texttt{BALLET} estimator $\BALLET$ from \eqref{eq:ballet-estimator} based on:
	\begin{enumerate}
		\item the loss $\Lia$ with parameters $0 < a = b \leq 1$, $m = m_{ia} = m_{ai} \leq 1$, and $a \leq 2m$, 
		\item a model $M$ that satisfies \Cref{ass:sup-norm-concentration}, and
		\item a non-random $\delta \in [\rnldval, \bar{\delta})$ or the data adaptive choice of $\delta=\hat{\delta}$ from \eqref{eq:data-adaptive-delta} with $\gamma < 1$ and $\log n \ll k \ll n$ as $n \to \infty$,
	\end{enumerate}
	where $\bar{\delta}$ is a positive constant that depends on $f_0$ and $\lambda$. Then
\begin{equation*}
	\binom{n}{2}^{-1}\Lia\{\BALLET, \psi_\lambda(f_0)\} \pconv 0  \qquad \text{as } n \to \infty.
	\label{eq:re-scaled-binder-consistency}
\end{equation*}
\label{cor:ballet-consistency}
\end{corollary}

By \Cref{rem:sum-form-of-penalty}, the corollary implies that only a vanishingly small fraction of pairs of distinct points from $\cX_n$ will be clustered differently by our \texttt{BALLET} estimator $\BALLET$ and the associated true level set clustering $\psi_\lambda(f_0)$ as $n \to \infty$.

\begin{remark} \label{rem:extend-contraction}
    \Cref{ass:sup-norm-concentration} with $\rho(f,g) = \|f-g\|_{\infty}$ seems stronger than necessary to establish the consistency of our \texttt{BALLET} estimator \eqref{eq:ballet-estimator}, which depends on the model $M$ only through the distribution of the level set  $S_{\lambda, f} = \{x \in \R^d : f(x) \geq \lambda \}$ under the posterior draw $f \sim P_M(\cdot|\cX_n)$. One might thus hope to leverage existing posterior contraction results \citep{gayraud2005rates,gayraud2007consistency,li2021posterior} for level sets that show
    $$
    P_M\left[\tilde{\rho}(S_{\lambda, f},S_{\lambda, f_0}) > \epsilon \big| \cX_n\right] \pconv 0 \text{ as } n \to \infty, \text{ for each } \epsilon > 0,
    $$ 
    where $\tilde{\rho}(A,B) = \textrm{Leb}(A \Delta B)$ is typically the Lebesgue measure of the symmetric difference between (measurable) subsets $A, B \subseteq \cX$. Consistency of \texttt{BALLET} then essentially reduces to establishing a `continuity' result similar to \Cref{lem:finite-sample-bound-for-assumption2} that will bound the distance $D\{\tilde{\psi}_{\delta,\lambda}(f), \psi_{\lambda}(f_0)\}$ between clusterings whenever the distance $\tilde{\rho}(S_{\lambda, f}, S_{\lambda, f_0})$ between the corresponding level sets is small. This approach seems more feasible if $\tilde{\rho}$ can be taken to be a stronger metric like the Hausdorff metric \citep{li2021posterior,chen2016density}.
\end{remark}

\section{Illustrative Challenge Datasets}
\label{s:toy-data-analysis}
To highlight some of the appealing properties of the \texttt{BALLET} estimator, we analyze two illustrative clustering datasets: a simulated example of the classic two moon problem and an RNA sequencing dataset (\url{https://www.reneshbedre.com/blog/tsne.html}).

For each dataset, we model the observations as iid draws from density $f$ and $f$ as a draw from a Dirichlet process mixture of normal distributions with a multivariate normal-inverse Wishart base measure (\texttt{DPMM}). We generate samples $f^{(1)}, \dots, f^{(S)}$ from the posterior $f \,|\, \cX_n$ using the \texttt{dirichletprocess} package, available on \texttt{CRAN}.

We then use these posterior samples to compute \texttt{BALLET}  clustering point estimates. For the two-moon problem, we choose the target density level $\lambda$ at the 10th percentile of the estimated observation densities $\{\hat{f}(x_i):\, x_i \in \cX_n \}$ such that 90\% of the observations are assigned to clusters and 10\% are labeled as noise. For the RNA-seq data, we set $\lambda$ at the 15th percentile. These results are visualized in the right column of Figure \ref{fig:toy-challenge-dpmm}. 

In the center column of Figure \ref{fig:toy-challenge-dpmm} we visualize the clustering estimate obtained from a traditional mixture component allocation approach to Bayesian clustering and summarized using \cite{dahl2022search}. The same \texttt{DPMM} posterior was used for both sets of clustering estimates; the associated density point estimates, $\hat{f}$, are visualized in the left column.

Additional analyses of these and one other simulated data set are collected in \SMref{s:app-toy-challenge}. In particular, we show credible bounds (\Cref{fig:tsne-ballet-bounds}), highlight the robustness of \texttt{BALLET} to alternative models for $f$ (\Cref{fig:toy-challenge-compare}), and present results over a range of values for $\lambda$ (\Cref{fig:toy-challenge-compare-high}, \Cref{fig:tsne-ballet-bounds-lvls}). A discussion of how we chose the level $\lambda$ can be found in \Cref{s:ballet-tuning-parameter}.

\begin{figure}[h]
	\centering
	\includegraphics[width=0.9\textwidth]{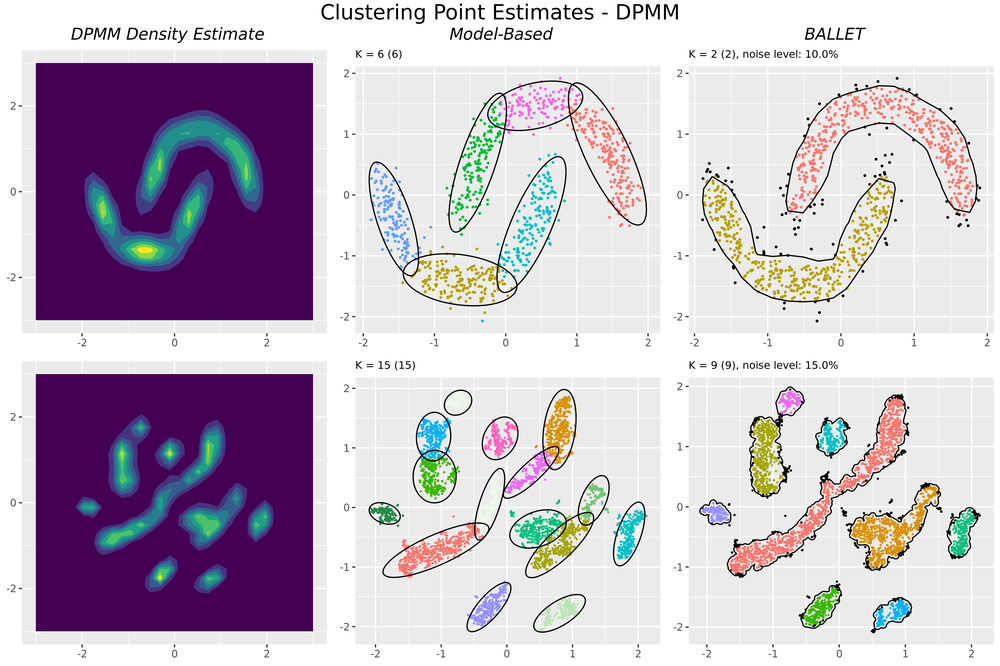}
	\caption{Analysis of the two moons and RNA-seq datasets. The first column shows a heatmap of $E(f|\cX_n)$ for the \texttt{DPMM} model. The center column shows the cluster estimate obtained from the traditional mixture-component allocation approach, and the third shows our \texttt{BALLET} point estimates. The number of clusters identified in each point estimate is shown at the top of each subplot, with the number of non-singleton clusters listed in parentheses. For the \texttt{BALLET} subplots we also note the target level of noise-points used to set $\lambda$.}
	\label{fig:toy-challenge-dpmm}
\end{figure}

\section{Analysis of Astronomical Sky Survey Data}
\label{s:real-data-analysis}
Astronomical sky surveys document the locations and redshifts of galaxies in the cosmos \citep{nichol1992edinburgh}. One aim in collecting the data is to analyze the spatial distribution of galaxies, as the size and distribution of high-density regions can help us estimate certain parameters of cosmological models, 
as described by \cite{jang2006nonparametric} in their non-Bayesian analysis of this level set clustering problem. Here, we perform a parallel analysis using \texttt{BALLET}, which offers us the benefits of more stable Bayesian nonparametric density estimation and Bayesian uncertainty quantification.

The data $\cX_n$ are a cleaned subset of the Edinburgh-Durham Southern Galaxy Catalogue \citep{nichol1992edinburgh} consisting of $n \approx 41K$ observations in a square region $\cX \subseteq \R^2$ and come with two catalogues of \textit{suspected} cluster locations: the Abell catalogue \citep{abell1989catalog} and the Edinburgh/Durham Cluster Catalog I  (EDCCI) \citep{lumsden1992edinburgh}. The former was created by visual inspection of the data by domain experts, while the EDCCI was produced by a custom-built cluster identification algorithm.    \Cref{fig:edsgc-density-estimate} visualizes the locations from these two catalogues overlying our posterior density estimate (\Cref{s:density-model-and-tuning-params}). Here we aim to estimate level set clusters and their uncertainty, and compare the results to locations in the two catalogues, which will serve as our imperfect ground truth.

We first conduct a simulation study, generating one hundred synthetic datasets designed to resemble the Edinburgh-Durham Southern Galaxy Catalogue data, analyzing them by the same \texttt{BALLET} methodology we will use for the real data, and computing sensitivity and specificity in detecting regions with excess density. To accommodate the fact that target clusters are described only by their central point (corresponding to a \emph{simulated} catalogue location) henceforth called a \emph{target point}, we evaluate sensitivity and specificity based on small ellipses enclosing each estimated cluster: sensitivity is measured as the proportion of target points contained in at least one ellipse, while specificity is measured as the proportion of ellipses which contain a target point.  Since sensitivity and specificity will both be equal to one if all the data points are assigned to a single cluster, we also compute a metric called \emph{exact match}, defined as the fraction of ellipses that have exactly one target point. As a competitor, we apply \texttt{DBSCAN} \citep{ester1996density}. 

\subsection{Density Model and Choice of Parameters}
\label{s:density-model-and-tuning-params}

In both the simulation study and real data analysis, we model the density $f$ with a simple mixture of random histograms: $f(x) = \sum_{k = 1}^K \pi_k H_k(x; \mathcal{B}_k, \brho_k)$, where $H_k(x; \mathcal{B}_k, \brho_k) = \sum_{m = 1}^M \1{(x \in B_{km})} \rho_{km}$ is a histogram density with bins $\mathcal{B}_k = (B_{k1}, \dots, B_{kM})$ and weights $\brho_k = (\rho_{k1}, \dots, \rho_{kM})$. We provide more details on our prior along with  a fast approximation to sample from the posterior of $f$ in \SMref{s:random-histogram-model}. 

Cosmological theory \citep[see][]{jang2006nonparametric} suggests the use of  the level $\lambda = (1+c) \bar{f}$,  where the  constant $c$ is  approximately one and $\bar{f} = \frac{\int_{\cX} f(x) dx}{\operatorname{Vol}(\cX)} = 1/\operatorname{Vol}(\cX)$  denotes the average value of $f$.  We chose the value $c=1$ for our analysis of the real data. This corresponded to declaring the fraction $\nu=.927$ of data points as noise. In the simulation study, we fix the fraction of noise points which are not assigned to a cluster at  
$\nu=0.9$ and set $\delta=\hat{\delta}$ from \eqref{eq:data-adaptive-delta}. The analogous parameter settings for \texttt{DBSCAN} are $\MinPts = k$ and $\texttt{Eps}=q_{1-\nu}[\{\delta_k(x_i) : x_i \in \cX_n\}]$ \citep{ester1996density}, where $\delta_k(x)$ is the distance from $x$ to the $k$th nearest point in the dataset $\cX_n$ and $q_{\alpha}$ is the quantile function corresponding to $\alpha \in (0,1)$.  Unlike for \texttt{BALLET}, the performance of \texttt{DBSCAN} in our simulation study was sensitive to the choice of $k$ (see \Cref{fig:ballet-dbscan-compare-param-sensitivity}). We also present results from \texttt{DBSCAN} in \Cref{s:app-edsgc-sky-survey,s:app-synthetic-sky-survey} with $\MinPts=60$ which was chosen via grid-search to optimize performance. The results were comparable to those of \texttt{BALLET} using the default parameter values.

\subsection{Simulation Study}
The simulation data were drawn from a mixture distribution that placed $\nu = 90\%$ of its mass in a uniform distribution over the unit square. %
and divided the remaining 10\% between 42 bivariate isotropic Gaussian components, with relative weights determined by a draw from a uniform distribution over the probability simplex. The component means are sampled uniformly from the unit square, and the variances were drawn from a diffuse inverse gamma distribution. We randomly generated one hundred such mixture distributions and drew $n=40000$ independent and identically distributed observations from each mixture distribution, dropping any observations that fell outside the unit square. We plot a typical synthetic data set in Figure \ref{fig:synthdata-data} and display the associated true and estimated high-density regions in Figure \ref{fig:synthdata-density}.

\begin{figure}%
    \centering
    \subfloat[\centering  DBSCAN clustering]{{\includegraphics[width=0.4\textwidth]{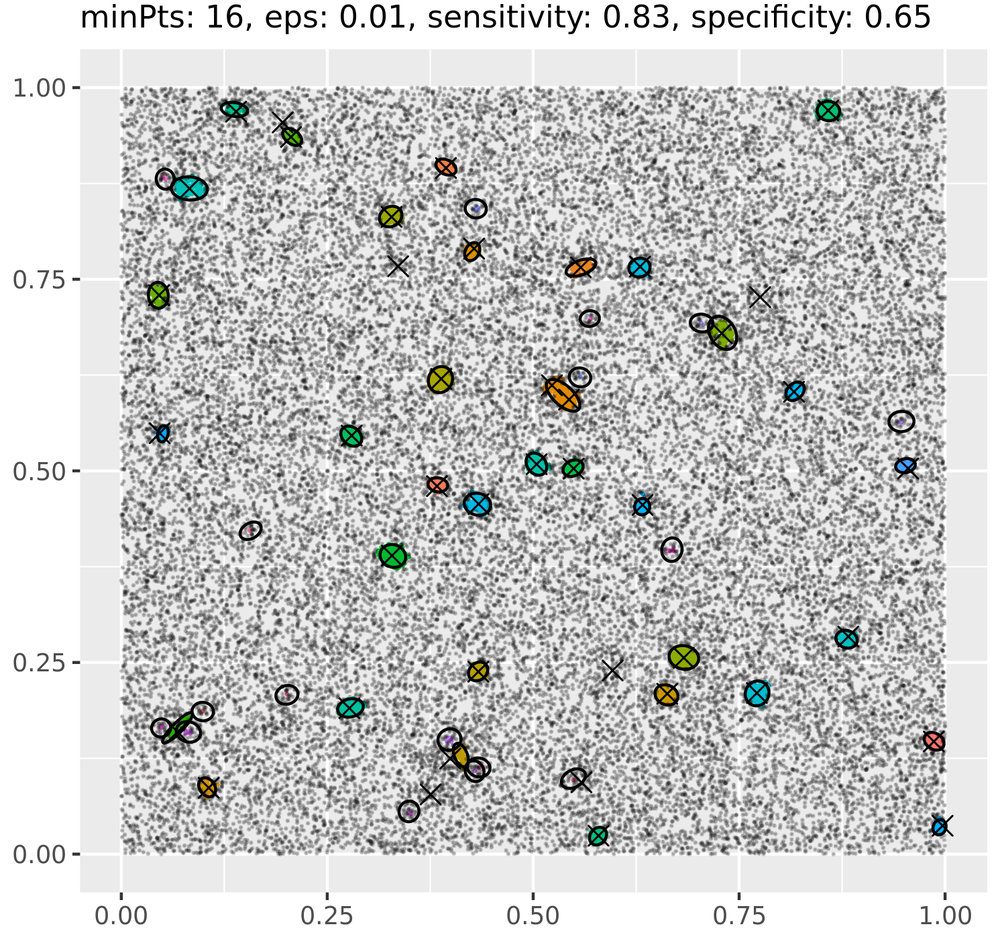} }}%
    \qquad
    \subfloat[\centering  BALLET Clustering]{{\includegraphics[width=0.4\textwidth]{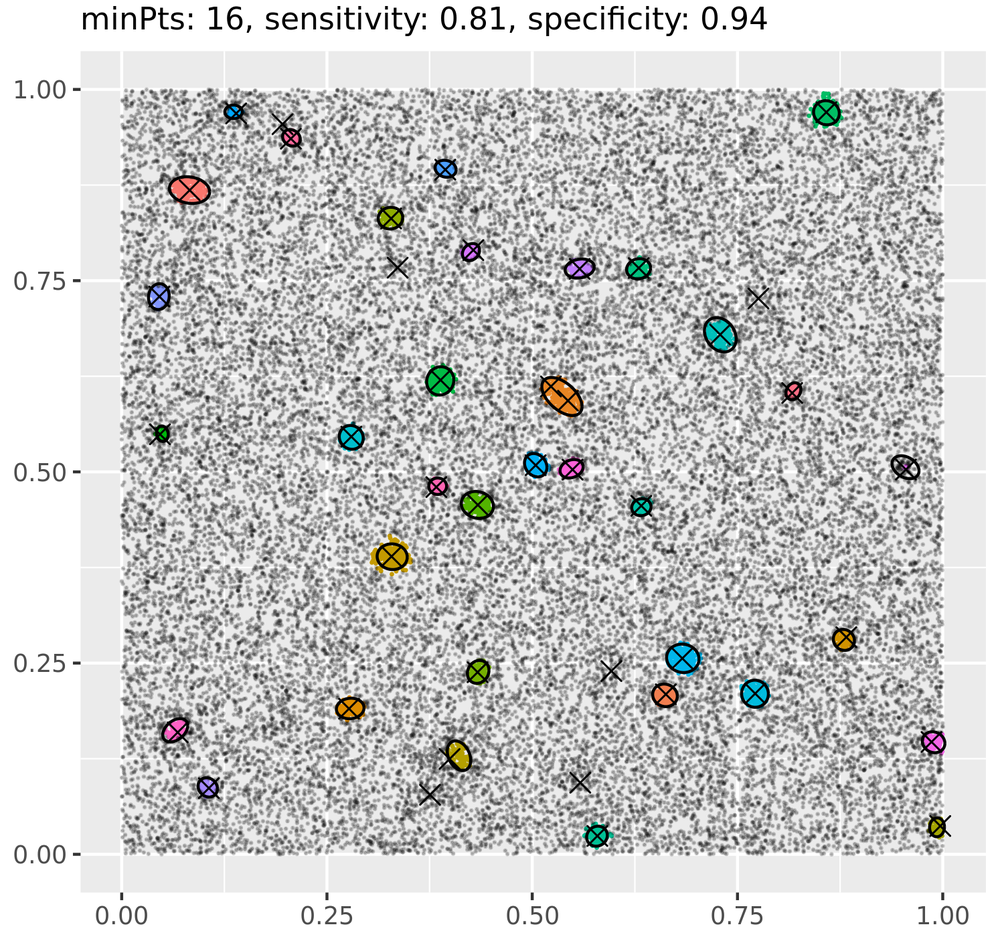} }}%
    \caption{Clusters estimated by \texttt{DBSCAN} and \texttt{BALLET}  for a representative synthetic sky survey dataset from our simulation study.  We see an apparent preference of \texttt{DBSCAN} for detecting a large number of singleton or near-singleton clusters.} %
    \label{fig:synthdata-pe}%
\end{figure}

In \Cref{fig:synthdata-pe}, we show the result of applying \texttt{DBSCAN} and \texttt{BALLET} to the typical synthetic dataset, highlighting \texttt{DBSCAN}'s apparent preference for detecting a large number of singleton or near-singleton clusters given our default choice of $\MinPts = k_0 = \lceil \log_2(n) \rceil = 16$ and the known fraction of noise points $\nu=90\%$. The average performance of \texttt{DBSCAN} and \texttt{BALLET} clustering (point estimate and upper and lower bounds) in all the hundred datasets is shown in \Cref{tab:synthdata-tabular-results}. \texttt{DBSCAN} achieved an average sensitivity of 0.86, but suffered substantial false positives with an average specificity of 0.49 (exact match = 0.45). \texttt{BALLET} achieved an average sensitivity of 0.78 while maintaining a nearly perfect average specificity at 0.99 (exact match =  0.88). The \texttt{BALLET} lower and upper bounds performed more and less conservatively, respectively,  than the point estimate. In particular, on average, the \texttt{BALLET} lower bound had less sensitivity (.62) but more specificity (.99) and exact matches (.9), while the \texttt{BALLET} upper bound had more sensitivity (.89) but less specificity (.96)  and exact matches (.83). 

The performance of \texttt{DBSCAN} improved to match that of \texttt{BALLET} when $\MinPts= k = 60$ was chosen to maximize the sum of the sensitivity and specificity values (\Cref{tab:synthdata-tabular-results}).  The performance of \texttt{BALLET} remained insensitive to the choice of $k$ (\Cref{fig:ballet-dbscan-compare-param-sensitivity}). Thus while carefully tuning hyper-parameters based on the ground truth was necessary for \texttt{DBSCAN} to match the performance of \texttt{BALLET},  the performance of \texttt{BALLET} seems more robust to loss parameters. This may be because \texttt{BALLET} separates careful data modeling from the task of inferring level set clusters.

\subsection{Sky Survey Data Analysis}

We applied \texttt{DBSCAN} and \texttt{BALLET} to the Edinburgh-Durham Southern Galaxy Catalogue data as described above, choosing 
$\MinPts = k_0$ based on our default value of $k_0 = \lceil \log_2(n) \rceil = 16$  %
or $\MinPts=60$, the value optimized in our simulation study.
Clustering results are shown in
\Cref{fig:edsgc-ballet-pe,fig:edsgc-dbscan-heuristic,fig:edsgc-dbscan-tuned}.

\begin{table}[ht]
\centering
\begin{tabular}{|l||l|l||l|l|l||l|}
\hline
& \texttt{DBSCAN} & \texttt{DBSCAN}$^1$ & \thead{\texttt{BALLET}\\ Lower} & \thead{\texttt{BALLET}\\ Est.} & \thead{\texttt{BALLET}\\ Upper} &\thead{\texttt{BALLET}\\ Plugin}\\
\hline
Sensitivity & 0.79& 0.69& 0.29 & 0.67 & 0.86 & 0.67 \\
Specificity & 0.20& 0.65& 0.87 & 0.69 & 0.42 & 0.69 \\
Exact Match & 0.17& 0.46& 0.67 & 0.51 & 0.32 & 0.53 \\
\hline
\end{tabular}

\caption{\texttt{DBSCAN} and \texttt{BALLET} coverage of suspected galaxy clusters in the EDCCI catalogue. Column \texttt{DBSCAN} reports performance with our default tuning parameter choice $\MinPts=16$, while \texttt{DBSCAN}$^1$ 
shows performance with 
$\MinPts=60$ based on our simulation study.}
\label{tab:edsgc-edcci-coverage-full}
\end{table}

\Cref{tab:edsgc-edcci-coverage-full} compares inferred clusters to the EDCCI catalogue of suspected galaxy clusters. While \texttt{DBSCAN} with heuristic parameter choice detected 79 percent of the EDCCI clusters, the method only had a specificity of 20 percent. \texttt{DBSCAN} with parameter optimized in our simulation study found 
69 percent of the EDCCI clusters with a specificity of 65 percent.  \texttt{BALLET} recovered 67 percent of the EDCCI clusters and had a specificity of 69 percent. \texttt{DBSCAN} and \texttt{BALLET} detected only 40 percent of the Abell catalogue clusters (\Cref{tab:edsgc-abell-coverage-full}), 
but performed better at recovering suspected galaxy clusters in the EDCCI, which is considered more reliable
\citep{jang2006nonparametric}. 

\Cref{fig:edsgc-ballet-bounds} visualizes \texttt{BALLET} clustering uncertainty  (\Cref{s:credible-bounds}) via upper and lower bounds for a 95 percent credible ball. The lower bound has fewer and smaller clusters and tends to include locations that the EDCCI and Abell catalogs agree on.  In contrast, the upper bound has larger and more numerous clusters, and tends to include many of the suspected cluster locations from both the catalogs. 
Based on  \Cref{tab:edsgc-edcci-coverage-full,tab:edsgc-abell-coverage-full}, one may suspect that the 14 percent EDCCI locations and 44 percent Abell locations that were not discovered by the \texttt{BALLET} upper bound may be erroneous. On the other hand, we may have high confidence in the 29 percent locations in EDCCI and 21 percent locations in Abell which were discovered by the \texttt{BALLET} lower bound. 

\begin{figure}
    \centering
    \includegraphics[width=0.9\textwidth]{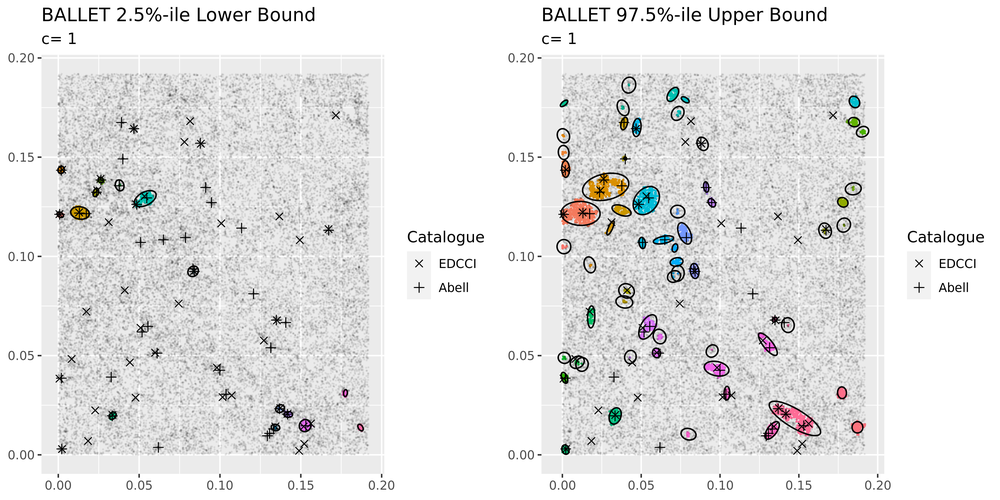}
    \caption{Upper and lower bounds of the 95\% credible ball centered at our \texttt{BALLET} estimate of the galaxy clusters in the Edinburgh-Durham Southern Galaxy Catalogue  data.}
    \label{fig:edsgc-ballet-bounds}
\end{figure}

\section{Discussion}
\label{s:discussion}

In this article, we developed a Bayesian approach to density based clustering, focusing on level set clustering as an important special case. Our key idea is to use Bayesian decision theory \citep{berger2013statistical} to separate the tasks of modeling the data density and inferring clusters. This provides a general new paradigm for inferring clusters, while representing uncertainty in clustering. A decision theoretic decoupling approach has  proved useful in various problem settings like interpretable modeling \citep{gutierrez2005statistical,afrabandpey2020decision,woody2021model}, variable selection in regression \citep{kowal2022bayesian,hahn2015decoupling}, factor analysis \citep{bolfarine2024decoupling}, structured covariance estimation \citep{bashir2019post}, and analysis of functional data \citep{kowal2020bayesian,kowal2022fast}. Our approach is also a case of this posterior decoupling methodology where we establish necessary conditions for consistency (\Cref{thm:consistency}).

A crucial and implicit part of our methodology is the  model $M$ on the space of densities. In any application, the problem of coming up with a good model $M$ is of course an issue that pervades Bayesian statistics. As we note in \Cref{s:theory}, if the posterior $P_{M}(\cdot|\cX_n)$ is consistent, the choice of the density model $M$ will not majorly impact the discovery of the true clustering $\psi_{\lambda}(f_0)$ for large sample sample sizes. \Cref{fig:toy-challenge-compare,fig:toy-challenge-compare-high,fig:toy-challenge-densities} in \Cref{s:app-toy-challenge} demonstrate this effect. 
For smaller sample sizes, a thoughtful choice for $M$ (e.g.~a parametric mixture model with few components) can be used with our methodology to ensure that there is enough signal to detect true clusters. For high dimensional problems, leveraging on  \cite{chandra2023escaping}, one can use \texttt{BALLET} to find the level set clusters for a low-dimensional latent representation of the data. 

\begin{figure}[!ht]
    \centering
    \includegraphics[width=.9\textwidth]{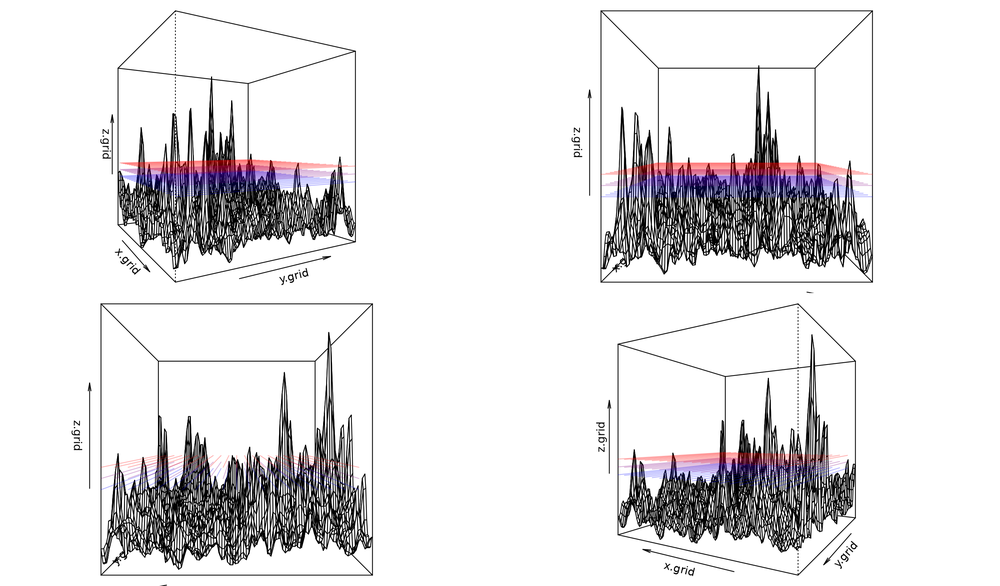}
    \caption{Visualizing our density estimate (plotted on the z-axis) for the Edinburgh-Durham Southern Galaxy Catalogue data. The colored lines mark the choice of different levels corresponding to the values of $c \in \{.8, 1,1.2\}$.}
    \label{fig:edsgc-density-viz}
\end{figure}

While level set clustering is a popular and conceptually appealing framework, a key practical challenge is the choice of the level $\lambda > 0$ \citep{campello2020density}.  %
Indeed, based on visualizing the density estimate for our sky survey data (\Cref{fig:edsgc-density-viz}), we expect our clusters to be sensitive to the exact value of the scientific constant $c$. To reduce sensitivity to $\lambda$, we describe a \emph{persistent} clustering approach in \SMref{s:persistent-clustering} that computes \texttt{BALLET} clusters for values of $c \in [.8, 1.2]$, visualizing these clusters with a cluster tree \citep{clustree}. This tree is then processed to infer clusters that remained active or \emph{persistent} across all the levels in the tree. This approach improved our specificity in detecting the two catalogs without losing sensitivity. 

While we have focused on level set clustering, our Bayesian density-based clustering framework is broad and motivates multiple directions for future work. One possibility is to avoid focusing on a single threshold $\lambda$, but instead estimate a cluster tree obtained by varying the threshold. Loss functions introduced by \cite{fowlkes1983method} provide a relevant starting point.
An alternative is to target a single clustering, but vary the threshold $\lambda$ over the observation space in a data-adaptive manner \citep{campello2015hierarchical}. Varying 
$\lambda$ is important in uncovering distinct cluster structures at varying levels of the density; refer, for example, to the illustrative example in \Cref{fig:level-set-pathological}.

Finally note that for a general non-parametric density $f$, it is hard to find a single notion of clustering that will be universally  appropriate across all applications. However, a natural notion at least when $f$ is sufficiently regular, may be that of modal clustering \citep{chacon2015population,menardi2016modal} that associates clusters with the domain of attraction of the modes of $f$. Interestingly, as recently argued in  \cite{arias2023unifying}, both level set clustering and modal clustering may fundamentally be the same approach.

\acks{This work was partially funded by grants R01-ES028804 and R01-ES035625 from the United States National Institutes of Health and N00014-21-1-2510 from the Office of Naval Research.
The authors would like to thank Dr. Woncheol Jang for kindly providing the data for our case study, and Dr. Cliburn Chan for suggesting applications in cosmology.}

\section*{Supplementary material}
\label{SM}
The accompanying supplementary materials contain additional details, including figures and tables referenced in the article starting with the letter `S'.  Code to reproduce our analysis can be found online at \url{https://github.com/davidbuch/ballet_article}.
\putbib[bibliography]
\end{bibunit}
\newpage

\originalJMLRappendix
\begin{center}
	\textbf{\Huge Supplementary Material for ``Bayesian Level Set Clustering"}
\end{center}

\setcounter{equation}{0}
\setcounter{figure}{0}
\setcounter{table}{0}
\setcounter{page}{1}
\setcounter{theorem}{0}
\setcounter{assumption}{0}

\renewcommand{\theequation}{S\arabic{equation}}
\renewcommand{\thefigure}{S\arabic{figure}}
\renewcommand{\thetable}{S\arabic{table}}
\renewcommand{\thesection}{S\arabic{section}}
\renewcommand{\thepage}{S\arabic{page}}
\renewcommand{\thetheorem}{S\arabic{theorem}}
\renewcommand{\thelemma}{S\arabic{lemma}}
\renewcommand{\theassumption}{S\arabic{assumption}}
\renewcommand{\bibnumfmt}[1]{[S#1]} %
\renewcommand{\citenumfont}[1]{S#1} %

\begin{bibunit}[apalike]

\section{Literature on Bayesian Clustering}
\label{s:related-work}

The last two decades have witnessed a significant maturation of the Bayesian clustering literature \citep{medvedovic2002bayesian, fritsch2009improved, wade2018bayesian, rastelli2018optimal, dahl2022search}. By designing and characterizing loss functions on partitions and developing search algorithms to identify partitions which minimize Bayes risk, these articles and others have established a sound framework for Bayesian decision-theoretic clustering. This literature acknowledges the \textit{cluster-splitting} problem alluded to in our preceding discussion, with \cite{wade2018bayesian} and \cite{dahl2022search} finding that clustering point estimates obtained by minimizing Bayes risk under certain parsimony-encouraging loss functions are less prone to cluster-splitting. 

However, these loss functions cannot completely eliminate the problem. \cite{guha2021posterior} shows that a fundamental cause of cluster splitting is that Bayesian mixture models converge to the mixture that has minimum Kullback-Leibler divergence to the true density. When the components of the mixture are not specified correctly, it may require infinitely many parametric components to recapitulate the true data-generating density. Thus, as data accumulate, it would seem futile to attempt to overcome the cluster-splitting problem merely by encouraging parsimony in the loss function. If the components are at all misspecified as data accumulate, eventually the preponderance of evidence will insist on splitting the clusters to reflect the multiplicity of parametric components. %
Indeed, in our illustrative example in Figure \ref{fig:mix-uniform-normal} (a) we used the parsimony-encouraging Variation of Information (VI) loss to obtain the Gaussian mixture model-based clustering point estimate. 

One response to this problem is the coarsened Bayes methodology of \cite{miller2018robust}, which only assumes the mixture model to be \textit{approximately} correctly specified. Another approach to mitigate the problem is to expand the class of mixture components \citep{fruhwirth2010bayesian, malsiner2017identifying, stephenson2019robust}. As we have claimed above, naive applications of this strategy can lead to loss of practical identifiability and computational challenges, although \cite{dombowsky2023bayesian} have had some success increasing component flexibility \textit{indirectly} by merging nearby less flexible mixture components in a post-processing step. The generalized Bayes paradigm, introduced by \cite{bissiri2016general}, also provides an answer to the cluster splitting problem via a loss-function-based Gibbs posterior for clustering \citep{rigon2020generalized}.

The idea of defining Bayesian clustering as a problem of computing a risk-minimizing summary $\psi$, of the posterior distribution on density $f$ can be viewed as related to the existing literature on decision-theoretic summaries of posterior distributions \citep{woody2021model, afrabandpey2020decision, ribeiro2018anchors}, though this literature has focused largely on extracting interpretable conclusions from posterior distributions on regression surfaces. In contrast, clustering in the manner we have proposed extracts an interpretable summary from a posterior distribution on the data-generating density. In addition, while the authors of that literature focus on the interpretability of summary functions $\psi$, we use the clustering example to emphasize that ideally $\psi$ should also be robust, in the sense that $\psi(f^*)$ will be close to $\psi(f)$ when $f^*$ is close to $f$, since this would suggest that small amounts of prior bias or model misspecification would not lead to large estimation errors.

\section{The lattice of sub-partitions}
\label{s:subpartition-lattice}
The space of sub-partitions $\ClustSpace[\cX]$  forms a lattice under the partial order given by  $\bC \preceq \bC'$ defined by the existence of a map $\phi: \bC \to \bC'$ such that $C \subseteq \phi(C)$ for each $C \in \bC$. One can check that $(\ClustSpace[\cX], \preceq)$ with join $\bC \vee \bC' \doteq \{C \cup C' | C \in \bC, C' \in \bC', C \cap C' = \emptyset\}$ and meet $\bC \wedge \bC' = \{C \cap C' | C \in \bC, C' \in \bC', C \cap C' = \emptyset\}$ is a lattice.

We denote $\bC \prec \bC'$ if $\bC \preceq \bC'$ but it is not the case that $\bC' \preceq \bC$. We can define a Hasse diagram for this lattice based on the relation $\bC \rightarrow \bC'$ if $\bC \prec \bC'$ but there is no $\bC'' \in \ClustSpace[\cX]$ such that $\bC \prec \bC'' \prec \bC'$. One can show that $\bC \rightarrow \bC'$ if and only if one of the following conditions hold:
\begin{itemize}
    \item $\bC'$ is obtained by merging two active clusters in $\bC$. That is, after suitable reordering: $\bC = \{C_1, \ldots, C_k\}$ and $\bC' = \{C_1 \cup C_2\} \cup \{C_r : r \in \{3, \ldots, k\}\}$.
    \item $\bC'$ is obtained by adding a noise point to its own cluster: i.e., $\bC' = \bC \cup \{{n}\}$ for some $n \in \cX$ that is not active in $\bC$.
\end{itemize}

This relation allows us to construct a Hasse diagram: a directed acyclic graph with nodes $\ClustSpace[\cX]$ and edges given by the relation $\rightarrow$. This diagram has the property that $\bC \prec \bC'$ if and only if there is a path from $\bC$ to $\bC'$. The Hasse diagram for the lattice of sub-partitions of $\cX = \{1,2,3\}$ is shown in \Cref{fig:hasse-subpart-lattice}.

\begin{figure}
    \centering
    \includegraphics[width=\textwidth]{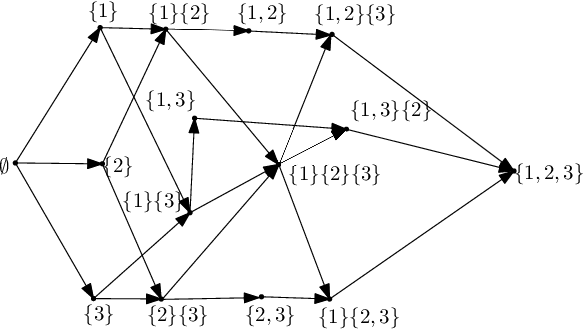}
    \caption{Hasse diagram for the lattice of sub-partitions $\ClustSpace[\cX]$ of the space $\cX = \{1,2,3\}$. This diagram has the property that $\bC \prec \bC'$ if and only if there is a path from $\bC$ to $\bC'$.}
    \label{fig:hasse-subpart-lattice}
\end{figure}

\section{\texttt{DBSCAN} and other level set clustering methods}
\label{s:comparison-to-dbscan}
Starting from works like \cite{hartigan1975clustering}, the topic of level set clustering has been extensively studied from the perspective of algorithms \citep{bhattacharjee2020density,campello2020density}, statistical methodology \citep{cuevas2000estimating,cuevas2001further,stuetzle2010generalized,scrucca2016identifying}, and statistical theory \citep{menardi2016modal,wang2019dbscan,steinwart2023adaptive}. Interestingly, while the popular \texttt{DBSCAN} algorithm \citep{ester1996density,schubert2017dbscan} has been around for a while, tools for its theoretical study are more recent \citep{sriperumbudur2012consistency,jiang2017density,wang2019dbscan}. Here we describe the \texttt{DBSCAN} algorithm and relate it to our surrogate clustering function $\psiclust(f)$, which we described in \Cref{ss:surrogate-clustering-function}  motivated by statistical theory.

The \texttt{DBSCAN} algorithm finds arbitrary shaped clusters of related data points in large spatial databases \citep{ester1996density}. The \texttt{DBSCAN} \emph{cluster model} \cite[Section 2.1]{schubert2017dbscan} is not explicitly described in terms of the data density $f_0$, but rather in terms of a notion of distance $\dist(x_i,x_j)$ measuring relatedness between observations $x_i,x_j \in \cX_n$ and two free parameters $\Eps > 0$ and $\MinPts \in \nat$. A data point $x \in \cX_n$ is called a \emph{core} point if it has at least $\MinPts$ many neighbors $N_{\Eps}(x) \doteq \{y \in \cX_n : \dist(x,y) \leq \Eps\}$ that are within a distance $\Eps$ of it (i.e.~$|N_{\Eps}(x)| \geq \MinPts$). The set of all \emph{core} points $\mathcal{A} = \{x \in \cX_n :|N_{\Eps}(x)| \geq \MinPts \}$ are then   clustered based on the partition induced by the transitive closure of the relation $\{(x,y) \in \mathcal{A} \times \mathcal{A} : \dist(x,y) \leq \Eps\}$. In words, the \texttt{DBSCAN} clustering of $\mathcal{A}$ is the finest partition of $\mathcal{A}$ where each pair of points $x, y  \in \mathcal{A}$ satisfying $\dist(x,y) \leq \Eps$ are clustered together. While the \texttt{DBSCAN} algorithm goes on further to add some of the \emph{non-core} points (called \emph{border} points) that lie within a neighborhood $N_{\Eps}(x)$ of some core point $x \in \mathcal{A}$ to a corresponding cluster, for consistency with level set clustering, this step is avoided by a variant of the algorithm  called \texttt{DBSCAN}$^*$ \cite{campello2015hierarchical}.

When $\cX = \R^{d}$ and $\dist(x,y) = \|x-y\|$ is Euclidean distance, the notion of \emph{core points} from \texttt{DBSCAN} is seen to be related to the notion of \emph{core} or \emph{active} points that we introduced in \Cref{ss:level-set-clusters}. In fact, and as indicated in \cite{sriperumbudur2012consistency,jiang2017density,campello2015hierarchical}, the clustering from \texttt{DBSCAN}$^*$ is the same as our surrogate clustering $\psiclust(\hat{f}_\delta) \in \ClustSpace[\cX_n]$ where $\hat{f}_\delta(x) = n^{-1} \sum_{x_i \in \cX_n} \kappa_\delta(x_i - x)$ is the kernel density estimate based on the uniform kernel $\kappa_\delta(z) = \I{\|z\| \leq \delta}/(v_d \delta^d)$ and $v_d=\frac{\pi^{d/2}}{\Gamma(d/2+1)}$ is the volume of the $d$-dimensional unit Euclidean ball. Here $\delta=\Eps$ and $\lambda=\MinPts/(nv_d \delta^d)$ can be expressed in terms of the original \texttt{DBSCAN} parameters $\Eps > 0$ and $\MinPts \in \nat$. In fact, as noted in \cite{campello2020density}, there is also another representation of \texttt{DBSCAN}$^*$ as $\psiclust(\hat{f}_{k})$ where $\hat{f}_k(x) = \frac{k}{n v_d}  \delta_k(x)^{-d}$ is the $k$-nearest neighbor density estimator \citep{biau2015lectures} with $\lambda=\frac{k}{nv_d} \delta^{-d}$,  $\delta = \Eps$, and $k=\MinPts$. 
\begin{remark}
From the first formulation $\psiclust(\hat{f}_\delta)$ the parameter $\Eps = \delta$ for \texttt{DBSCAN} simultaneously controls both the regularity of the kernel density estimator $\hat{f}_{\delta}$ used to discover core points $\mathcal{A} = A_{\lambda,\hat{f}_{\delta}}$ and also the connectivity of resulting clusters based on the connectivity of the graph $G_{\delta}(\mathcal{A})$. This is in contrast to \texttt{BALLET} where the parameter $\delta$ only controls the connectivity of the clusters, and may explain why \texttt{BALLET} clustering was seen to be less sensitive to the choice of this parameter in \Cref{fig:ballet-dbscan-compare-param-sensitivity}.
\end{remark}

\subsection{Time complexity of evaluating surrogate function}
The time complexity of evaluating $\psiclust(f)$ at any fixed $f$ is comparable to that of the \texttt{DBSCAN} algorithm and an additional time complexity $\kappa_n$ of evaluating $f$ at all of the points in $\cX_n$. Suppose first that the $\delta$ neighborhood graph for all the data points $G_\delta(\cX_n)$ can be pre-computed and stored for future use in an adjacency list representation \cite[Chapter 3]{sanjoy2008algorithms}. In order to evaluate $\psiclust(f)$, one can then (i) calculate the set of active nodes $\Activef \subseteq \cX_n$ by evaluating $f$ at all the data points, (ii) extract the subgraph $G_\delta(\Activef)$ of $G_\delta(\cX_n)$ by scanning the precomputed adjacency list, and (iii) compute the connected components of $G_\delta(\Activef)$ by using the standard breadth (or depth) first search algorithm \citep[Chapter 3]{sanjoy2008algorithms}. Thus, given our precomputed adjacency list representation of $G_\delta(\cX_n)$, the time complexity to evaluate $\psiclust(f)$ is $O(\kappa_n + |G_\delta(\cX_n)|)$ where $|G_\delta(\cX_n)|$ is the sum of the number of edges and vertices in $G_\delta(\cX_n)$. The time complexity of pre-computing the graph $G_\delta(\cX_n)$ is at most that of running the \texttt{DBSCAN} algorithm up to constant multiples. Indeed, $G_\delta(\cX_n)$ can be constructed by performing a range query for each point $x_i \in \cX_n$ to discover the set of points $B(x_i,\delta) \cap \cX_n$; however, this sequence of range queries is also an essential part of the \texttt{DBSCAN} algorithm \citep[see][]{schubert2017dbscan} which would thus also require as many steps. %

\section{The BALLET optimization algorithm}
\label{s:ballet-search-algorithm}
For any sub-partition $\bC = \{C_1, \ldots, C_k\}$ of $\{x_1, \ldots, x_t\}$, we use an equivalent allocation vector representation $\bc=(c_1, \ldots, c_t) \in \{0,1, \ldots, k\}^t$ given by $c_i = h$ if the point $x_i$ belongs to the cluster $h$, i.e. $x_i \in C_h$, and $c_i = 0$ if the point $x_i$ is classified as noise under this sub-partition, i.e. $x_i \in \{x_1, \ldots, x_t\} \setminus \cup_{h=1}^k C_h$. 

Given Monte Carlo samples $\{f^{(s)}\}_{s=1}^S$ from the posterior distribution $P_M( \cdot | \cX_n)$, we first compute the clusterings  $\bC^{(s)} = \psiclust(f^{(s)}) \in \ClustSpace[\cX_n]$ and their allocation vectors $\bc^{(s)}=(c^{(s)}_1, \ldots, c^{(s)}_n)$ for each $s \in \{1 \ldots S\}$. Next, these allocation vectors are  used to precompute the probability estimates in \eqref{eq:ia-binder-risk}, namely
\begin{eqnarray}
\hat{\pi}^{(1)}_{i,j} & \doteq & S^{-1} \sum_{s=1}^S \1{(c^{(s)}_i \neq 0, c^{(s)}_j \neq 0, c^{(s)}_i = c^{(s)}_j)},\qquad \hat{\pi}^{(2)}_{i,j} \doteq  S^{-1} \sum_{s=1}^S \1{(c^{(s)}_i \neq 0, c^{(s)}_j \neq 0, c^{(s)}_i \neq c^{(s)}_j)}, \nonumber \\ \hat{\alpha_i} & \doteq &  S^{-1} \sum_{s=1}^S \1{(c^{(s)}_i \neq 0)} \nonumber 
\end{eqnarray}
for each $i \neq j \in \{1, \ldots, n\}$.
With this, the optimization problem in \eqref{eq:ballet-estimator} reduces to minimizing the risk
\begin{equation}
\label{eq:empirical-risk}
\begin{aligned}
R(\bc') = &(n-1) \bigg\{ m_{ai} \sum_{i=1}^n \1{(c'_i = 0)} \hat{\alpha_i} + m_{ia} \sum_{i=1}^n \1{(c'_i \neq 0)} (1-\hat{\alpha_i}) \bigg\} \\
            &\qquad + \sum_{1 \leq i < j \leq n} \1{(c'_i \neq 0, c'_j \neq 0)} \big\{a \hat{\pi}^{(1)}_{i,j} \1{(c'_i \neq c'_j)} + b \hat{\pi}^{(2)}_{i,j} \1{(c'_i =  c'_j)} \big\}
\end{aligned}
\end{equation}
over all allocation vectors $\bc'=(c'_1, \ldots, c'_n)$ corresponding to sub-partitions $\bC' \in \ClustSpace[\cX_n]$. 

Although exact minimization over the combinatorial space $\ClustSpace[\cX_n]$ of sub-partitions is computationally intractable, we can adapt heuristic algorithms for approximate minimization over the related space of partitions of $\cX_n$ \citep[e.g.,][]{fritsch2009improved, rastelli2018optimal}. Particularly, we consider the algorithm of \citet{dahl2022search} that, given a candidate partition of $\cX_n$, provides two important ways to compute a candidate set of partitions that may have a smaller objective value: (i) a series of incremental update steps called the \emph{sweetening phase} that reassigns each data point $x_i$ (chosen in a random order) to a different cluster if doing so will decrease the objective, and (ii) a series of major update steps called the \emph{zealous update phase} that repeatedly destroys a randomly chosen cluster and then incrementally reallocates the data points if doing so decreases the objective. Starting from an initial partition that is either selected at random or is built incrementally to have a small objective value, the algorithm of \citet{dahl2022search} improves the initial partition using \emph{sweetening phase} followed by \emph{zealous update phase}. This entire process is repeated (in parallel) many times, and the partition with the least objective value among all the explored partitions is reported.

The main primitive operation needed to implement the above algorithm is to incrementally find a low-risk partition including a new data point (say $x_{t+1}$ for $t \in \{1, \ldots, n-1\}$) that respects a given low-risk partition $\{C_1, \ldots, C_k\}$ of some existing set of data points, say $\{x_1, \ldots, x_t\}$. Indeed, the following two kinds of such partitions of $\{x_{1}, \ldots, x_{t+1}\}$ are possible: (a) the new point $x_{t+1}$ is added to its own cluster; this is the partition $\{C_1, \ldots, C_k, \{x_{t+1}\}\}$, or (b) the new point is added to one of the existing clusters (say $C_h$); this is the partition $\{C_1, \ldots , C’_h, \ldots, C_k\}$, where $C’_h = C_h \cup \{x_{t+1}\}$. For each of these $k+1$ partitions, \citet{dahl2022search} recommend evaluating the objective value restricted only to the data points under consideration (i.e. sum only over terms $i, j \in \{1, \ldots, t+1\}$ in our empirical risk \eqref{eq:empirical-risk}) and selecting the partition with the smallest risk among the $k+1$ candidates. 

The aforementioned primitive operation is easily extended to the case of sub-partitions of $\cX_n$. Indeed, suppose $\bC = \{C_1, \ldots, C_k\}$ is a sub-partition of $\{x_1, \ldots, x_{t}\}$. The sub-partition $\bC'$ of $\{x_1, \ldots, x_{t+1}\}$ respects $\bC$ in the following three possible ways: a) the point $x_{t + 1}$ is assigned to the noise cluster; this is just the sub-partition $\bC' = \{C_1, \ldots, C_k\}$ in our notation, b) the point $x_{t + 1}$ is assigned to its own cluster; this is the sub-partition $\bC' = \{C_1, \ldots, C_k, \{x_{t+1}\}\}$, and (c) the point $x_{t+1}$ is assigned to an existing cluster (say $C_h$); this is the sub-partition $\bC' = \{C_1, …, C’_h, …, C_k\}$ where $C’_h = C_h \cup \{x_{t+1}\}$. We then evaluate our risk \eqref{eq:empirical-risk} restricted to the indices $i, j \in \{1, \ldots, t+1\}$ using the allocation vector $\bc' = (c_1, \ldots, c_{t+1})$ corresponding to $\bC'$, and select the sub-partition with the smallest risk among the $k+2$ candidates. This primitive operation allows us to implement the initialization, sweetening, and zealous update phases in the \cite{dahl2022search} algorithm to minimize our risk \eqref{eq:empirical-risk} over allocation vectors that correspond to all sub-partitions of $\cX_n$.  Notably, in the \emph{zealous update phase} the cluster to be destroyed can either be the current noise cluster or one of the current non-noise clusters.

\subsection{Avoiding optimization: \texttt{BALLET} decision theoretic vs plugin estimator?}
\label{ss:plugin-vs-decision-theoretic}

Recall the heuristic \texttt{BALLET} \emph{plugin} estimate $\hat{\bC} = \psi_{\lambda, \delta}(\hat{f})$ that avoids the expensive optimization in \eqref{eq:ballet-estimator} by directly computing the level set clusters of the posterior mean density $\hat{f}(x) \approx \frac{1}{S}\sum_{s = 1}^S f^{(s)}(x)$. In most cases, the plugin clustering estimate will be similar to the decision theoretic \texttt{BALLET} estimator from \eqref{eq:ballet-estimator} when the posterior uncertainty of $f$, and particularly that of the level set $\{f \geq \lambda\}$, is low. We note this in our results from \Cref{s:real-data-analysis} (see~\Cref{tab:synthdata-tabular-results,tab:edsgc-edcci-coverage-full,tab:edsgc-abell-coverage-full}).

However we now illustrate that the two estimators will at times produce different answers because the heuristic plugin estimate does not take into consideration the posterior uncertainty of $f$, which may be substantial. Indeed, by modifying our simple example from \Cref{fig:mix-uniform-normal}, we see differences emerge when the level $\lambda$ is increased to the point that there is non-trivial posterior uncertainty in the induced level set $\{f \geq \lambda\}$ (\Cref{fig:ballet_vs_plugin}).

As a general principle, we recommend the use of Bayes estimators that directly target the quantity of interest, rather than a two-stage plugin approach, where a Bayes estimator is computed for an intermediate quantity. Indeed, there are many examples in the literature in which two-stage plugin approaches are suboptimal.

\begin{figure}
    \centering
    \includegraphics[height=.9\textheight]{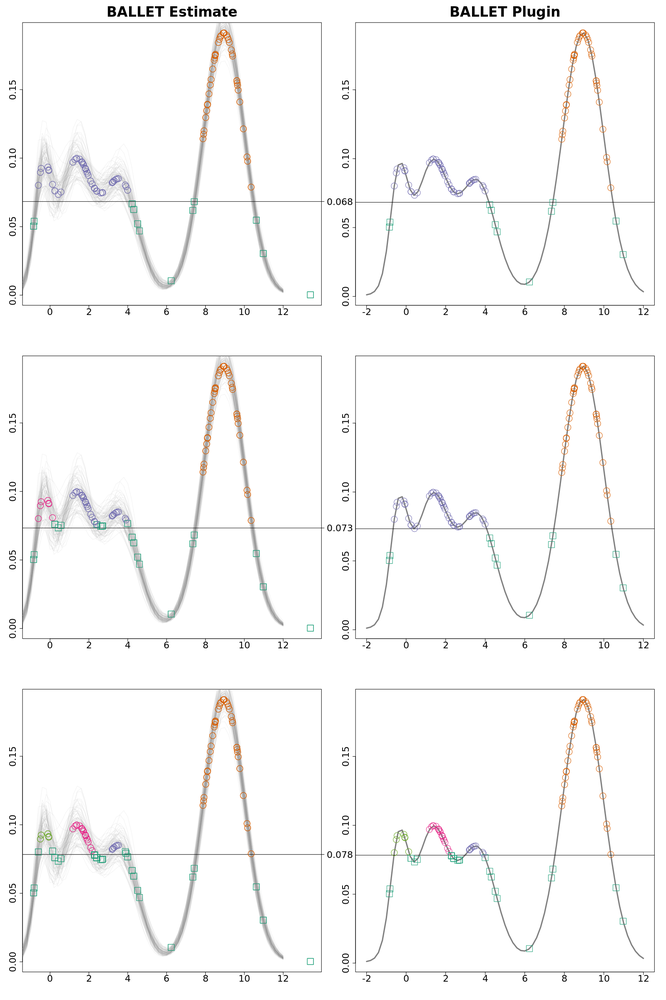}
    \caption{\texttt{BALLET} estimator \eqref{eq:ballet-estimator}  accounts for the posterior uncertainty of $f$ (left) while the plugin estimator (right) does not. The estimates start to differ when level $\lambda$ in \Cref{fig:mix-uniform-normal} is increased so that there is non-trivial posterior uncertainty in the level set $\{f \geq \lambda\}$.}
    \label{fig:ballet_vs_plugin}
\end{figure}

\section{Additional results from analysis of the illustrative challenge datasets}
\label{s:app-toy-challenge}

In this section, we present additional results from the analysis of the illustrative challenge datasets. In \Cref{fig:toy-challenge-data} we visualize the three datasets, and in \Cref{fig:toy-challenge-densities} we show heat maps of the log of the posterior expectation of the data generating density $f$ under three different models: a Dirichlet process mixture of Gaussian distributions (\texttt{DPMM}), an adaptive P\'olya tree model, and a nearest-neighbor Dirichlet mixture model.

In analyzing these datasets, our choice of loss parameters $\lambda$ for \texttt{BALLET} was guided by the discussion in \Cref{s:ballet-tuning-parameter}. In particular, we tuned $\lambda$ to achieve a certain noise level $\nu \in (0,1)$, and given $\nu$ (and thus $\lambda$) the parameter $\delta$ was automatically chosen using the data adaptive procedure in \Cref{ss:surrogate-clustering-function} with our default choice of $k=\lceil \log n \rceil$. Here, $n$ is the sample size of the dataset under consideration.

We describe the clustering results using \texttt{BALLET} for various choices of noise level $\nu$.
In \Cref{fig:toy-challenge-compare,fig:toy-challenge-compare-high} we compare \texttt{BALLET} clustering estimates obtained under our three density models for two different noise levels $\nu \in \{ 5\%, 10\%\}$. 
The \texttt{BALLET} upper and lower bounds for the RNA-seq data corresponding to noise levels $\nu \in \{5\%, 10\%\}$ are shown in \Cref{fig:tsne-ballet-bounds-lvls}.
The persistent clusters (see \Cref{s:persistent-clustering}) across the noise levels $\nu \in \{5\%,10\%,15\%\}$ for the RNA-seq data are shown in \Cref{fig:tsne-persistent-clustering}. We note that the persistent clusters are somewhat qualitatively different across the density models, demonstrating that the choice of prior can have an effect on the nature of clusters that are discovered.

Finally, we also explore an automatic choice of $\nu$ for the various  datasets and density models using the elbow heuristic mentioned in \Cref{s:ballet-tuning-parameter}. The elbow plots describing the selection of $\nu$ are shown in \Cref{fig:toy-challenge-elbow-plots}, while the corresponding clusters are shown in \Cref{fig:toy-challenge-compare-elbow}.

\begin{figure}
    \centering
    \includegraphics[width=0.9\textwidth]{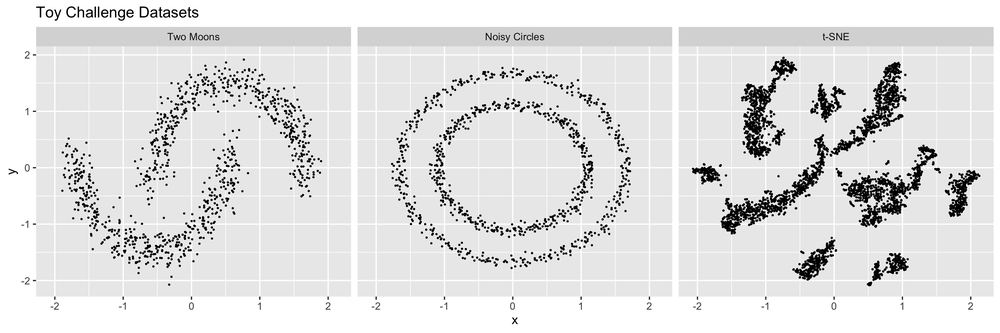}
    \caption{Plots of the three illustrative challenge datasets. From left to right: two moons simulated data, noisy circles simulated data, and a t-SNE embedding of a RNA-seq dataset.}
    \label{fig:toy-challenge-data}
\end{figure}

\begin{figure}
    \centering
    \includegraphics[width=0.9\textwidth]{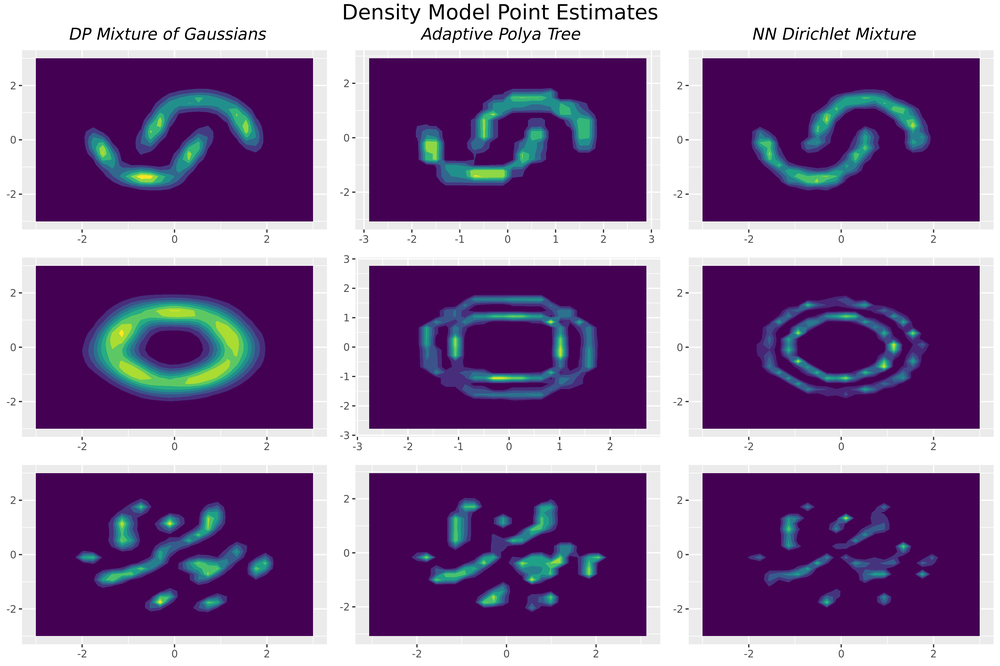}
    \caption{Plots of posterior point estimates of the data-generating densities for each of three illustrative challenge datasets under three different models for the unknown density.}
    \label{fig:toy-challenge-densities}
\end{figure}

\begin{figure}
    \centering
    \includegraphics[width=0.9\textwidth]{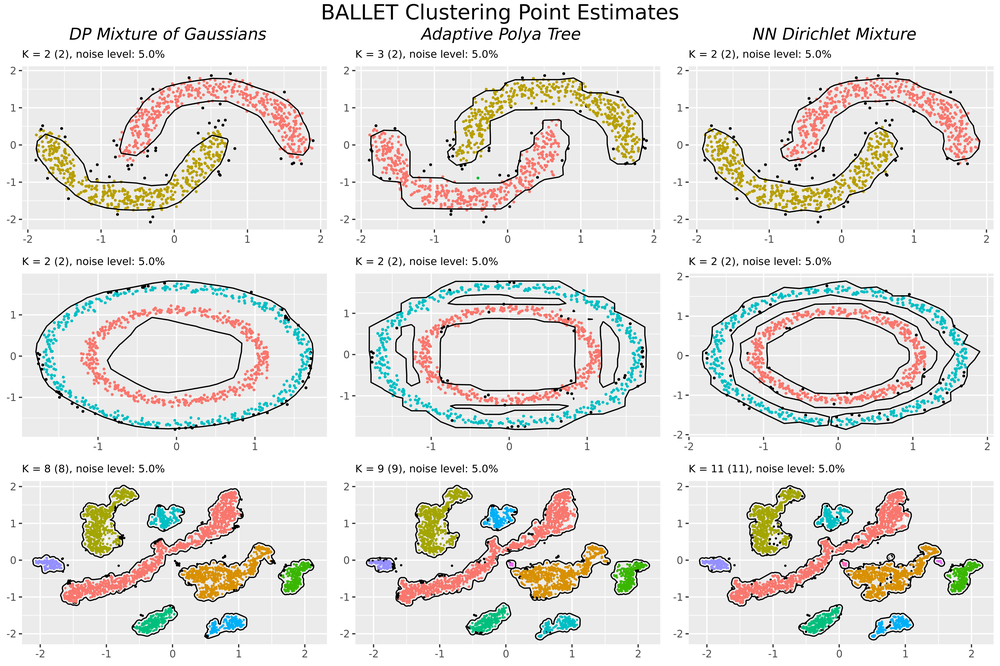}
    \caption{Comparison of \texttt{BALLET} clustering point estimates obtained under the three different density models shown in Figure \ref{fig:toy-challenge-densities} with $\nu=5\%$ noise points. The cardinality of the sub partition is displayed in the title of each plot, as $K = X$, and it is followed, in parentheses by the count of clusters with more than 1 observation.}
    \label{fig:toy-challenge-compare}
\end{figure}

\begin{figure}
    \centering
    \includegraphics[width=0.9\textwidth]{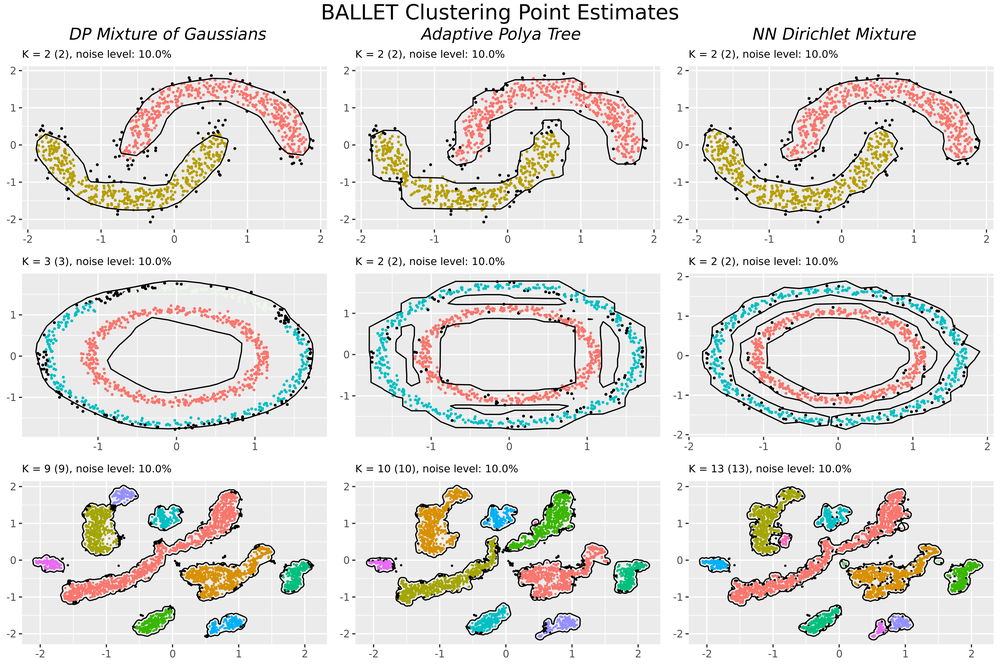}
    \caption{Comparison of \texttt{BALLET} clustering point estimates obtained under the three different density models shown in Figure \ref{fig:toy-challenge-densities} with $\nu=10\%$ noise points. Compared to \Cref{fig:toy-challenge-compare}, some clusters in second and third rows are seen to split into further clusters based on our choice of the density model. While this may be desirable in the RNA-seq dataset in the last row, increasing the density level does not seem desirable for the Noisy Circles dataset in the second row.}
    \label{fig:toy-challenge-compare-high}
\end{figure}

\begin{figure}
	\centering
	\includegraphics[width=.9\textwidth]{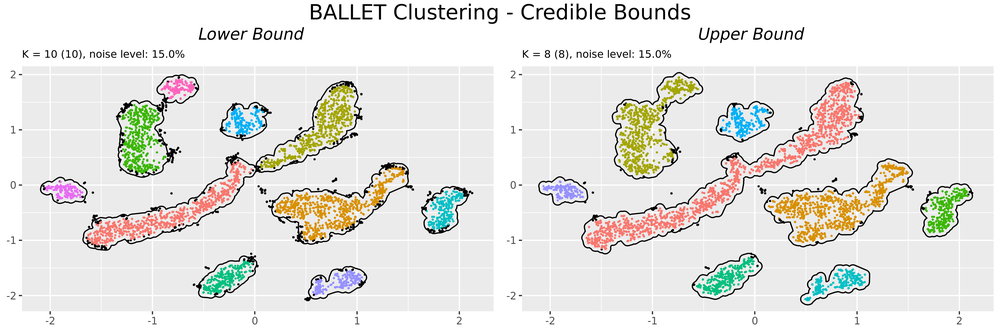}
	\caption{Upper and lower bounds for the 95\% credible ball centered at our \texttt{BALLET} clustering estimate for the RNA-seq data, fit with the \texttt{DPMM} model for $f$. The cardinality of the partition is displayed in the title of each plot, as $K = X$, and it is followed, in parentheses by the count of clusters with more than 1 observation, and the percentage ($\nu=15\%$) of noise points  based on our chosen  level $\lambda$. \Cref{fig:tsne-ballet-bounds-lvls} in \Cref{s:app-toy-challenge} shows additional results for different choices of $\lambda$.}
	\label{fig:tsne-ballet-bounds}
\end{figure}

\begin{figure}
	\centering
	\includegraphics[width=.9\textwidth]{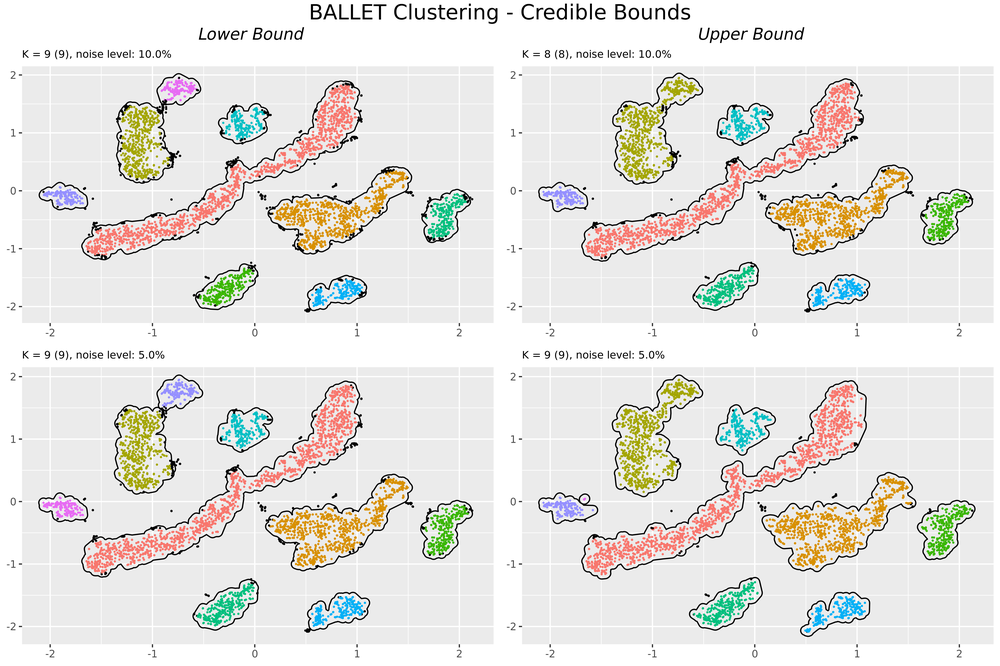}
	\caption{The \texttt{BALLET} upper and lower bounds in  \Cref{fig:tsne-ballet-bounds} for different choices of the level  $\lambda$, as specified in the subplot titles.}
	\label{fig:tsne-ballet-bounds-lvls}
\end{figure}

\begin{figure}
	\centering
	\includegraphics[width=.9\textwidth]{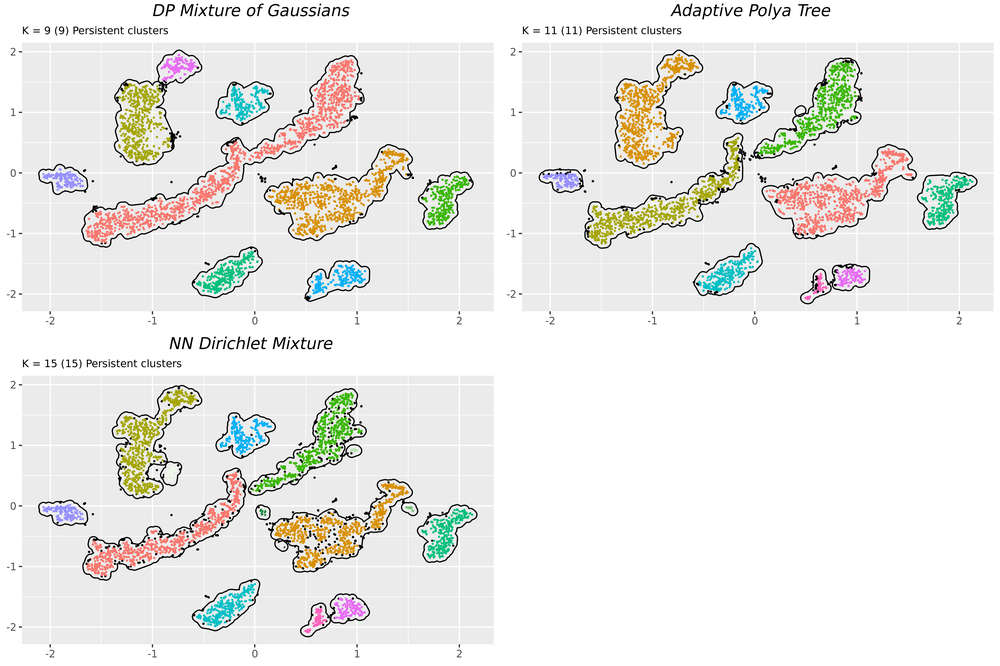}
	\caption{ The persistent clusters (see \Cref{s:persistent-clustering}) across the three density models for the RNA-seq data after applying \texttt{BALLET} with noise levels $\nu \in \{5\%, 10\%, 15\%\}$.}
	\label{fig:tsne-persistent-clustering}
\end{figure}

\section{The mixture of histograms model for densities}
\label{s:random-histogram-model}
This section describes the \emph{mixture of histograms} model that we use to estimate the data generating density in \Cref{s:real-data-analysis}. This model can quickly be fit to a large number of data points since the fitting is primarily based on counting the number of observed data points that  fall into various bins. Further, in contrast to a standard histogram model, the density function from a mixture of histograms tends to be more regular in the sense of having smaller jumps. 

Let us introduce the notation to describe our model.  Suppose $x_i$ for $i = 1, \dots, n$ are independent draws from an unknown distribution with density $f$ supported on a compact set $\cX \subseteq \R^2$. We assume that $f$ can be represented as a finite mixture $f(x; \vec{\pi}, \vec{\mathcal{B}}, \vec{\rho}) = \sum_{k = 1}^K \pi_k H_k(x; \mathcal{B}_k, \vec{\rho}_k)$ of $K \in \nat$ histogram densities, where $\vec{\pi} = (\pi_1, \ldots, \pi_K)$ is a vector of non-negative weights whose coordinates sum to one. For a given $k \in [K]$, the histogram density $H_k(x; \mathcal{B}_k, \vec{\rho}_k) = \sum_{m = 1}^{M} \1{(x \in B_{km})} \rho_{km}$ is a step-function based on a partition $\mathcal{B}_k = \{B_{k1}, \dots, B_{kM}\}$ of size $M$ of $\cX$ and a set of associated density values $\vec{\rho}_k = (\rho_{km})_{m=1}^M$. For simplicity, we fix $|\mathcal{B}_k| = M$ for all $k = 1, \ldots, K$. 

It is convenient to view this model in terms of an equivalent augmented-data representation, associating a latent variable $Z_i$ with each observation $x_i$, so that $f(x_i; Z_i, \vec{\mathcal{B}}, \vec{\rho}) = \sum_{k = 1}^K \1{(Z_i = k)} H_k(x_i; \mathcal{B}_k, \vec{\rho}_k)$ and $\Pr(Z_i = k | \vec{\pi}) = \pi_k$ for each $k \in \{1, \ldots, K\}$. We denote the complete set of observations as $\mathcal{D} = \{x_1, \dots, x_N\}$ and the latent histogram allocation variables as $\mathcal{Z} = \{Z_1, \dots, Z_N\}$. 

For simplicity, we also assume that $\cX = [a, b] \times [c,d]$ and $\mathcal{B}_k$ is a grid (or product) based partition of $\cX$. More precisely, we assume that there is a partition $\mathcal{U}_k = \{U_{k1}, \dots, U_{kM'}\}$ of $[a,b]$ and $\mathcal{V}_k = \{V_{k1}, \dots, V_{kM'}\}$ of $[c,d]$ so that $\mathcal{B}_k = \{U \times V | U \in \mathcal{U}_k, V \in \mathcal{V}_k \}$ and $M = M'^2$. We further assume that partitions $\mathcal{U}_k$, $\mathcal{V}_k$ are constructed based on grid points   $\vec{u}_k = \{u_{k0}, \dots, u_{kM'}\}$, $\vec{v}_k = \{v_{k0}, \dots, v_{kM'}\}$ such that $U_{k1} = [u_{k0}, u_{k1}]$, $V_{k1} = [v_{k0}, v_{k1}]$ and $U_{km} = (u_{k,m-1}, u_{k,m}]$ and $V_{km} = (v_{k,m-1}, v_{k,m}]$ for $2 < m \leq M'$.

\subsection{Prior distribution on parameters}

We now describe our prior distribution for the parameters of the mixture of histograms model. %
Focusing first on the partition $\mathcal{B}_k$, denote $u_{km} = a + (b - a)\sum_{j = 1}^m u'_{kj}$ and $v_{km} = c + (d - c)\sum_{j = 1}^m v'_{kj}$ so that $\vec{u}'_k = (u'_{k1}, \ldots, u'_{kM'})$ and $\vec{v}'_k = (v'_{k1}, \ldots, v'_{kM'})$ lie on the probability simplex.  We specify our prior on  $\mathcal{U}_k$ and $\mathcal{V}_k$ (and thus  $\mathcal{B}_k$) by assuming that   $\vec{u}'_k \sim \text{Dirichlet}(\alpha_b 1_{M'})$ and $\vec{v}'_k \sim \text{Dirichlet}(\alpha_b 1_{M'})$ are independent. The parameters $M'$ and $\alpha_b$ can be thought of as controlling the bin resolution and regularity for the histograms, respectively. In our sky survey analysis we set $M' = 50$ ($M = 2500$) and $\alpha_b = 5$.%

 After specifying our prior for $\mathcal{B}_k$, we  complete our prior specification for the histogram $H_k$ by describing our prior for $\vec{\rho}_k$ given $\mathcal{B}_k$. Since $H_k$ is a density that integrates to one, $\vec{\rho}_k$ should satisfy the constraint $\sum_{m=1}^M \rho_{km} A_{km} = 1$ where $A_{km}$ denotes the Lebesgue measure of bin $B_{km}$. Thus, rather than directly placing a prior on $\vec{\rho}_k$, we place a Dirichlet prior on the parameter $\vec{p}_k = (p_{k1}, \ldots, p_{kM})$, where $p_{km} = A_{km}\rho_{km}$ denotes the probability mass assigned to bin $B_{km}$ by the histogram $H_k$. Thus we suppose $\vec{p}_k | \mathcal{B}_k \sim \text{Dirichlet}(\alpha_d \frac{A_{k1}}{A}, \dots, \alpha_d\frac{A_{kM}}{A})$, choosing $\alpha_d = 1$ as a default. 

Finally, we complete our prior specification on the mixture of histograms model for the unknown density $f$ by choosing to treat all parameters $\{\{\mathcal{B}_1, \vec{\rho}_1\}, \dots, \{\mathcal{B}_K, \vec{\rho}_K\}\}$ of the $K$ histograms as \textit{a priori} independent and fixing the weights $\vec{\pi} = \{\frac{1}{K}, \dots, \frac{1}{K}\}$. In our sky survey analysis we set $K = 50$.

\subsection{Fast posterior sampling by clipping dependence}

We are interested in quickly sampling from the posterior distribution of the density $f \,| \, \mathcal{D}$ when the number of observations $n$ is large. Typically, one would draw samples from the joint posterior $\{\{\mathcal{B}_1, \vec{\rho}_1\}, \dots, \{\mathcal{B}_K, \vec{\rho}_K\}\}, \mathcal{Z} \,|\, \mathcal{D}$, and then, marginalizing over the uncertainty in $\mathcal{Z}$, use the samples of the histogram parameters to construct a posterior on $f$. A sampling algorithm designed to converge to this high-dimensional joint posterior object would be extremely computationally intensive, especially given our large sample size, and would likely require an unacceptably large number of samples to converge. Hence, we simplify inferences via a modular Bayes approach similar to that in
\cite{liu2009modularization}.

Specifically, to update $\vec{\mathcal{B}} = \{\mathcal{B}_1, \ldots, \mathcal{B}_K\}$, we sample from its prior distribution rather than its conditional distribution given the data and other parameters, effectively clipping the dependence of the bin parameters on the other components of the model as described in \cite{liu2009modularization}. Furthermore, we draw only one sample $\vec{\mathcal{B}}^* = \{\mathcal{B}^*_1, \dots, \mathcal{B}^*_K\}$ from the prior distribution on $\vec{\mathcal{B}}$, and reuse this same collection $\vec{\mathcal{B}}^*$ of histogram bins for each round of new samples for the other parameters. %

In addition, rather than iterate between sampling $\vec{p}_k$ from its full conditional, 
\begin{equation*}
    \vec{p}_k | \mathcal{D}, \mathcal{Z}, \mathcal{B}^*_k \sim \text{Dirichlet}(\sum_{i = 1}^N \1{(x_i \in B_{k1})}\1{(Z_i = k)} + \alpha_d \frac{A_{k1}}{A}, \dots, \sum_{i = 1}^N \1{(x_i \in B_{kM})}\1{(Z_i = k)} + \alpha_d\frac{A_{kM}}{A}),
\end{equation*}
and alternately sampling $\mathcal{Z}$ from its full conditional, we marginalize the log density of $\vec{p}_k | \mathcal{D}, \mathcal{Z}, \mathcal{B}^*_k$ with respect to the prior distribution on $\mathcal{Z}$ yielding the distribution   
\begin{equation}
    \vec{p}_k | \mathcal{D}, \mathcal{B}^*_k \sim \text{Dirichlet}(\frac{N_{k1}}{K} + \alpha_d \frac{A_{k1}}{A}, \dots, \frac{N_{kM}}{K}+ \alpha_d\frac{A_{kM}}{A}), 
\end{equation}
which we use in place of the posterior distribution of $\vec{p}_k$ given  $\mathcal{B}^*_k$ and $\mathcal{D}$. Here $N_{km} = \sum_{i = 1}^N \1{(x_i \in B^*_{km})}$ denotes the number of observations that fall into the bin $B^*_{km} \in \mathcal{B}^*_k$.

The resulting algorithm is a fast way to generate independent samples from an approximate modular posterior for 
$f(\mathcal{D})$. This sampler runs almost instantaneously on a personal laptop computer even for sample sizes of $n \approx 40,000$, which would be prohibitive for traditional Markov chain Monte Carlo algorithms for  
density estimation models. Moreover, the samples appear to appropriately reflect our uncertainty in the underlying data-generating density in our experiments.

\section{Additional results from the analysis of the synthetic sky survey data}
\label{s:app-synthetic-sky-survey}

Including a diversity of sizes among the synthetic galaxy clusters led to datasets that more closely resembled the observed data, and it also made the true clusters more challenging to recover with both clustering methods. Hence, we simulated the weights of the active components from a symmetric Dirichlet distribution with a small concentration parameter. The relative weights of the ``galaxy clusters" for one of the 100 synthetic datasets we analyzed are visualized in Figure \ref{fig:synthdata-component-weights}. The specific synthetic data set associated with these weights is shown in Figure \ref{fig:synthdata-data}.

\begin{figure}
    \centering
    \includegraphics[width=0.9\textwidth]{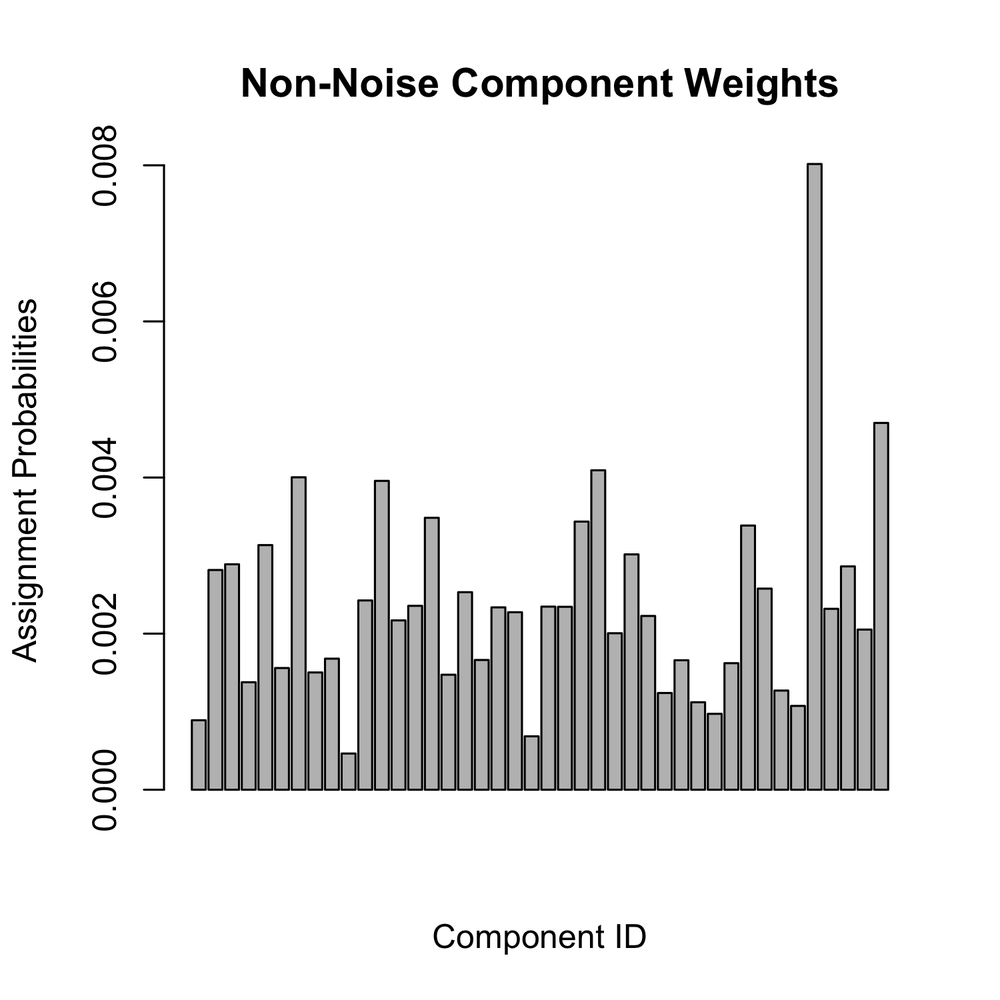}
    \caption{Relative sizes (mixtures weights) of the non-noise components in one of our synthetic sky survey datasets.}
    \label{fig:synthdata-component-weights}
\end{figure}

\begin{figure}
    \centering
    \includegraphics[width=0.9\textwidth]{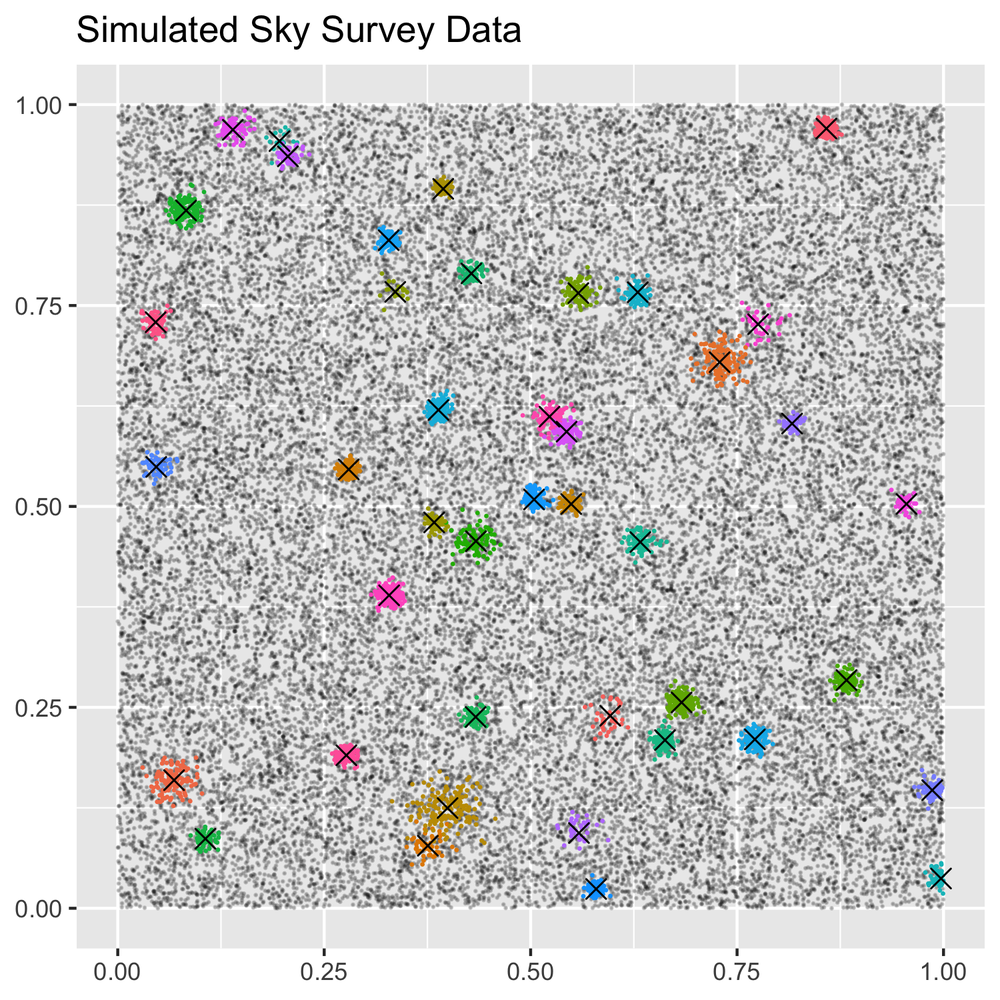}
    \caption{One of our synthetic sky survey datasets. Observations drawn from one of the high-density components are given bright colors, and each of their centers is marked with an $\times$. Observations drawn from the uniform background are colored grey and made translucent.}
    \label{fig:synthdata-data}
\end{figure}

\begin{figure}
    \centering
    \includegraphics[width=0.9\textwidth]{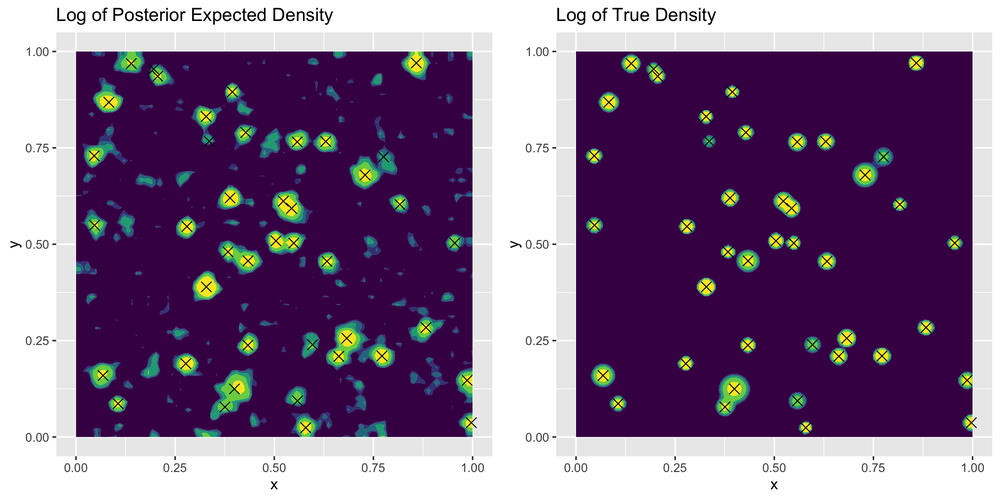}
    \caption{Comparison of $\log(\hat{f})$ and $\log(f)$, where $\hat{f}$ is the posterior expectation of $f$ under our mixture of random histograms model fitted to the data in Figure \ref{fig:synthdata-data}.}
    \label{fig:synthdata-density}
\end{figure}

Figure \ref{fig:ballet-dbscan-compare-param-sensitivity} shows how the performance of \texttt{DBSCAN} is highly sensitive to the choice of tuning parameter. It is interesting to note that the optimal parameters in this application are far from the values suggested by the heuristics proposed in \cite{schubert2017dbscan}, suggesting that in general they will be highly context dependent. We show the performance of optimally tuned \texttt{DBSCAN} in \Cref{fig:synthdata-dbscan-best}, noting that this tuning procedure required knowledge of the ground truth. The bounds of the 95\% credible ball of the \texttt{BALLET} point estimate for the synthetic data are shown in Figure \ref{fig:synthdata-ballet-bounds}. The associated \texttt{BALLET} point estimate is shown in Figure \ref{fig:synthdata-pe} of the main document. The complete results of the sensitivity and specificity of the various point estimates and bounds considered, averaged over the 100 synthetic datasets, are presented in Table \ref{tab:synthdata-tabular-results}.

\begin{figure}%
    \centering
    \subfloat[\centering DBSCAN performance vs $\MinPts$ tuning parameter]{{\includegraphics[width=0.45\textwidth]{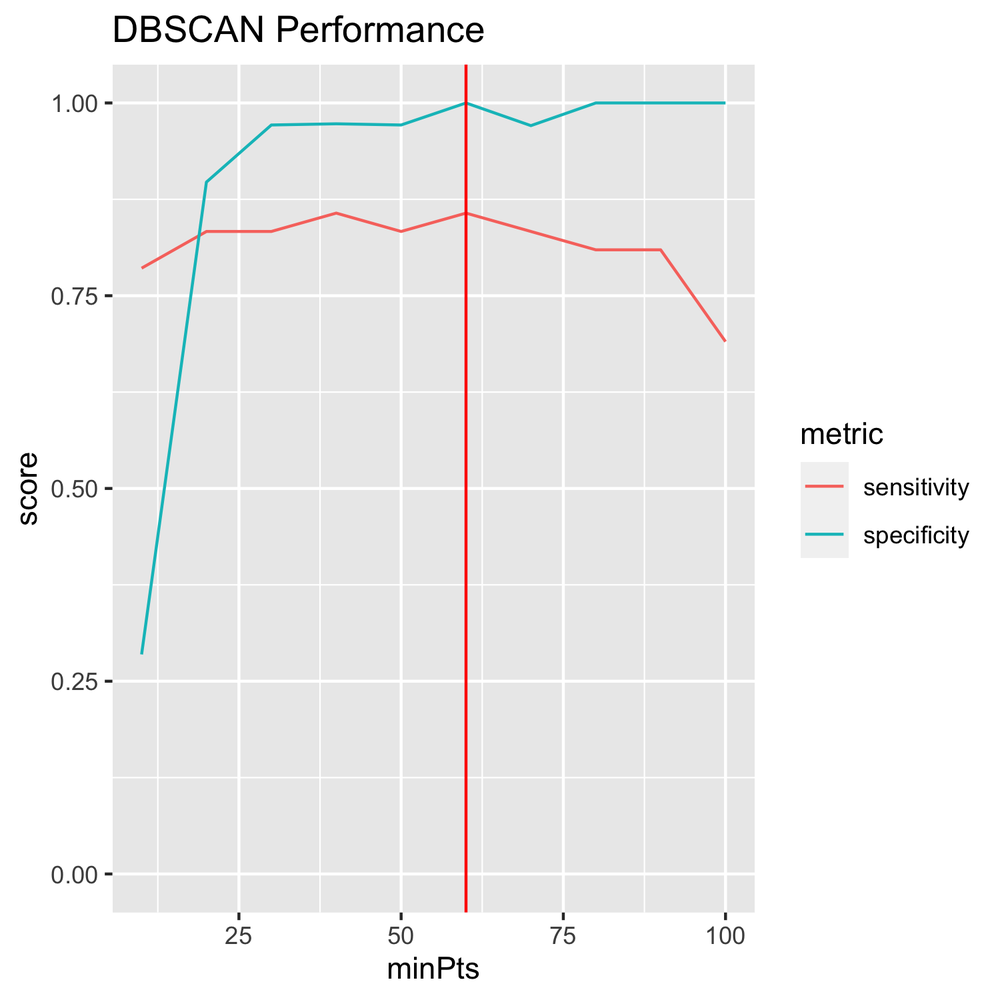} }}%
    \qquad
    \subfloat[\centering BALLET performance vs $\MinPts$ tuning parameter]{{\includegraphics[width=0.45\textwidth]{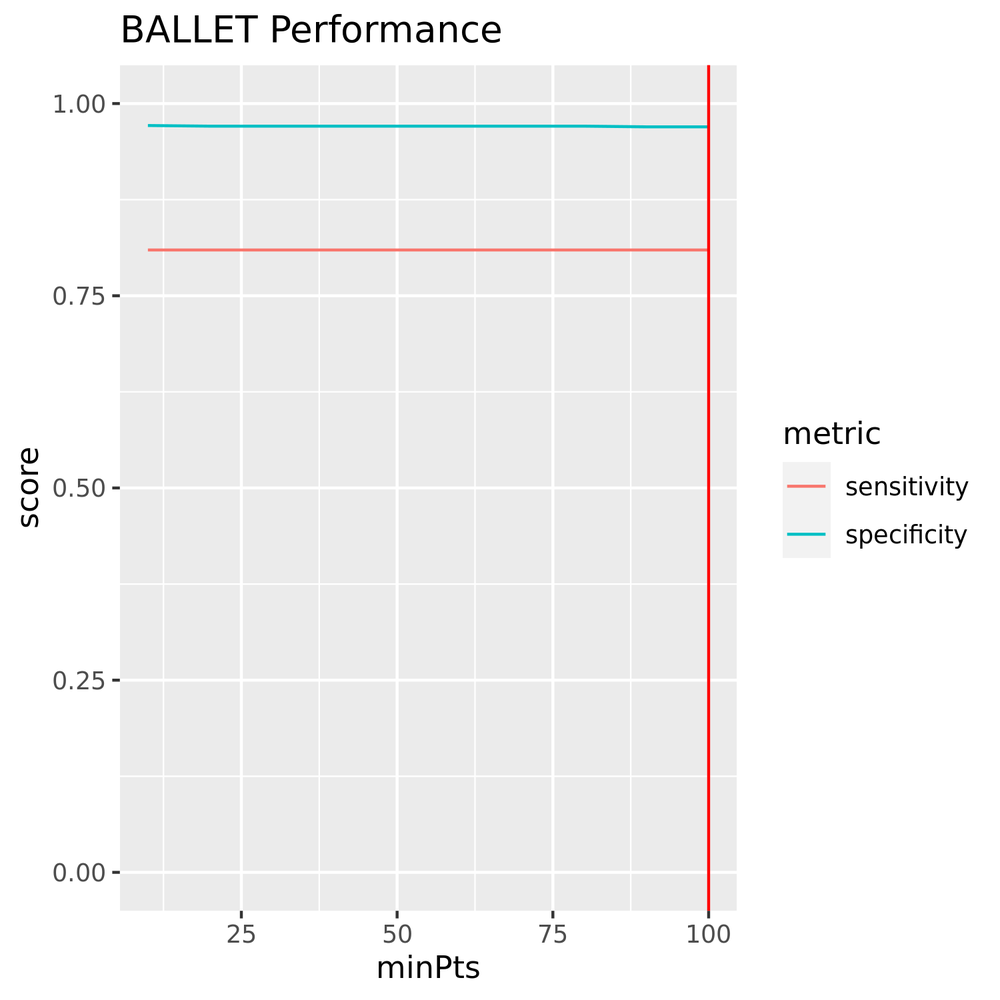} }}%
    \caption{The performance of \texttt{BALLET} and \texttt{DBSCAN} clusters as the tuning parameter $k$ (or equivalently $\MinPts$) varies. Vertical lines call attention to the value of \texttt{$k$} that exhibits the ``best" performance, as determined by the sum of the sensitivity and specificity.}
    \label{fig:ballet-dbscan-compare-param-sensitivity}
\end{figure}

\begin{figure}
    \centering
    \includegraphics[width=0.9\textwidth]{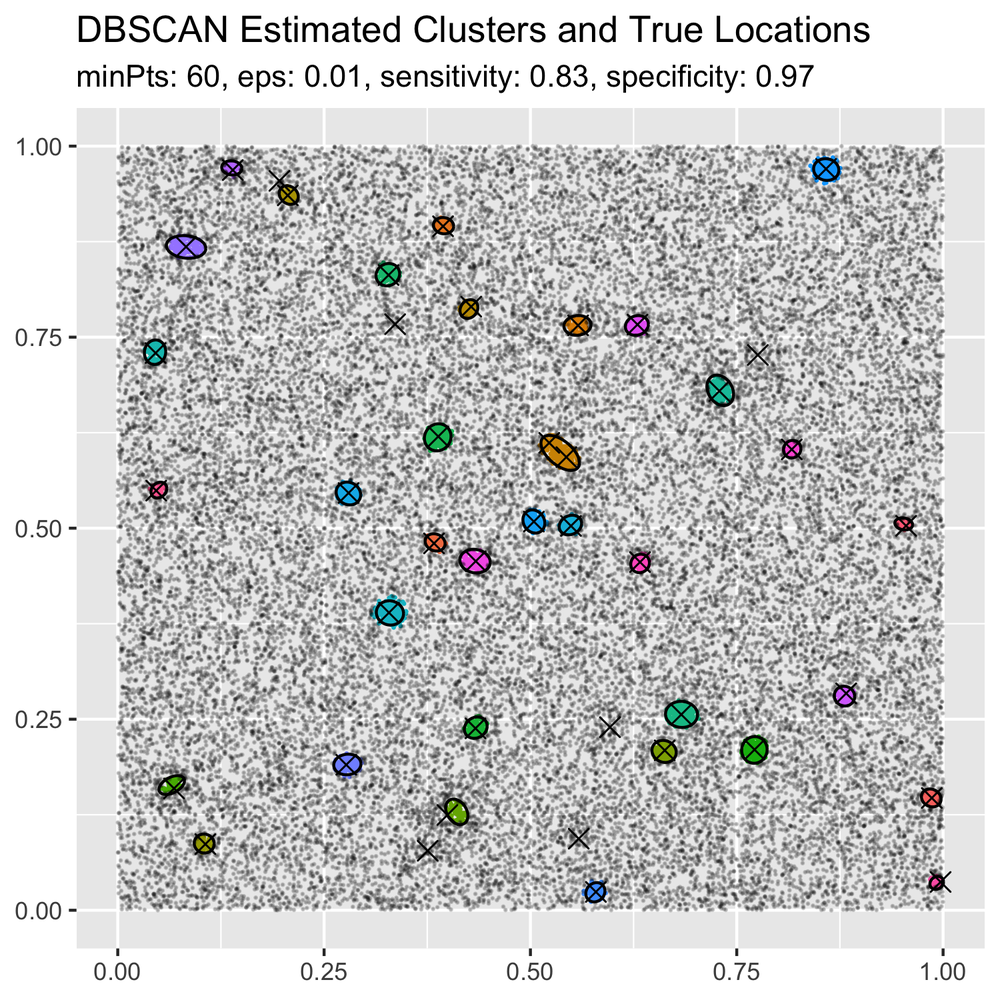}
    \caption{The result of fitting \texttt{DBSCAN} to the particular synthetic sky survey data using the optimal value of $\MinPts$ based on our simulation study.}
    \label{fig:synthdata-dbscan-best}
\end{figure}

\begin{table}[!ht]
\centering
\begin{tabular}{|l||l|l||l|l|l||l|}
\hline
& \texttt{DBSCAN} & \texttt{DBSCAN}$^1$ & \thead{\texttt{BALLET}\\ Lower} & \thead{\texttt{BALLET}\\ Est.} & \thead{\texttt{BALLET}\\ Upper} &  \thead{\texttt{BALLET}\\ Plugin}\\
\hline
Sensitivity & 0.86 & 0.79 & 0.62 & 0.78 & 0.89 & 0.78 \\
Specificity & 0.49 & 0.99 & 0.99 & 0.99 & 0.96 & 0.99  \\
Exact Match & 0.45 & 0.88 & 0.90 & 0.87 & 0.83 & 0.88 \\
\hline
\end{tabular}

\caption{Averaged results from applying \texttt{BALLET} and \texttt{DBSCAN} to 100 replicates of the synthetic sky survey data. For \texttt{BALLET}, we also provide the performance of upper and lower bounds for a 95\% credible ball centered at the point estimate. For \texttt{DBSCAN}, we provide averaged sensitivity and specificity for both our default choice of its tuning parameter and for its optimized parameter choice indicated as \texttt{DBSCAN}$^1$ (see \Cref{fig:ballet-dbscan-compare-param-sensitivity}). }
\label{tab:synthdata-tabular-results}
\end{table}

\begin{figure}
    \centering
    \includegraphics[width=0.9\textwidth]{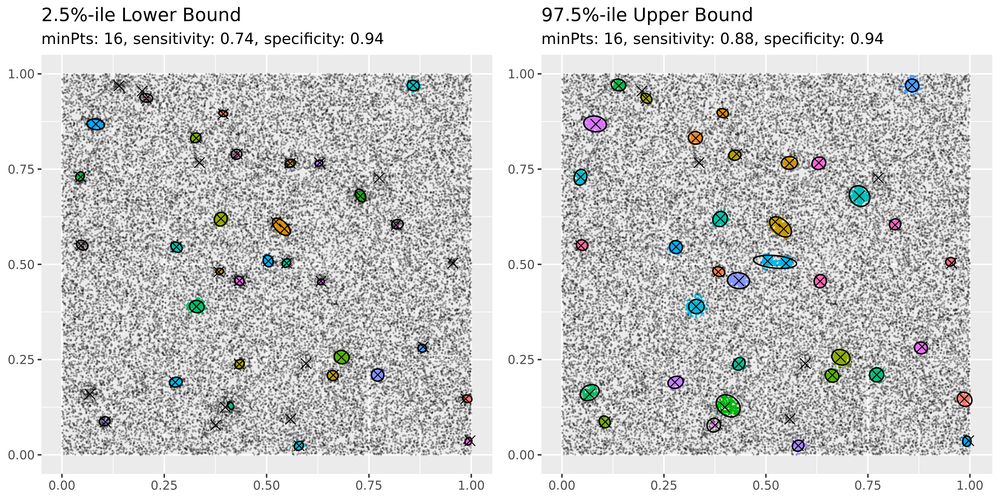}
    \caption{Upper and lower bounds for the 95\% credible ball centered at our BALLET clustering estimate for the particular synthetic dataset shown in in Figure \ref{fig:synthdata-data}.}
    \label{fig:synthdata-ballet-bounds}
\end{figure}

\section{Additional results from analysis of the sky survey data}
\label{s:app-edsgc-sky-survey}
In this section we provide additional results from the analysis of the Edinburgh-Durham Southern Galaxy Catalogue data which appeared in Section \ref{s:real-data-analysis} of the main text. In particular, we visualize the log of the posterior expectation of the data generating density in Figure \ref{fig:edsgc-density-estimate}, \texttt{DBSCAN} and \texttt{BALLET} fits based on our default value of $\MinPts = k_0 = \lceil \log_2(n) \rceil$ in \Cref{fig:edsgc-dbscan-heuristic,fig:edsgc-ballet-pe}, and an alternative \texttt{DBSCAN} fit using the optimal tuning parameters from the simulation study in Figure \ref{fig:edsgc-dbscan-tuned}. We present tabular results collecting the rate of coverage of the EDCCI and Abell catalogs, by the various point estimates and bounds we have considered, in Tables \ref{tab:edsgc-edcci-coverage-full} and \ref{tab:edsgc-abell-coverage-full}, respectively.
\begin{figure}
    \centering
    \includegraphics[width=0.9\textwidth]{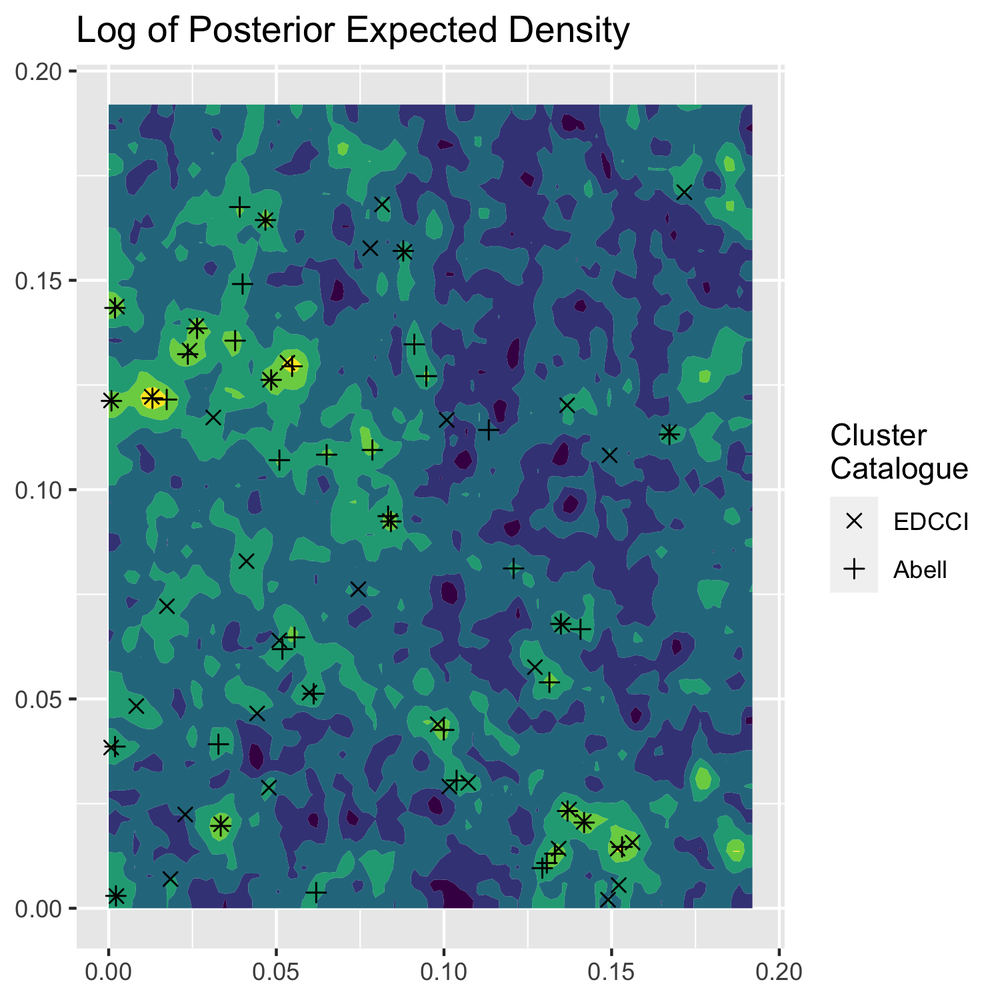}
    \caption{Log of the posterior expectation of the density for the Edinburgh-Durham Southern Galaxy Catalogue data under our mixture of random histograms model. For reference, we have superimposed galaxy clusters reported in the EDCCI and Abell cluster catalogs.}
    \label{fig:edsgc-density-estimate}
\end{figure}

\begin{figure}
	\centering
	\includegraphics[width=0.8\textwidth]{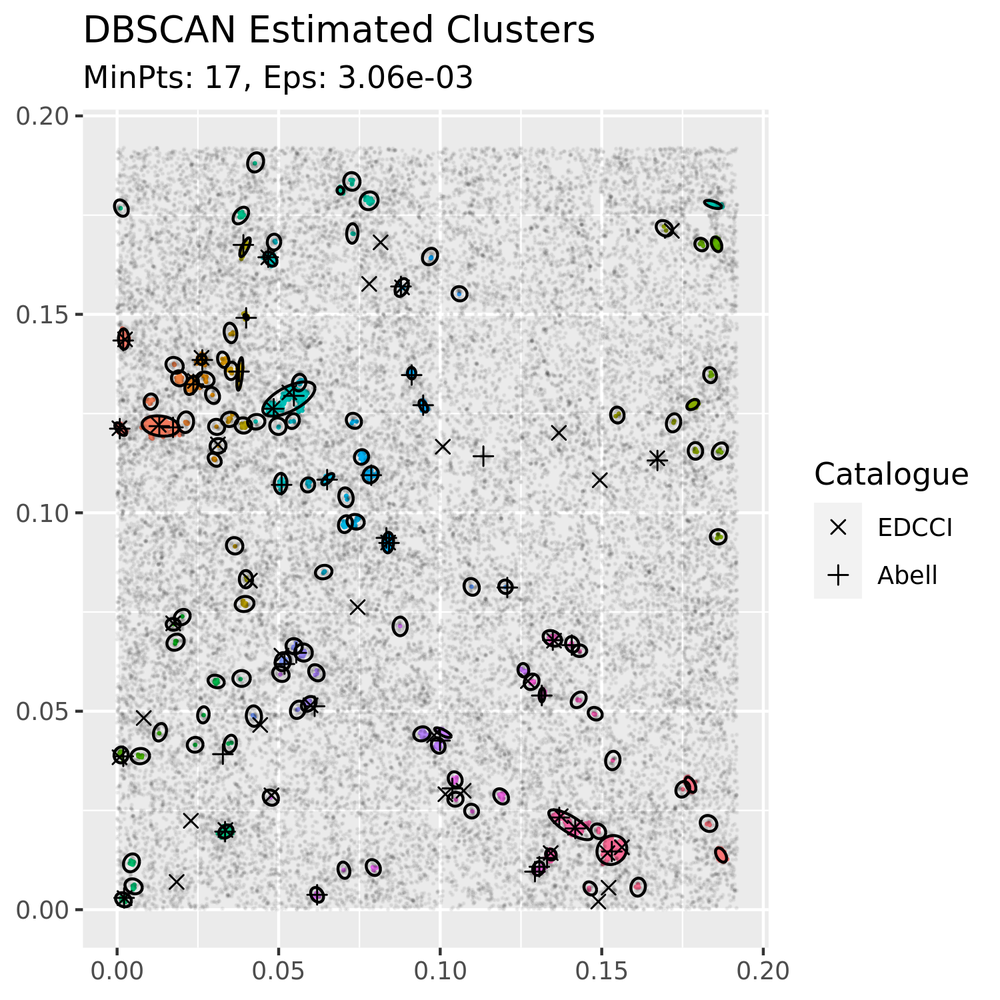}
	\caption{Result of applying \texttt{DBSCAN} to the Edinburgh-Durham Southern Galaxy Catalogue data using our default value of $\texttt{MinPts}$. Cluster centers from the two previously proposed cluster catalogs are plotted with black `+'s (Abell Catalog) and `X's (EDCCI).}
	\label{fig:edsgc-dbscan-heuristic}
\end{figure}

\begin{figure}
	\centering
	\includegraphics[width=0.8\textwidth]{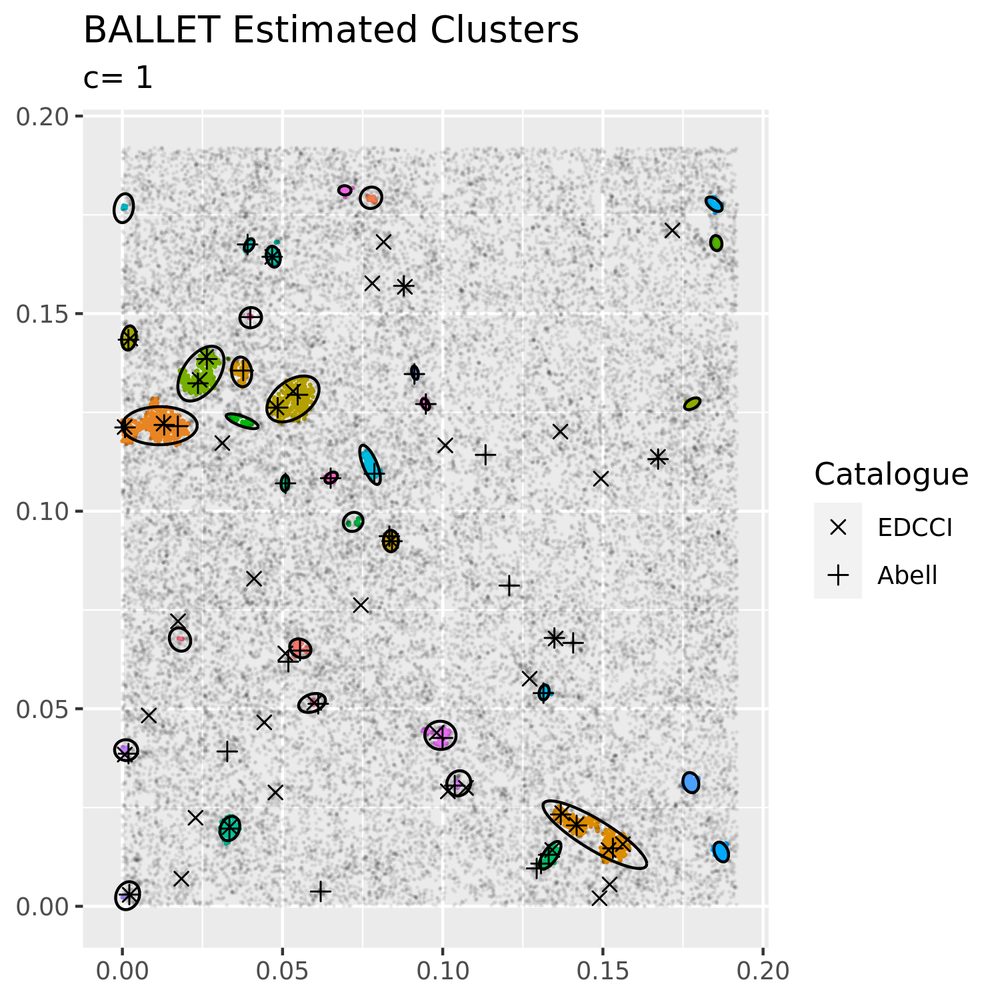}
	\caption{Results of applying \texttt{BALLET} Clustering to the Edinburgh-Durham Southern Galaxy Catalogue data, with 95\% credible bounds presented in \Cref{fig:edsgc-ballet-bounds}.}
	\label{fig:edsgc-ballet-pe}
\end{figure}

\begin{figure}
    \centering
    \includegraphics[width=0.9\textwidth]{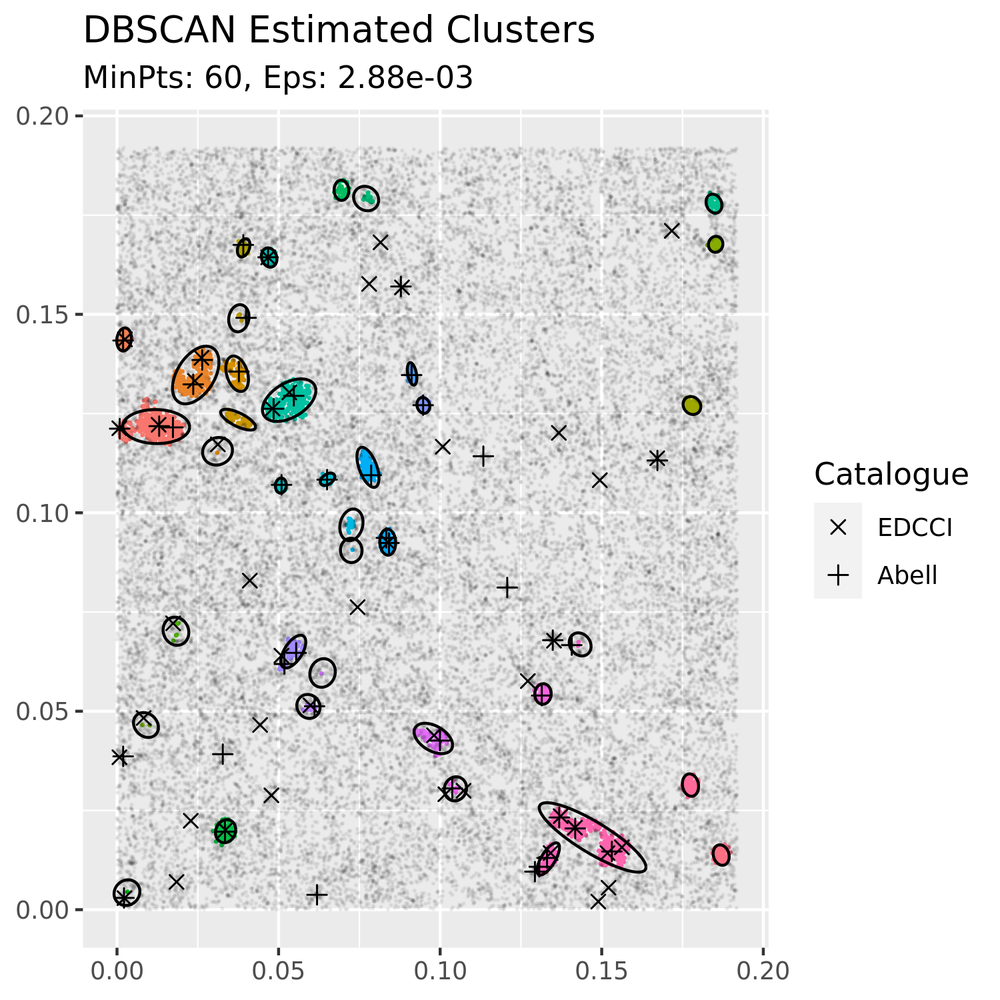}
    \caption{Result of applying \texttt{DBSCAN} to the Edinburgh-Durham Southern Galaxy Catalogue data using the tuning parameter that had optimal performance in our simulation study (\Cref{fig:ballet-dbscan-compare-param-sensitivity}).}
    \label{fig:edsgc-dbscan-tuned}
\end{figure}

\begin{table}[!ht]
\centering
\begin{tabular}{|l||l|l||l|l|l||l|}
\hline
& \texttt{DBSCAN} & \texttt{DBSCAN}$^1$ & \thead{\texttt{BALLET}\\ Lower} & \thead{\texttt{BALLET}\\ Est.} & \thead{\texttt{BALLET}\\ Upper} &\thead{\texttt{BALLET}\\ Plugin}\\
\hline
Sensitivity & 0.40& 0.37& 0.21 & 0.40 & 0.56 & 0.40 \\
Specificity & 0.15& 0.43& 0.73 & 0.40 & 0.34 & 0.42 \\
Exact Match & 0.13& 0.35& 0.67 & 0.26 & 0.29 & 0.28 \\
\hline
\end{tabular}
\caption{\texttt{DBSCAN} and \texttt{BALLET} clustering coverage of the suspected galaxy clusters listed in the Abell catalog. The column labeled \texttt{DBSCAN} reports the performance of the method with the default value of $\MinPts=16$, while \texttt{DBSCAN}$^1$ shows the performance of the method with the optimal value of $\MinPts=60$ chosen based on our simulation study.}
\label{tab:edsgc-abell-coverage-full}
\end{table}

\section{Theory details from \Cref{s:theory}}
\label{s:theory-appendix}

\subsection{Proof of \Cref{thm:consistency}}
\label{ss:proof-of-consistency}

The proof is a simple application of the metric properties of $D$. In particular,  
	\begin{align*}
		D\{\cbayes, \psi(f_0)\} &\leq E_{f \sim P_M(\cdot|\cX_n)} D\{\psitilde(f), \psi(f_0)\} + E_{f \sim P_M(\cdot|\cX_n)} D\{\psitilde(f), \cbayes\}\\
		&\leq 2 E_{f \sim P_M(\cdot|\cX_n)} D\{\psitilde(f), \psi(f_0)\},
	\end{align*}
	where the first line follows by taking expectation with respect to the posterior distribution $P_M(\cdot|\cX_n)$ after using the triangle inequality and symmetry for the metric $D$, while the second line follows by noting that the second term in the right hand side of the first line is no greater than the first term, since $\cbayes$ is given by \eqref{eq:density-based-estimator}. 
	Noting further that $D$ is bounded above by one, we obtain
	\begin{align*}
		E_{f \sim P_M(\cdot|\cX_n)} D\{\psitilde(f), \psi(f_0)\} 
		&\leq P_M\left(f : \rho(f,f_0) > K_n \epsilon_n |\cX_n\right) + \sup_{f: \rho(f,f_0)\leq K_n \epsilon_n} D\{\psitilde(f), \psi(f_0)\} \\
		&= \tau_{1}(\cX_n) + \tau_{2}(\cX_n),
	\end{align*}
	where  $\tau_{1}$ is defined in \Cref{ass:sup-norm-concentration}  and $\tau_{2}$ and constant $K_n$ are as defined in \Cref{ass:continuity-at-f0}. Since $K_n \to \infty$, these assumptions show that $\tau_{1}(\cX_n), \tau_{2}(\cX_n) \pconv 0$ as $n \to \infty$.

\subsection{Proof of \Cref{thm:IABender-is-a-metric}}
\label{ss:metric-properties}

It order to simplify the presentation of our proof we first introduce some notation. 
We note that any sub-partition $\bC = \{C_1, \ldots, C_k\} \in \ClustSpace[\cX_n]$ defines a binary ``co-clustering'' relation $\bC_R : \cX_n \times \cX_n \to \{0,1\}$ on pairs of data points, namely 
$$
\bC_R(x,y) \doteq \1{(x \notin A, y \notin A)} + \sum_{h=1}^k \1{(x \in C_h, y \in C_h)}
$$
where $A = \cup_{h=1}^k C_h$ is the set of active points in $\bC$. In other words, $\bC_R(x,y) = 1$ if $x, y \in \cX_n$ are both noise points or if they belong to a common cluster in $\bC$, and $\bC_R(x,y)=0$ otherwise. Given $\bC$, we can also obtain an indicator function of active points $\bC_A: \cX_n \to \{0,1\}$ such that $\bC_A(x)=1$ if and only if $x \in A$. In fact, knowing the binary functions $\bC_R$ and $\bC_A$ is sufficient to uniquely recover the sub-partition $\bC \in \ClustSpace[\cX_n]$. Indeed, this follows because $\bC_R$ is an equivalence relation on $\cX_n$, and the sub-partition $\bC$ can be recovered by dropping the inactive subset $\bC_A^{-1}(0)$ from the equivalence partition of $\cX_n$ induced by $\bC_R$.

We also introduce the following subscript-free notation for summation of a symmetric function $F: \cX_n \times \cX_n \to \R$ over pairs of distinct data points that lie in $S \subseteq \cX$:
$$
\sum_{x \neq y \in \cX_n \cap S} F(x,y) \doteq \sum_{\substack{1 \leq i < j \leq n \\ x_i, x_j \in S}} F(x_i, x_j) = \frac{1}{2} \sum_{\substack{i, j \in [n] \\ x_i, x_j \in S}} F(x_i, x_j) \1{(i \neq j)}.
$$

\begin{proof}[of \Cref{thm:IABender-is-a-metric}] Similar to analyses of Binder's loss, the first step in our proof is to note that $\Lia$ can be written as a sum of pairwise losses $\phi_{x, y}$ over pairs $x, y \in \cX_n$. In particular, fix any $\bC, \bC' \in \ClustSpace[\cX_n]$, and let $A = \bC_A^{-1}(1)$, $A' = \bC_A^{'-1}(1)$ and $I = \bC_A^{-1}(0)$, $I' = \bC_A^{'-1}(0)$ denote the active and inactive sets of $\bC$ and $\bC'$ , respectively.

	Taking $a = b$ and $m = m_{ia} = m_{ai}$ in \eqref{eq:ia-binder-loss}, we note
	\begin{align}
		\Lia(\bC, \bC') &= m (n-1) (|A \cap I'| + |I \cap A'|) + a \sum_{\substack{1 \leq i < j \leq n \\ x_i, x_j \in A \cap A'}} \1{\{\bC_R(x_i,x_j) \neq \bC'_R(x_i,x_j)\}}  \nonumber\\
		&= \sum_{x \neq y \in \cX_n} \phi_{x,y}(\bC,\bC') \label{eq:sum-representation}
	\end{align}
	where 
	$$
	\phi_{x, y}(\bC,\bC') = m \1{\{\bC_A(x) \neq \bC_A'(x)\}} + m \1{\{\bC_A(y) \neq \bC_A'(y)\}} + a \1{\{\bC_R(x,y) \neq \bC'_R(x,y)\}}\1{\{\bC_A(x)=\bC_A'(x) = \bC_A(y)=\bC_A'(y)\}}.
	$$
	In order to obtain \eqref{eq:sum-representation}, we have used the fact that the last term in $\phi_{x, y}(\bC, \bC')$ is zero when either one of $x$ or $y$ is outside the set $A \cap A'$, and the fact that the summation $\Sigma_{x \neq y \in \cX_n}$ over the first two terms in $\phi_{x, y}(\bC, \bC')$ is equal to $m (n-1) (|A \cap I'| + |I \cap A'|)$.
	
	Now we shall use \eqref{eq:sum-representation} to show that $D = \binom{n}{2}^{-1}\Lia$ is a metric that is bounded above by one when $a, m \leq 1$. Note that at most one out of the three indicator variables in $\phi_{x, y}$ can be non-zero for any instance, and hence $\phi_{x, y}$ is bounded above by one (in fact by $\max(a,m) \leq 1$) for each of the $\binom{n}{2}$ summation variables $x \neq y \in \cX_n$. This shows that $D$ is also bounded above by one. Further, the symmetry of $D$ in its arguments follows from the symmetry of $\phi_{x, y}$ in its arguments for every $x \neq y \in \cX_n$. 
	
	Next suppose $D(\bC, \bC') = 0$. Since the functions $\phi_{x, y}$ are non-negative, this shows that $\phi_{x, y}(\bC, \bC') = 0$ for each $x \neq y \in \cX_n$. Since $2m \geq a > 0$, the functions $\bC_A$ and $\bC'_A$ are equal (or equivalently that $A=A'$), and further that $\bC_R(x,y) = \bC'_R(x,y)$ either when $x, y \in A = A'$ or $x, y \in I = I'$. The latter condition is sufficient to show that the relations $\bC_R$ and $\bC'_R$ are equal since $\bC_R(x,y) = 0 = \bC'_R(x,y)$ when $x \in A, y \in I$ or $x \in I, y \in A$. Since the binary functions  $\bC_A$ and $\bC_R$ determine the sub-partition $\bC$, we have $\bC = \bC'$. 
	
	Finally, to demonstrate that $D$ satisfies the triangle inequality, it suffices to show that for each $x \neq y \in \cX_n$, we have the triangle inequality $\phi_{x, y}(\bC, \bC'') \leq \phi_{x, y}(\bC, \bC') + \phi_{x, y}(\bC', \bC'')$ for any sub-partitions $\bC, \bC', \bC'' \in \ClustSpace[\cX_n]$. Indeed when either $\bC_A(x) \neq \bC''_A(x)$ or $\bC_A(y) \neq \bC''_A(y)$, the triangle inequality for $\phi_{x, y}$ follows from the inequality:
	\begin{align*}
		\1{\{\bC_A(z) \neq \bC_A''(z)\}} &\leq \1{\{\bC_A(z) \neq \bC_A'(z)\}} + \1{\{\bC'_A(z) \neq \bC_A''(z)\}} \qquad z \in \{x, y\}.
	\end{align*}
	Otherwise, let us assume that the previous condition does not hold. Let us further suppose that $\phi_{x, y}(\bC, \bC'') > 0$ or else there is nothing to show. This means that we are under the case $\phi_{x, y}(\bC, \bC'') = a$,  $\bC_A(x) =  \bC''_A(x) = \bC_A(y) =  \bC''_A(y)$, and $\bC_R(x,y) \neq \bC''_R(x,y)$. If $\bC'_A(x) \neq \bC_A(x) = \bC''_A(x)$ (or analogously $\bC'_A(y) \neq \bC_A(y) = \bC''_A(y)$) then the triangle inequality is satisfied as $\phi_{x, y}(\bC, \bC') + \phi_{x, y}(\bC', \bC'') \geq m \1{\{\bC_A(x) \neq \bC_A'(x)\}} + m \1{\{\bC'_A(x) \neq \bC_A''(x)\}} = 2m \geq a = \phi_{x, y}(\bC, \bC'')$. Otherwise, the only remaining case is that $\bC_A(x) = \bC'_A(x) = \bC''_A(x) = \bC_A(y) = \bC'_A(y) = \bC''_A(x)$. Then the triangle inequality is satisfied since
	\begin{align*}
		\phi_{x, y}(\bC, \bC'') = a \1{\{\bC_R(x,y) \neq \bC''_R(x,y)\}} &\leq a \1{\{\bC_R(x,y) \neq \bC'_R(x,y)\}} + a \1{\{\bC'_R(x,y) \neq \bC''_R(x,y)\}} \\
		&= \phi_{x, y}(\bC, \bC') + \phi_{x, y}(\bC', \bC'').
	\end{align*}
	Hence, we have verified the triangle inequality for $\phi_{x, y}$, and hence also for $D$. Combined with the non-negativity of $D$, we have shown that $D$ is a metric.
\end{proof}

\subsection{Proof of \Cref{lem:finite-sample-bound-for-assumption2}}
\label{ss:prove-level-set-clustering}

Letting $\cX = \R^d$, we begin with the necessary assumptions on the unknown data density $f_0: \cX \to \R$ and the threshold level $\lambda > 0$. 
Let  $S_\lambda = \{x \in \R^d : f_0(x) \geq \lambda\}$ denote the level set of the unknown data density $f_0$ at threshold $\lambda \in (0,\infty)$. We make the following assumptions.

\begin{assumption} (Continuity with vanishing tails)
\label{ass:uniform-continuity}
The density $f_0: \R^d \to [0,\infty)$ is continuous and satisfies $\lim_{\|x\| \to \infty} f_0(x) = 0$.
\end{assumption}

\begin{lemma} 
	\label{lem:unif-continuity}
	If \Cref{ass:uniform-continuity} holds then $f_0$ is uniformly continuous.
\end{lemma}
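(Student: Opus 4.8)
The plan is the standard ``$\epsilon$ outside a large ball, Heine--Cantor inside'' argument, using the tail decay of $f_0$ to control the complement of a compact set. Fix $\epsilon > 0$. First I would invoke the hypothesis $\lim_{\|x\| \to \infty} f_0(x) = 0$ to choose a radius $R > 0$ such that $0 \leq f_0(x) < \epsilon/2$ for every $x$ with $\|x\| \geq R$. Next, since $f_0$ is continuous on the compact ball $\overline{B}(0, R+1)$, the Heine--Cantor theorem gives that $f_0$ restricted to $\overline{B}(0, R+1)$ is uniformly continuous; hence there is a $\delta \in (0, 1]$ such that $|f_0(x) - f_0(y)| < \epsilon$ whenever $x, y \in \overline{B}(0, R+1)$ and $\|x - y\| < \delta$.

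Then I would check that this single $\delta$ works on all of $\R^d$. Take arbitrary $x, y \in \R^d$ with $\|x - y\| < \delta \leq 1$. If both $x$ and $y$ lie in $\overline{B}(0, R+1)$, the bound $|f_0(x) - f_0(y)| < \epsilon$ holds by the choice of $\delta$. Otherwise at least one of them, say $x$, satisfies $\|x\| > R+1$; then $\|y\| \geq \|x\| - \|x - y\| > R + 1 - 1 = R$, and of course $\|x\| > R$ as well, so both $f_0(x)$ and $f_0(y)$ are strictly less than $\epsilon/2$, and the triangle inequality gives $|f_0(x) - f_0(y)| < \epsilon$. Since $\epsilon > 0$ was arbitrary, $f_0$ is uniformly continuous on $\R^d$.

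There is essentially no real obstacle here, as the argument is routine; the only point deserving a moment's care is the radius bookkeeping. One must enlarge the compact ball slightly (radius $R+1$ rather than $R$) and insist that $\delta \leq 1$, so that the borderline case of one argument just inside and the other just outside radius $R$ is safely absorbed into the tail estimate rather than slipping between the two cases.
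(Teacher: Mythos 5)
Your proposal is correct and follows essentially the same argument as the paper: control the tail by choosing a compact set outside of which $f_0 < \epsilon/2$, apply Heine--Cantor on a slightly enlarged compact set, take $\delta \leq 1$, and handle the boundary case by noting both points must then lie in the tail region. The only cosmetic difference is that you use a Euclidean ball where the paper uses a cube $[-K-1,K+1]^d$.
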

\begin{proof}
	Fix any $\epsilon > 0$. Then since $f_0$ has vanishing tails, there is a $K > 0$ such that $\sup_{x \in \R^d \setminus ([-K, K]^d)} f_0(x) \leq \epsilon/2$, and since $f_0$ is continuous on the compact set $H = [-K-1, K+1]^d$, there is a $\delta \in (0,1)$ such that $|f_0(x)-f_0(y)| \leq \epsilon$ whenever $\|x-y\| \leq \delta$ and $x, y \in H$. Finally if $x, y \in \R^d$ are such that $\|x-y\| \leq 1$ and $\{x, y\} \cap (\R^d \setminus H) \neq \emptyset$ then $x, y \in \R^d \setminus [-K,K]^d$. Thus $|f_0(x)-f_0(y)| \leq f_0(x) + f_0(y) \leq \epsilon/2 + \epsilon/2 = \epsilon$. Hence we have shown that there is a $\delta \in (0,1)$ such that $|f_0(x)-f_0(y)| \leq \epsilon$ whenever $\|x-y\| \leq \delta$ and $x, y \in \R^d$. Since $\epsilon > 0$ is arbitrary, $f_0$ is uniformly continuous.
\end{proof}

\begin{assumption} (Fast mass decay around level $\lambda$) There are constants $C, \beps > 0$ such that 
$\int_{\{x \in \R^d : |f_0(x) - \lambda| \leq \epsilon \}} f_0(x) dx \leq C \epsilon$ for all $\epsilon \in (0, \beps)$.  
    \label{ass:mass-decay}
\end{assumption}

\Cref{ass:mass-decay} is adapted from \cite{rinaldo2010generalized}, and intuitively prevents the density from being too flat around the level $\lambda$. In particular, if $f_0$ satisfies $\|\nabla f_0(x)\| > 0$ for Lebesgue-almost-every $x$, then Lemma 4 in \cite{rinaldo2010generalized} shows that \Cref{ass:mass-decay} will hold for Lebesgue-almost-every $\lambda \in (0, \|f_0\|_\infty)$. Additionally, if $f_0$ is smooth and has a compact support, the authors show that the set of $\lambda \in (0, \|f_0\|_\infty)$ for which \Cref{ass:mass-decay} does not hold is finite.

\begin{assumption} (Stable connected components at level $\lambda$) For any $\lambda_l < \lambda_h \in [\lambda - \beps, \lambda + \beps]$, and $x, y \in S_{\lambda_h}$: 
    \begin{enumerate}
        \item If $x, y$ are disconnected in  $S_{\lambda_h}$, then $x, y$ are also disconnected in  $S_{\lambda_l}$. 
        \item If $x, y$ are connected in  $S_{\lambda_l}$, then $x, y$ are also connected in $S_{\lambda_h}$.
    \end{enumerate}
    \label{ass:robust-cluster}
\end{assumption}

Informally, \Cref{ass:robust-cluster} states that the connected components of the level-set $S_{\lambda'}$ do not  merge or split as $\lambda'$ varies between $(\lambda-\beps, \lambda + \beps)$. When combined with \Cref{ass:uniform-continuity}, this assumption ensures that the level set clusters vary continuously with respect to the level $\lambda$. Various versions of such assumptions have previously appeared in the literature like Assumption C2 in  \cite{rinaldo2010generalized} and Definition 2.1 in \cite{sriperumbudur2012consistency}.

We now prove some intermediate theory on level set estimation that will be useful in the proof of \Cref{lem:finite-sample-bound-for-assumption2}. 
Given data points $\cX_n = \{x_1, \ldots, x_n\}$ suppose we have a density estimator $f$ that approximates $f_0$. 
For a suitably small choice of $\delta > 0$, we estimate the level set $S_\lambda$ by the $\delta$ diameter tube around the active data points, namely:
$$
T_\delta(A_{f, \lambda}) = \bigcup_{\substack{ x \in A_{f, \lambda}}}  B(x,\delta/2),
$$ 
where $A_{f, \lambda} = \{x \in \cX_n : f(x) \geq \lambda \}$ is the set of active data points and $B(x,\delta/2)$ is the open ball of radius $\delta/2$ around $x$. %
To emphasize that $T_\delta(A_{f, \lambda})$ is an  estimator for $S_\lambda$, we denote it as $\hat{S}_{\delta, \lambda}(f) \doteq T_\delta(A_{f, \lambda})$ in the sequel.  

The following lemma shows that the level set estimator $\hat{S}_{\delta, \lambda}(f)$ approximates the level sets of the original density $S_\lambda$ as long as the quantities $\|f_0 - f\|_\infty$ and $\delta > 0$ are suitably small. This result extends Lemma~3.2 in \cite{sriperumbudur2012consistency} to the case when $f$ is an arbitrary approximation to $f_0$. Our proof hinges on using \Cref{cor:non-empty} below rather than a specific kernel density estimator as in \cite{sriperumbudur2012consistency}. 

\begin{lemma}
\label{lem:level-set-containment}
Suppose $\cX = \R^d$ and $f_0: \cX \to [0, \infty)$ is uniformly continuous. Then 
\begin{equation}
	\label{eq:cont-coeff}
	\heps{\eta} \doteq \max\{h \geq 0 : \sup_{x, y \in \cX,\|x-y\| \leq h} |f(x) - f(y)| \leq \eta\}
\end{equation}
is a positive number for each $\eta > 0$. %
Given observations $x_1, \ldots, x_n$ drawn independently from $f_0$ with $n \geq 16$, with probability at least $1-1/n$ we have
$$
S_{(\lambda + \|f_0 - f\|_\infty + \eta)} \subseteq \hat{S}_{\delta, \lambda}(f) \subseteq S_{(\lambda - \|f_0 - f\|_\infty - \eta)},
$$
uniformly over all functions $f: \R^d \to \R$, and constants $\eta, \lambda > 0$ such that $\delta \in [\rnld, 2 \heps{\eta}]$, where $\rnld\doteq\rnldval$ and $v_d$ is the volume of the unit Euclidean ball in $\R^d$.
\end{lemma}

Before we prove the above lemma, we will establish \Cref{cor:non-empty} which provides a lower-bound on the parameter $\delta$ to ensure that the  $\delta$-ball centered around any point in the level set $S_\lambda$ will contain at least one observed sample. This is a corollary of the uniform law of large numbers result from \cite{boucheron2005theory}. We use the following version:

\begin{lemma}\cite[Theorem~15]{chaudhuri2010rates} Let $\cG$ be a class of functions from $\cX$ to $\{0,1\}$ with VC dimension $d < \infty$, and let $P$ be a probability distribution on $\cX$. Let $E$ denote the expectation with respect to $P$. Suppose $n$ points are drawn independently from $P$, and let $E_n$ denote expectation with respect to this sample. Then for any $\delta > 0$, 
$$
-\min(\beta_n^2 + \beta_n \sqrt{E g}, \beta_n \sqrt{E_n g}) \leq E g - E_n g \leq \min(\beta_n^2 + \beta_n \sqrt{E_n g}, \beta_n \sqrt{E g})
$$
holds for all $g \in \cG$ with probability at least $1-\delta$, where $\beta_n = \sqrt{(4/n)\{d \ln 2n + \ln(8/\delta)\}}$.
\label{lem:unif-convergence}
\end{lemma}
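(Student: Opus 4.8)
This is a classical relative (ratio-type) Vapnik--Chervonenkis deviation inequality, so the plan is to prove it by normalized symmetrization combined with the Sauer--Shelah bound on the growth function of $\cG$. First I would reduce the two-sided statement with the $\min$ to a single one-sided ratio bound: it suffices to show that, with probability at least $1-\delta$, \emph{both} $\abs{Eg - E_n g}\le \beta_n\sqrt{Eg}$ and $\abs{Eg - E_n g}\le \beta_n^2 + \beta_n\sqrt{E_n g}$ hold simultaneously for every $g\in\cG$, since the $\min$ of two valid upper bounds is again one. The second inequality follows from the first by elementary algebra: reading $Eg\le E_n g + \beta_n\sqrt{Eg}$ as a quadratic inequality in $\sqrt{Eg}$ gives $\sqrt{Eg}\le \beta_n + \sqrt{E_n g}$, hence $Eg - E_n g\le \beta_n^2 + 2\beta_n\sqrt{E_n g}$ (the spare factor of $2$ being absorbed by proving the first bound with a slightly smaller constant than $\beta_n$, which only affects the numerical factor inside $\beta_n$), while in the opposite regime $E_n g - Eg\le \beta_n\sqrt{Eg}\le \beta_n^2 + \beta_n\sqrt{E_n g}$ is immediate from the first bound. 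Thus the whole content is the ratio bound $\sup_{g\in\cG}\abs{Eg-E_n g}/\sqrt{Eg}\le \beta_n$.

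To establish that ratio bound I would introduce a ghost sample $X_1',\dots,X_n'\iid P$ independent of $\cX_n$, with empirical average $E_n'$, and apply the ratio-type symmetrization lemma: $P\!\left(\sup_g (Eg - E_n g)/\sqrt{Eg}>\epsilon\right)$ is bounded by a constant multiple of $P\!\left(\sup_g (E_n' g - E_n g)/\sqrt{(E_n g + E_n' g)/2\,\vee\,1/n}>\epsilon'\right)$ for a slightly smaller $\epsilon'$. The floor $1/n$ in the denominator is exactly what produces the additive $\beta_n^2$ term in the regime $nEg = O(1)$; and replacing $\sqrt{Eg}$ by the pooled empirical deviation costs only a constant, because when $Eg\gtrsim\beta_n^2$ the ghost-sample mean concentrates around $Eg$, while when $Eg\lesssim\beta_n^2$ the assertion is already implied by the additive term.

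Conditioning on the pooled sample $Z=(X_1,\dots,X_n,X_1',\dots,X_n')$, the collection of induced label vectors $\{(g(X_1),\dots,g(X_n),g(X_1'),\dots,g(X_n')):g\in\cG\}$ has cardinality at most the growth function $S_\cG(2n)$, which Sauer--Shelah bounds by $(2n)^d$ since $\cG$ has VC dimension $d$. A union bound over these at most $(2n)^d$ dichotomies reduces the problem to a fixed $\{0,1\}$-valued label vector; for such a vector the only remaining randomness is the uniform random partition of the $2n$ labels into ``sample'' and ``ghost,'' and a Bernstein-type bound for sampling without replacement---using $\operatorname{Var}(g)\le Eg$ to legitimize the $\sqrt{Eg}$ normalization---gives a per-dichotomy tail of order $e^{-n\epsilon^2/4}$. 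Multiplying by $(2n)^d$, absorbing the symmetrization constant, and setting the result equal to $\delta$ yields $\epsilon^2 = (4/n)\big(d\ln(2n)+\ln(8/\delta)\big)=\beta_n^2$; the symmetric deviation $\sup_g (E_n g - Eg)/\sqrt{Eg}$ is handled identically, and a final union bound over directions and normalizations closes the argument.

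The main obstacle is the normalized symmetrization step: ordinary additive symmetrization followed by Hoeffding only gives $\sup_g\abs{Eg-E_n g}\lesssim\beta_n$, i.e.\ the crude rate with $\sqrt{1/n}$ in place of $\sqrt{Eg}$. Recovering the multiplicative $\sqrt{Eg}$ factor forces the more delicate normalization by the pooled empirical deviation together with the $1/n$ floor, and a careful comparison of $Eg$ with that pooled quantity across the two regimes $Eg\gtrsim\beta_n^2$ and $Eg\lesssim\beta_n^2$. Everything else---Sauer--Shelah, the union bound over dichotomies, the per-dichotomy Bernstein estimate, and the quadratic manipulation converting the $\sqrt{Eg}$-normalized bound into the $\beta_n^2 + \beta_n\sqrt{E_n g}$ form---is routine.
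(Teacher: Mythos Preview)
The paper does not prove this lemma at all: it is quoted verbatim as Theorem~15 of \cite{chaudhuri2010rates} and used as a black box in the proof of Corollary~\ref{cor:non-empty}. So there is no ``paper's own proof'' to compare against; your sketch is simply supplying what the paper outsources to the literature.

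Your outline is the standard route to relative VC deviation bounds (normalized symmetrization with a $1/n$ floor in the denominator, Sauer--Shelah to bound the number of dichotomies on the pooled sample, then a Bernstein-type tail per dichotomy exploiting $\operatorname{Var}(g)\le Eg$), and it is essentially correct. One small arithmetic remark: in your reduction you worry about a stray factor of~$2$, but in fact none appears. From $\abs{Eg-E_n g}\le\beta_n\sqrt{Eg}$ one gets $\sqrt{Eg}\le\beta_n+\sqrt{E_n g}$ via the quadratic, and then directly
\[
Eg-E_n g \;\le\; \beta_n\sqrt{Eg} \;\le\; \beta_n\bigl(\beta_n+\sqrt{E_n g}\bigr) \;=\; \beta_n^2+\beta_n\sqrt{E_n g},
\]
so no constant needs to be absorbed. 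The genuinely delicate step, as you correctly flag, is the ratio-type symmetrization; everything else is routine.
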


\begin{corollary}  %
Suppose $\cX_n = \{x_1, \ldots, x_n\}$ are drawn independently from $f_0$ and $n \geq 16$. Then with probability at least $1-1/n$, we have $
\cX_n \cap B \neq \emptyset$ for each  Euclidean ball $B \subseteq \R^d$ such that $\int_{B} f_0(x) dx \geq  \frac{16 d \ln n}{n}$.
\label{cor:non-empty}
\end{corollary}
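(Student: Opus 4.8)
The plan is to apply the uniform convergence bound of \Cref{lem:unif-convergence} to the class $\cG = \{\,\1{x \in B} : B \text{ a Euclidean ball in } \R^d\,\}$. This family has VC dimension at most $d+1$ (a ball $\{x:\|x-c\|^2\le r^2\}$ is a halfspace in the lifted feature map $x\mapsto(\|x\|^2,x_1,\dots,x_d)$ with one coordinate of the normal vector fixed), so in particular its VC dimension is $O(d)$. Take $P$ to be the distribution with density $f_0$, so that for a ball $B$ we have $E\,g = \int_B f_0(x)\,dx =: P(B)$ and $E_n g = n^{-1}|\cX_n \cap B| =: P_n(B)$.

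Setting the failure probability in \Cref{lem:unif-convergence} to $1/n$, I obtain that with probability at least $1-1/n$, simultaneously over all balls $B$,
\[
|P(B) - P_n(B)| \;\le\; \beta_n^2 + \beta_n\sqrt{P_n(B)}, \qquad \beta_n = \sqrt{\tfrac{4}{n}\big((d+1)\ln 2n + \ln(8n)\big)},
\]
where I keep only the first of the two terms in the minimum appearing in the lemma. The key observation is then: on this event, if a ball $B$ contains no sample point, i.e.\ $P_n(B)=0$, the displayed inequality forces $P(B) \le \beta_n^2$. Contrapositively, on this event every ball with $P(B) > \beta_n^2$ must contain at least one observation.

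It remains to absorb $\beta_n^2$ into the stated threshold. For $n \ge n_0$ with $n_0$ a universal constant one has $\ln 2n \le 2\ln n$ and $\ln 8n \le 2\ln n$, so $\beta_n^2 = \tfrac{4}{n}\big((d+1)\ln 2n + \ln 8n\big) \le \tfrac{4}{n}\big(2(d+1)+2\big)\ln n \le \tfrac{C_0 d\ln n}{n}$ for a suitable universal $C_0$ (e.g.\ $C_0 = 24$ works, and taking $C_0$ slightly larger makes the last inequality strict). Hence $\int_B f_0 \ge C_0 d\ln n/n$ implies $P(B) > \beta_n^2$, which on the above event gives $\cX_n \cap B \ne \emptyset$; this is exactly the claim, with the constants $C_0,n_0$ produced along the way.

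I do not expect a genuine obstacle here. The only points needing a little care are (i) quoting the correct $O(d)$ bound on the VC dimension of the family of Euclidean balls, and (ii) using precisely the version of the bound in \Cref{lem:unif-convergence} that carries the additive $\beta_n^2$ term, since it is that term — not the $\beta_n\sqrt{E\,g}$ term, which would require knowing $P(B)$ a priori — that controls $P(B)$ when $P_n(B)=0$.
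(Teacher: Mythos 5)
Your proof is correct and follows essentially the same route as the paper: same function class, same VC bound of $d+1$, same application of \Cref{lem:unif-convergence} with failure probability $1/n$, and the same conclusion that $P_n(B)=0$ forces $P(B)\le\beta_n^2$, after which $\beta_n^2$ is absorbed into $C_0 d\ln n/n$. The only (immaterial) difference is that you use the $\beta_n^2+\beta_n\sqrt{E_n g}$ branch of the minimum while the paper uses the $\beta_n\sqrt{Eg}$ branch — which, contrary to your remark (ii), also works without knowing $P(B)$ a priori, since $P(B)\le\beta_n\sqrt{P(B)}$ already implies $P(B)\le\beta_n^2$.
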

\begin{proof} Let $\cG = \{\1{B(x, r)} | x \in \R^d \text{ and } r > 0\}$ be the class of indicator functions of all the Euclidean balls, and note that the VC dimension of spheres in $\R^d$ is $d+1$ (e.g.~\cite{wainwright2019}). Lemma \ref{lem:unif-convergence} then states that with probability at least $1-1/n$, 
$$
P(B) - P_n(B) \leq \beta_n \sqrt{P(B)}
$$
for any Euclidean ball $B \subseteq \R^d$, where $P_n(B) = \frac{1}{n} \sum_{i=1}^n \1{(x_i \in B)}$ is the empirical distribution function and $\beta_n = \sqrt{(4/n)\{(d+1) \ln (2n) + \ln (8n)\}}$. In particular, as long as this event holds and $P(B) > \beta_n^2$, one has $P_n(B) > 0$ and hence $\cX_n \cap B \neq \emptyset$. The proof is completed by noting that $\beta_n^2 \leq \frac{16d \ln n}{n}$ whenever $n \geq 16$.
\end{proof}

\begin{proof}[of \Cref{lem:level-set-containment}] With probability at least $1-1/n$ the event in \Cref{cor:non-empty} holds; we will henceforth condition on this event. Next, let $v_d = \frac{\pi^{d/2}}{\Gamma(d/2+1)}$ be the volume of the unit Euclidean sphere in $d$ dimensions and note that  $\lambda v_d (\delta/2)^d \geq \frac{16 d \ln n}{n}$ whenever $\delta \geq \rnld \doteq \rnldval$. This shows that for any $x \in \cX$ 
\begin{equation}
	\label{eq:non-empty-balls}
	\cX_n \cap B(x,\delta/2) \neq \emptyset \qquad \text{ whenever }  \inf_{y \in B(x,\delta/2)} f_0(y) \geq \lambda,
\end{equation}
and further since $\delta/2 \leq \heps{\eta}$ that
\begin{equation}
\label{eq:unif-cont}
\sup_{y \in B(x,\delta/2)} |f_0(y)-f_0(x)| \leq \eta.
\end{equation}

We are now ready to prove our main statement in \Cref{lem:level-set-containment}. We first show the inclusion $\hat{S}_{\delta, \lambda}(f) \subseteq S_{(\lambda - \|f_0 - f\|_\infty - \eta)}$. Indeed, for any $x \in \hat{S}_{\delta, \lambda}(f)$ there is a $y \in \cX_n$ such that $x \in B(y, \delta/2)$ and $f(y) \geq \lambda$. The inequalities
 $$
 f_0(x) \geq f_0(y) - \eta \geq f(y) - \abs{f_0(y) - f(y)} - \eta \geq \lambda  - \abs{f_0(y) - f(y)} - \eta
 $$
 then show $x \in S_{(\lambda - \|f_0 - f\|_\infty - \eta)}$. Since $x \in \hat{S}_{\delta, \lambda}(f)$ was arbitrary our inclusion follows.
 
Next, we show the inclusion $S_{(\lambda + \|f_0 - f\|_\infty + \eta)} \subseteq \hat{S}_{\delta, \lambda}(f)$.  Pick an $x \in S_{(\lambda + \|f_0 - f\|_\infty + \eta)}$ and note by \eqref{eq:unif-cont} that $\inf_{y \in B(x,\delta/2)} f_0(y) \geq f_0(x) - \eta \geq \lambda + \|f_0 - f\|_\infty$. Thus \eqref{eq:non-empty-balls} shows the existence of  some $z \in B(x,\delta/2) \cap \cX_n$. Further $f(z) \geq f_0(z) - \abs{f_0(z)-f(z)} \geq f_0(x)  - \eta - \|f-f_0\|_\infty \geq \lambda$ since $f_0(z) \geq f_0(x) - \eta$ and $x \in S_{(\lambda + \|f_0 - f\|_\infty + \eta)}$. Thus we have shown that $x \in \hat{S}_{\delta, \lambda}(f)$. Since $x \in S_{(\lambda + \|f_0 - f\|_\infty + \eta)}$ was arbitrary our inclusion follows.
\end{proof}

We now discuss consequences of \Cref{lem:level-set-containment} for level set clustering of data $\cX_n$. As discussed in \Cref{ss:surrogate-clustering-function}, we use the surrogate clustering $\psiclustn(f)$ of data $\cX_n$ defined in \eqref{eq:levelset-clust-function}, which computes the graph-theoretic connected components \citep{sanjoy2008algorithms} of the $\delta$-neighborhood graph $G_{\delta}(A_{f,\lambda})$ having vertices $A_{f,\lambda} = \left\{ x \in \cX_n \mid f(x) \geq \lambda \right\}$ and edges $E = \{(x,y) \in A_{f,\lambda} \times A_{f,\lambda} \mid \|x-y\| < \delta \}$. The following known result (e.g. Lemma 1 in \cite{wang2019dbscan}) connects the surrogate clustering  $\psiclustn(f)$ to the  level-set estimator  $\hat{S}_{\delta, \lambda}(f)$ defined in the last section. We provide an independent proof here for completeness.

\begin{lemma} The surrogate clustering $\psiclustn(f) \in \ClustSpace[\cX_n]$ coincides with the partition of $A_{f,\lambda} 
	= \left\{ x \in \cX_n \mid f(x) \geq \lambda \right\}$ induced by the \emph{topological connected components} of the level set estimator $\hat{S}_{\delta, \lambda}(f)$.
	\label{lem:cciscc}
\end{lemma}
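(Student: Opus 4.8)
The plan is to identify the topological connected components of $\hat S_{\delta,\lambda}(f) = T_\delta(A_{f,\lambda}) = \bigcup_{z \in A_{f,\lambda}} B(z,\delta)$ explicitly in terms of the graph components, using only elementary point-set topology of finite unions of open balls. Write $A = A_{f,\lambda}$ for brevity, let $V_1, \ldots, V_k$ be the vertex sets of the graph-theoretic connected components of $G_{2\delta}(A)$, so that $\psiclustn(f) = \{V_1, \ldots, V_k\}$ (here $A$ is finite and each $V_j$ is nonempty), and for each $j$ put $U_j = \bigcup_{z \in V_j} B(z,\delta)$. Since $\hat S_{\delta,\lambda}(f) = \bigcup_j U_j$, it suffices to prove that $\{U_1, \ldots, U_k\}$ is precisely the collection of topological connected components of $\hat S_{\delta,\lambda}(f)$ and that each $U_j$ meets $A$ in exactly $V_j$; the claimed equality of induced partitions follows immediately.

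First I would show each $U_j$ is open and connected. Openness is clear since $U_j$ is a finite union of open balls. For connectedness, fix a basepoint $z_0 \in V_j$; for any $z' \in V_j$ there is a path $z_0, z_1, \ldots, z_m = z'$ in $G_{2\delta}(A)$, and $\|z_{i-1} - z_i\| < 2\delta$ forces the midpoint $\tfrac12(z_{i-1}+z_i)$ into $B(z_{i-1},\delta) \cap B(z_i,\delta)$, so consecutive balls along the path overlap. Each ball is convex, hence connected, so the union $\bigcup_{i} B(z_i,\delta)$ along the path is connected; as all such unions contain the common ball $B(z_0,\delta)$, their union over all $z' \in V_j$, which is exactly $U_j$, is connected.

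Next I would show the $U_j$ are pairwise disjoint. If $B(z,\delta) \cap B(z',\delta) \neq \emptyset$ with $z \in V_i$, $z' \in V_j$, then $\|z - z'\| < 2\delta$, so $(z,z')$ is an edge of $G_{2\delta}(A)$ and $z,z'$ lie in the same graph component, i.e.\ $i = j$; hence $U_i \cap U_j = \emptyset$ for $i \neq j$. Consequently each $U_j$ is open and, being the complement in $\hat S_{\delta,\lambda}(f)$ of the open set $\bigcup_{i \neq j} U_i$, is also closed in $\hat S_{\delta,\lambda}(f)$; a nonempty connected clopen subset of a space is one of its connected components, and since the $U_j$ exhaust $\hat S_{\delta,\lambda}(f)$ they are exactly its connected components. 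Finally, every $z \in V_j$ lies in $B(z,\delta) \subseteq U_j$ and, by disjointness, in no $U_i$ with $i \neq j$, so the partition of $A$ induced by the topological connected components of $\hat S_{\delta,\lambda}(f)$ groups the active points exactly as $\{V_1, \ldots, V_k\} = \psiclustn(f)$.

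I do not expect a serious obstacle here — the argument is essentially bookkeeping — but two points deserve care: the strict inequality $\|\cdot\| < 2\delta$ in the edge set must be matched to open balls of radius $\delta$ so that ``$(z,z')$ is an edge'' is genuinely equivalent to ``$B(z,\delta)$ and $B(z',\delta)$ overlap'' (a non-strict convention would break the $\Leftarrow$ direction), and one must invoke the standard fact that a connected clopen subset of a topological space is a full connected component rather than merely a connected subset.
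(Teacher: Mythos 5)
Your proof is correct, but it takes a genuinely different route from the paper's. The paper argues pointwise: for $x,y \in A_{f,\lambda}$ it shows graph-connectivity in $G_{2\delta}(A_{f,\lambda})$ is equivalent to \emph{path}-connectivity in $\hat S_{\delta,\lambda}(f)$ --- the forward direction by concatenating the segments $[x_i,x_{i+1}]$ along a graph path, and the converse by taking a continuous path $\varphi:[0,1]\to \hat S_{\delta,\lambda}(f)$ and extracting, via compactness of $[0,1]$, a finite chain of balls whose centers form a graph path. That converse implicitly relies on the standard fact that connected components of an open subset of $\R^d$ coincide with its path components (the tube is open, hence locally path-connected); the paper glosses over this conversion. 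Your argument instead identifies the topological components globally: you exhibit the candidate components $U_j=\bigcup_{z\in V_j}B(z,\delta)$, show each is open and connected (chain of overlapping convex balls), show they are pairwise disjoint (two $\delta$-balls meet iff their centers are within $2\delta$, which is exactly the edge relation), and conclude each $U_j$ is a nonempty connected clopen subset of the tube, hence a full component. This buys you two things: it sidesteps the path-connectivity issue entirely, and it replaces the compactness/open-cover argument with elementary disjointness bookkeeping. The paper's approach, on the other hand, is more local and generalizes more readily to settings where one only cares about whether two given points are co-clustered. Both your flagged caveats (strict inequality matched to open balls, and the clopen-implies-component fact) are handled correctly.
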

\begin{proof} For any two distinct choice $x, y \in A_{f,\lambda}$ we will show that $x$ and $y$ lie in the same connected component of the graph $G_{\delta}(A_{f,\lambda})$ if and only if they are path connected in the set $\hat{S}_{\delta, \lambda}(f)$. 
	
	Indeed, suppose that $x,y$ are in the same connected component of  $G_{\delta}(A_{f,\lambda})$. Then for some $2 \leq m \leq n$ there are points $\{x_i\}_{i=1}^m \subseteq   A_{f,\lambda}$ with $x_1 = x$, $x_m = y$ and $\|x_i - x_{i+1}\| <  \delta$ for $i = 1, \ldots, m-1$. These conditions ensure that the interval $[x_i, x_{i+1}] \doteq \{t x_i + (1-t) x_{i+1} : t \in [0,1]\}$ is entirely contained within $\hat{S}_{\delta, \lambda}(f)$. Thus there is a continuous path from $x$ to $y$ that entirely lies within $\hat{S}_{\delta, \lambda}(f)$, which ensures that $x, y$ are path connected in $\hat{S}_{\delta, \lambda}(f)$. 
	
	Conversely, suppose that $x, y \in A_{f,\lambda}$ are path connected in $\hat{S}_{\delta, \lambda}(f)$. Thus there is a continuous path $\varphi: [0,1] \to \hat{S}_{\delta, \lambda}(f)$ such that $\varphi(0)=x$ and $\varphi(1)=y$.  Based on $\varphi$, we can define two mappings $T: A_{f,\lambda} \to [0,1]$ and $F: [0,1] \to A_{f,\lambda}$ given by
    $$
    T(z) = \sup\{t \in [0,1] : \varphi(t) \in B(z,\delta/2)\} \quad \text{ and } \quad F(t) \in \argmin_{z \in A_{f,\lambda}} \|z - \varphi(t)\|.
    $$
    We must have $\varphi(t) \in B(F(t), \delta/2)$ for each $t \in [0,1]$ since the image of the path $\varphi$ lies entirely in $\hat{S}_{\delta, \lambda}(f)$. Further, for each $z \in A_{f,\lambda}$ such that $T(z) \in [0,1)$, it must be the case that $\|\varphi(T(z)) -z \| = \delta/2$ due to the continuity of $\varphi$.
    
    Starting with $t_0=0$ and $x_0 = F(t_0) = x$, recursively define $t_{i}=T(x_{i-1}) \in [0,1]$ and $x_i=F(t_i) \in A_{f,\lambda}$ for each $i \geq 1$. %
    By the definition of $T$, we note that $t_i = T(x_{i-1}) \geq t_{i-1}$ since $\varphi(t_{i-1}) \in B(x_{i-1}, \delta/2)$ holds given that $x_{i-1} = F(t_{i-1})$ for each $i \geq 1$. In fact, $\|x_i - x_{i-1}\| < \delta$ since $\|\varphi(t_i) - x_{i-1}\| \leq \frac{\delta}{2}$ follows by using the continuity of $\varphi$ and $t_{i}=T(x_{i-1})$, while $\|\varphi(t_i) - x_i\| < \delta/2$ follows since $x_i=F(t_i)$. Thus we can show that $x_0, x_1, \ldots, $ is an infinite path in $G_{\delta}(A_{f,\lambda})$ starting from $x_0=x \in A_{f,\lambda}$.
    
    Next, we claim that the path $x_0, x_1, \ldots, x_m$ in $G_{\delta}(A_{f,\lambda})$ will terminate at $x_m = F(t_m) = y$, where $m$ is smallest integer such that $t_{m} = 1$. Thus the proof will be complete once we show that such an $m \in \nat$ will exist. Whenever $t_{i-1} < 1$, we can observe that $t_{i-1} \neq t_i$ since $\varphi(t_i) \notin B(x_{i-1}, \delta/2)$ but $\varphi(t_{i-1}) \in B(x_{i-1}, \delta/2)$. Further as long as $t_{i-1} < 1$, we must also have $x_i \notin \{x_{0}, \ldots, x_{i-1}\}$ because $\varphi(t_i) \in B(x_i, \delta/2)$ but $\varphi(t_i) \notin \cup_{j=0}^{i-1} B(x_j, \delta/2)$ since $t_i > \max(t_0, \ldots, t_{i-1})$. Hence we have shown that for each $i \geq 1$, the points $x_0, \ldots, x_i \in A_{f,\lambda}$ will be distinct as long as $t_{i-1} < 1$. Since $A_{f,\lambda}$ is a finite set, there must be $m \in \nat$ such that $t_{m}=1$ and $x_{m} = F(t_{m}) = \varphi(t_m)=y$. Thus $x, y$ are connected by a path in $G_{\delta}(A_{f,\lambda})$.

\end{proof}  

When \Cref{lem:level-set-containment} holds and \Cref{ass:robust-cluster} is satisfied, the topological connected components of $\hat{S}_{\delta, \lambda}(f)$ will be close to those of the level set $S_\lambda$ if $\|f-f_0\|_\infty$ and $\delta$ are suitably small. To formally define this relationship we start with the following definition. 

\begin{definition}
	\label{def:co-clustring}
 Consider the binary \emph{co-clustering} relations $T, \hT: \cX \times \cX \to \{0,1\}$ defined as follows. For any $x, y \in \cX$, we define $T(x,y) = 1$  if $x$ and $y$ either both fall outside the level set $S_\lambda$ or if they lie in the same topological connected component of $S_\lambda$, otherwise we let $T(x,y) = 0$. The estimated quantity $\hT(x,y)$ is defined similarly as above, but with $S_\lambda$ replaced by $\hat{S}_{\delta, \lambda}(f)$. 
\end{definition}

\begin{lemma} 
Suppose that \Cref{ass:robust-cluster} is satisfied and the conclusion of  \Cref{lem:level-set-containment} holds with $\epsilon \doteq \|f-f_0\|_\infty + \eta \leq \beps$. Then 
whenever $T(x,y) \neq \hT(x,y)$ for some $x, y \in \cX$, it must follow that $\{x,y\} \cap S_{(\lambda - \epsilon)} \setminus S_{(\lambda + \epsilon)} \neq \emptyset$.
\label{cor:error-only-in-ciritcal-band}
\end{lemma}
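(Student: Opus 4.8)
The plan is to prove the contrapositive: assuming that neither $x$ nor $y$ lies in the critical band $S_{(\lambda-\epsilon)} \setminus S_{(\lambda+\epsilon)}$, I will show $T(x,y) = \hT(x,y)$. Since $S_{(\lambda+\epsilon)} \subseteq S_\lambda \subseteq S_{(\lambda-\epsilon)}$, a point $z$ avoids the band exactly when either $z \in S_{(\lambda+\epsilon)}$ (``deep inside'') or $f_0(z) < \lambda-\epsilon$, i.e. $z \notin S_{(\lambda-\epsilon)}$ (``far outside''). The one nontrivial external input is the sandwich $S_{(\lambda+\epsilon)} \subseteq \hat{S}_{\delta,\lambda}(f) \subseteq S_{(\lambda-\epsilon)}$ supplied by the conclusion of \Cref{lem:level-set-containment}, which applies because $\epsilon = \|f-f_0\|_\infty + \eta$. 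The first thing I would record from this sandwich is that a deep-inside point is active in both $S_\lambda$ and $\hat{S}_{\delta,\lambda}(f)$, whereas a far-outside point is inactive in both.

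Next I would split into cases according to the inside/outside status of $x$ and $y$. If at least one of them, say $y$, is far outside, then $y$ is inactive in both $S_\lambda$ and $\hat{S}_{\delta,\lambda}(f)$; so if both are far outside we get $T(x,y) = \hT(x,y) = 1$, and if only $y$ is far outside (with $x$ deep inside, hence active in both) we get $T(x,y) = \hT(x,y) = 0$. This leaves the substantive case, $x, y \in S_{(\lambda+\epsilon)}$, where both points are active in both $S_\lambda$ and $\hat{S}_{\delta,\lambda}(f)$ and the question reduces to matching up their topological connected components across the two sets.

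For that case I would invoke \Cref{ass:robust-cluster} with $\lambda_h = \lambda+\epsilon$ (so that $x,y \in S_{\lambda_h}$) against $\lambda_l \in \{\lambda,\, \lambda-\epsilon\}$, both of which lie in $[\lambda-\beps,\lambda+\beps]$ because $\epsilon \leq \beps$; the two (mutually contrapositive) implications in that assumption together give that ``$x,y$ connected in $S_{(\lambda+\epsilon)}$'', ``$x,y$ connected in $S_\lambda$'', and ``$x,y$ connected in $S_{(\lambda-\epsilon)}$'' are all equivalent. Combining this with the sandwich and the elementary fact that a connected subset of a space lies in a single connected component of any larger space, I get: $x,y$ are connected in $\hat{S}_{\delta,\lambda}(f)$ iff they are connected in $S_{(\lambda-\epsilon)}$ (using $\hat{S}_{\delta,\lambda}(f) \subseteq S_{(\lambda-\epsilon)}$ for one direction and $S_{(\lambda+\epsilon)} \subseteq \hat{S}_{\delta,\lambda}(f)$ for the other), hence iff they are connected in $S_\lambda$. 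Since both points are active in both sets, this chain of equivalences is precisely the statement $\hT(x,y) = T(x,y)$, completing the contrapositive.

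The main obstacle is essentially bookkeeping rather than analysis: one must apply \Cref{ass:robust-cluster} with the correct orientation of $\lambda_l$ versus $\lambda_h$, keep ``connected'' read as topological connectedness throughout so that it matches the definitions of $T$ and $\hT$, and be careful that the sandwich inclusions are used in the right direction for each implication. No quantitative estimate enters here — once the containment from \Cref{lem:level-set-containment} is in hand, the remainder is a deterministic topological argument.
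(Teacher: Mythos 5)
Your proposal is correct and follows essentially the same route as the paper's proof: prove the contrapositive, use the sandwich $S_{(\lambda+\epsilon)} \subseteq \hat{S}_{\delta,\lambda}(f) \subseteq S_{(\lambda-\epsilon)}$ from \Cref{lem:level-set-containment}, split into the same three cases by the inside/outside status of $x$ and $y$, and in the substantive case invoke \Cref{ass:robust-cluster} to transfer connectivity between the levels $\lambda-\epsilon$, $\lambda$, and $\lambda+\epsilon$. Your extra care in noting which inclusion is used for each direction of the connectivity equivalence is a faithful elaboration of the paper's slightly terser argument, not a different method.
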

\begin{proof}
	Fix any pair $x, y \in \cX$. It suffices to show that $T(x,y) = \hT(x,y)$ whenever $\{x, y\} \cap S_{(\lambda - \epsilon)} \setminus S_{(\lambda + \epsilon)} = \emptyset$. We will consider the following cases:
    \begin{enumerate}
    \item Case $x, y \in S_{(\lambda + \epsilon)}$. \Cref{ass:robust-cluster}  states that the  topological connectivity between $x, y$ as points in $S_{(\lambda')}$ remains unchanged as long as $\lambda' \in [\lambda - \beps, \lambda + \beps]$. Further  \Cref{lem:level-set-containment} shows that 
    \begin{equation}
     S_{(\lambda + \epsilon)} \subseteq \hat{S}_{\delta, \lambda}(f) \subseteq S_{(\lambda - \epsilon)}.
    \label{eq:inclusion}     
    \end{equation}
    Thus if $T(x,y) = 1$, points $x, y$ will be connected in $S_{(\lambda + \epsilon)}$ and hence also in $\hat{S}_{\delta, \lambda}(f)$, and thus we must have  $\hT(x,y) = 1$. Conversely, if $T(x,y) = 0$, then $x, y$ are disconnected in $S_{(\lambda - \epsilon)}$ and hence also in $\hat{S}_{\delta, \lambda}(f)$, giving $\hT(x,y) = 0$. 

    \item Case $x, y \notin S_{(\lambda - \epsilon)}$. Then $T(x,y) = 1$ since $x, y \notin S_\lambda$. But by \cref{eq:inclusion}, $x, y \notin \hat{S}_{\delta, \lambda}(f)$ and thus $\hT(x,y) = 1$.

    \item Case $x \in S_{(\lambda + \epsilon)}$ and $y \notin S_{(\lambda - \epsilon)}$ (or vice-versa). Then $T(x,y) = 0$ since $x \in S_\lambda$ but $y \notin S_\lambda$. \Cref{eq:inclusion} shows that $x \in \hat{S}_{\delta, \lambda}(f)$ and $y \notin \hat{S}_{\delta, \lambda}(f)$, and thus $\hT(x,y) = 0$.
    \end{enumerate}

	In any case, we have shown that $T(x,y) = \hT(x,y)$ if the condition $\{x,y\} \cap S_{(\lambda - \epsilon)} \setminus S_{(\lambda + \epsilon)} \neq \emptyset$ does not hold.
\end{proof}

If \Cref{ass:mass-decay} holds in addition to the result in \Cref{cor:error-only-in-ciritcal-band}, then one immediately notes that for samples $X, Y$ drawn independently at random from $f_0$ we have 
\begin{align*}
	P_{f_0}\{T(X,Y) \neq \hT(X,Y)\} &\leq P_{f_0}[\{X,Y\} \cap S_{(\lambda - \epsilon)} \setminus S_{(\lambda + \epsilon)} \neq \emptyset]\\
							&\leq 2 P_{f_0}\{X \in S_{(\lambda - \epsilon)} \setminus S_{(\lambda + \epsilon)}\} = 2 \int_{\{x: |f_0(x) - \lambda| \leq \epsilon\}} f_0(x) dx \leq 2C\epsilon.
\end{align*}
where $P_{f_0}$ denotes the probability under independent draws $X,Y$ from $f_0$.
This suggests that if $\|f-f_0\|_\infty$ and $\delta > 0$ are suitably small, so that $\epsilon$ can be chosen to be small, then  for any fixed pairs of indices $1 \leq i <  j \leq n$, the data points $x_i,x_j$ will, with  probability at least $1 - C\epsilon$, be identically co-clustered by the surrogate function $\psiclustn$ and the level-set function $\psi_{\lambda}$, that is, points $x_i,x_j$ will either be in the same cluster in both $\psiclustn$ and  $\psi_{\lambda}$, or they will be in different clusters of both $\psiclustn$ and $\psi_{\lambda}$. The following theorem builds on this intuition to bound $D\{\psiclustn(f), \psi_{\lambda}(f_0)\}$ where $D = \binom{n}{2}^{-1}\Lia$ is the loss from \Cref{thm:IABender-is-a-metric}.

\begin{theorem} Let $f_0$ and $\lambda > 0 $ satisfy \Cref{ass:uniform-continuity,ass:mass-decay,ass:robust-cluster}, and let $\cX_n = \{x_1, \ldots, x_n\}$ be independent draws from $f_0$. Then, whenever $n \geq 16$, with probability at least $1-\frac{n+1}{n^2}$ 
\begin{equation}
	\label{eq:sup-statement}
	 \sup_{\delta \in [\rnld, 2\heps{\epsilon}]} \sup_{f:\|f-f_0\|_\infty \leq \epsilon} D\{\psiclust[\delta](f), \psi_{\lambda}(f_0)\} \leq 8 \bigg(C \epsilon + \sqrt{\frac{\ln n}{n}}\bigg) \qquad \text{for every } \epsilon \in (0, \beps/2),
\end{equation}
where $\psiclust$ is the surrogate clustering defined in \cref{eq:levelset-clust-function}, $\psi_{\lambda}$ is the true level set clustering  defined in \Cref{ss:decision-theoretic-framework}, $D = \binom{n}{2}^{-1} \Lia$ is the loss from \Cref{thm:IABender-is-a-metric}, $\eta \mapsto \heps{\eta}$ is defined in \eqref{eq:cont-coeff}, $\rnld \doteq \rnldval$, and $v_d$ is the volume of the unit Euclidean ball in $d$ dimensions.
\label{thm:level-set-clustering-consistency}
\end{theorem}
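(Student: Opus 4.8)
The plan is to exploit the sum-of-pairwise-losses representation \eqref{eq:sum-representation}, showing that a pair $x \neq y$ contributes a nonzero loss only when one of its endpoints lies in a thin ``critical band'' $\Delta_\epsilon \doteq \{z \in \R^d : |f_0(z)-\lambda| \le 2\epsilon\}$ about the level, and then controlling the empirical mass of $\Delta_\epsilon$ via \Cref{ass:mass-decay} and a uniform law of large numbers. Concretely, I would condition on the intersection of two events, each of probability at least $1-1/n$ for $n$ large: (i) the level-set containment event of \Cref{lem:level-set-containment}, which holds simultaneously over all $f$, all $\eta,\lambda>0$, and all $\delta \in [\rnld, \heps{\eta}]$; and (ii) the uniform-convergence event from \Cref{lem:unif-convergence} (with confidence parameter $1/n$) applied to the class $\cG = \{\1{\{z : |f_0(z)-\lambda| \le t\}} : t \ge 0\}$, which is linearly ordered by inclusion and hence has VC dimension $1$. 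Their intersection has probability at least $1-2/n$ and, importantly, is free of $\epsilon$, $f$, and $\delta$, so it suffices to bound $D(\psiclust[\delta](f), \psi_\lambda(f_0))$ pointwise and then take suprema.

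So fix $\epsilon \in (0,\beps/2)$, a density $f$ with $\|f-f_0\|_\infty \le \epsilon$, and $\delta \in [\rnld, \heps{\epsilon}]$. Applying \Cref{lem:level-set-containment} with $\eta = \epsilon$ --- permissible because $\delta \le \heps{\epsilon} = \heps{\eta}$ --- together with $\|f-f_0\|_\infty \le \epsilon$ and monotonicity of super-level sets gives the sandwich $S_{\lambda+2\epsilon} \subseteq \hat{S}_{\delta,\lambda}(f) \subseteq S_{\lambda-2\epsilon}$; since $2\epsilon \le \beps$, \Cref{cor:error-only-in-ciritcal-band} then applies with its ``$\epsilon$'' equal to $2\epsilon$. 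Expanding $D(\psiclust[\delta](f), \psi_\lambda(f_0)) = \binom{n}{2}^{-1} \sum_{x \neq y \in \cX_n} \phi_{x,y}(\psiclust[\delta](f), \psi_\lambda(f_0))$ via \eqref{eq:sum-representation}, I claim $\phi_{x,y} = 0$ unless $x \in \Delta_\epsilon$ or $y \in \Delta_\epsilon$. Indeed, the two active-set mismatch terms need $\1{f(x)\ge\lambda} \neq \1{f_0(x)\ge\lambda}$ (resp.\ for $y$), which forces $|f_0(x)-\lambda| \le \|f-f_0\|_\infty \le \epsilon$; and the Binder term is nonzero only when both points are active in both sub-partitions, in which case \Cref{lem:cciscc} identifies the co-clustering relation of $\psiclust[\delta](f)$ restricted to $A_{f,\lambda}$ with the topological relation $\hT$ of $\hat{S}_{\delta,\lambda}(f)$, and that of $\psi_\lambda(f_0)$ restricted to $S_\lambda$ with $T$, so a disagreement forces $T(x,y) \neq \hT(x,y)$ and \Cref{cor:error-only-in-ciritcal-band} places one of $x,y$ in $S_{\lambda-2\epsilon} \setminus S_{\lambda+2\epsilon} \subseteq \Delta_\epsilon$. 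Since $\phi_{x,y} \le \max(a,m) \le 1$ under \Cref{lem:IABender-is-a-metric}'s hypotheses, we get $\Lia(\psiclust[\delta](f), \psi_\lambda(f_0)) \le \sum_{x \neq y \in \cX_n} \1{x \in \Delta_\epsilon \text{ or } y \in \Delta_\epsilon} \le (n-1)\,|\cX_n \cap \Delta_\epsilon|$, and hence $D(\psiclust[\delta](f), \psi_\lambda(f_0)) \le 2 P_n(\Delta_\epsilon)$, where $P_n$ is the empirical measure of $\cX_n$.

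On the uniform-convergence event, $P_n(\Delta_\epsilon) \le P(\Delta_\epsilon) + \beta_n \sqrt{P(\Delta_\epsilon)}$ with $\beta_n^2 \le c_0 \ln n/n$ for a universal $c_0$; \Cref{ass:mass-decay} (valid since $2\epsilon < \beps$) gives $P(\Delta_\epsilon) = \int_{\{|f_0-\lambda|\le 2\epsilon\}} f_0(x)\,dx \le 2C\epsilon$, and then the elementary bound $\beta_n\sqrt{P(\Delta_\epsilon)} \le P(\Delta_\epsilon) + \beta_n^2/4$ together with $\ln n/n \le \sqrt{\ln n/n}$ yields $P_n(\Delta_\epsilon) \le C'(C\epsilon + \sqrt{\ln n/n})$ for a universal $C'$. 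Thus $D(\psiclust[\delta](f), \psi_\lambda(f_0)) \le C_1(C\epsilon + \sqrt{\ln n/n})$ with $C_1 = 2C'$, and taking the supremum over $\delta \in [\rnld, \heps{\epsilon}]$ and over $f$ with $\|f-f_0\|_\infty \le \epsilon$, then noting the bound holds for every $\epsilon \in (0,\beps/2)$ on the same event, proves \eqref{eq:sup-statement}.

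The step I expect to be the main obstacle is the pairwise bookkeeping in the second paragraph: one must carefully match the sub-partition co-clustering relations of $\psiclust[\delta](f)$ and $\psi_\lambda(f_0)$ with the topological relations $\hT$ and $T$ on the correct active sets (keeping in mind that $\hat{S}_{\delta,\lambda}(f)$ may contain inactive observations outside $A_{f,\lambda}$, so the identification via \Cref{lem:cciscc} is only valid for active points), and keep the ``effective band half-width'' $2\epsilon$ consistent across the three places it appears --- \Cref{lem:level-set-containment}, where only $\delta \le \heps{\epsilon}$ is available so $\eta$ must be taken equal to $\epsilon$; \Cref{cor:error-only-in-ciritcal-band}, whose hypothesis requires this half-width to be at most $\beps$; and \Cref{ass:mass-decay} --- all while preserving uniformity in $\epsilon$ through the single VC event.
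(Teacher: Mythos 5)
Your proposal is correct and follows the paper's proof almost step for step: the same sandwich event from \Cref{lem:level-set-containment} with $\eta=\epsilon$, the same localization of nonzero pairwise losses $\phi_{x,y}$ from \eqref{eq:sum-representation} to the critical band via \Cref{cor:error-only-in-ciritcal-band} together with the inclusion $A_{f,\lambda}\,\triangle\, A_{f_0,\lambda}\subseteq\Delta(\epsilon)$ and the identification of the co-clustering relations with $T$ and $\hT$ through \Cref{lem:cciscc} on the active points, and the same final bound of $D$ by a constant multiple of the empirical mass of the band. The one place you genuinely diverge is the concentration step for that empirical mass: the paper applies Hoeffding's inequality to the single set $\Delta(\epsilon)$ for a fixed $\epsilon$ (at confidence $1-1/n^2$), whereas you apply \Cref{lem:unif-convergence} to the nested, VC-dimension-one family $\{\{z:|f_0(z)-\lambda|\le t\}\}_{t\ge 0}$. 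Your version is arguably preferable here: since the theorem asserts a single event of probability at least $1-2/n$ on which the bound holds \emph{for every} $\epsilon\in(0,\beps/2)$, a per-$\epsilon$ Hoeffding bound does not by itself deliver uniformity over the continuum of levels (one would need to invoke monotonicity of $\epsilon\mapsto\Delta(\epsilon)$ and a discretization, which the paper leaves implicit), while your uniform-convergence event gives it directly at the cost of a slightly larger universal constant absorbed into $C_1$.
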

\begin{proof} By \Cref{lem:unif-continuity}, the assumptions of \Cref{lem:level-set-containment} are satisfied. Thus, if we take $\eta = \epsilon \in (0, \beps/2)$ in \Cref{lem:level-set-containment}, we see that the condition
\begin{equation}
S_{(\lambda + 2\epsilon)} \subseteq \hat{S}_{\delta, \lambda}(f) \subseteq S_{(\lambda -  2\epsilon)}
\label{eq:cond-S1}
\end{equation}
holds uniformly over all $f: \cX \to \R$ with $\|f-f_0\|_\infty \leq \epsilon$ and   $\delta \in [\rnld, 2\heps{\epsilon}]$ with probability at least $1-1/n$. %
Henceforth, let us suppose that this event holds. Recall the true and estimated co-clustering relations $T$ and $\hT$ from \Cref{def:co-clustring}. By \Cref{cor:error-only-in-ciritcal-band}, for any $f, \delta$ such that $\|f-f_0\|_\infty \leq \epsilon$ and  $\delta \in [\rnld, 2\heps{\epsilon}]$,  we see that if $T(x,y) \neq \hT(x,y)$ for some $x, y \in \cX$, then one of $x$ or $y$ must lie in the region $\Delta(\epsilon) \doteq S_{(\lambda - 2\epsilon)} \setminus S_{(\lambda + 2\epsilon)} \subseteq \cX$.

Next we note that only a small fraction of observed data points $\cX_n$ lie in the region $\Delta(\epsilon) \subseteq \cX$. We use Hoeffding's inequality to establish this, noting that the event
$$
\hat{P}\{\Delta(\epsilon)\} - P_{f_0}\{\Delta(\epsilon)\} \leq \sqrt{\frac{\ln n}{n}}
$$
holds with probability at least $1-1/n^2$, where $\hat{P}(A) = \frac{1}{n}\sum_{i=1}^n \1{(x_i \in A)}$ denotes the empirical measure of any $A \subseteq \cX$, and $P_{f_0}\{\Delta(\epsilon)\} = \int_{\Delta(\epsilon)} f_0(x) dx$ denotes its population measure under the density $f_0$. Under \Cref{ass:mass-decay} we have $P_{f_0}\{\Delta(\epsilon)\} = \int_{\{x: |f_0(x)-\lambda| \leq 2\epsilon\}} f_0(x) dx \leq 2C\epsilon$ and thus:
\begin{equation}
	\label{eq:cond-S2}
\hat{P}\{\Delta(\epsilon)\} \leq 2C \epsilon + \sqrt{\frac{\ln n}{n}}.
\end{equation}

By the union bound, the events \eqref{eq:cond-S1} and \eqref{eq:cond-S2} will simultaneously hold with probability at least $1-\frac{n+1}{n^2}$. We henceforth assume that these  events hold. We are now ready to establish \eqref{eq:sup-statement}. Fix any $\epsilon \in (0, \beps/2)$, $\delta \in [\rnld, 2\heps{\epsilon}]$, and $f$ with $\|f-f_0\|_\infty \leq \epsilon$, and, for brevity, let $\bChf, \bC_0 \in \ClustSpace[\cX_n]$  denote $\psiclust[\delta](f) $ and $\psi_{\lambda}(f_0)$ respectively. Starting from the representation \eqref{eq:sum-representation} in the proof of \Cref{thm:IABender-is-a-metric}, we note that:
\begin{align*}
	&D(\bChf, \bCz) = \frac{1}{n(n-1)} \sum_{i \in [n]} \sum_{j \in [n] \setminus \{i\}} \phi_{x_i, x_j}(\bChf, \bCz) \\
	&= \frac{1}{n(n-1)}  \sum_{i \in [n]} \sum_{j \in [n] \setminus \{i\}} \bigg[ m \1{\{\bChf[f,A](x_i) \neq \bC_{0,A}(x_i)\}} +  m \1{\{\bChf[f,A](x_j) \neq \bC_{0,A}(x_j)\}}\\
	&\qquad\qquad +  a \1{\{\bChf[f,R](x_i,x_j) \neq \bC_{0,R}(x_i,x_j)\}}\1{\{\bChf[f,A](x_i)=\bC_{0,A}(x_i) = \bChf[f,A](x_j)=\bC_{0,A}(x_j)\}} \bigg] \\
	&= \frac{2m}{n} \sum_{i \in [n]} \1{\{\bChf[f,A](x_i) \neq \bC_{0,A}(x_i)\}}  + \frac{a}{n(n-1)} \sum_{i \in [n]} \sum_{j \in [n] \setminus \{i\}} \1{\{\bChf[f,R](x_i,x_j) \neq \bC_{0,R}(x_i,x_j)\}} \1{(x_i, x_j \in A_{f,\lambda} \cap A_{f_0,\lambda})} \\
	&= \frac{2m}{n} \sum_{i \in [n]} \1{(x_i \in A_{f,\lambda} \triangle A_{f_0,\lambda})} + \frac{a}{n(n-1)} \sum_{i \in [n]} \sum_{j \in [n] \setminus \{i\}} \1{\{\hT(x_i,x_j) \neq T(x_i,x_j)\}} \1{(x_i, x_j \in A_{f,\lambda} \cap A_{f_0,\lambda})}.
\end{align*}
Indeed, for the third equality, we have used that the last summand in the second equation (i.e.~the term in the third line) is non-zero only when $x_i, x_j \in A_{f,\lambda} \cap A_{f_0,\lambda}$, where $A_{f,\lambda} = \{x \in \cX_n : f(x) \geq \lambda \}$ and $A_{f_0,\lambda} = S_\lambda \cap \cX_n$ are the active sets of $\bChf$ and $\bCz$, respectively. For the subsequent equality, $\triangle$ symbolizes the symmetric difference between sets. Here we note  by definition that the co-clustering relation $\bC_{0,R}$ is the relation $T$ restricted to $\cX_n$.
Further, restricting to the points in $A_{f,\lambda}$, \Cref{lem:cciscc} shows that the co-clustering relation $\bChf[f,R]$ defined via $\psiclust[\delta](f)$ is equal to the co-clustering relation $\hT$ defined via the connected components of $\hat{S}_{\delta, \lambda}(f)$, i.e.~$\bChf[f,R](x,y) = \hT(x,y)$ for any $x, y \in A_{f,\lambda}$.

In order to complete the proof, we note the inequality $\1{\{T(x,y) \neq \hT(x,y)\}} \leq \1{\{x \in \Delta(\epsilon)\}} + \1{\{y \in \Delta(\epsilon)\}}$ and inclusion $A_{f,\lambda} \triangle A_{f_0,\lambda} \subseteq \Delta(\epsilon) \cap \cX_n$. While the inequality follows from the argument noted at the beginning of this proof, the inclusion follows since  $\1{\{f_0(x) \geq \lambda\}} = \1{\{f(x) \geq \lambda\}}$ whenever  $x \in \cX \setminus  \Delta(\epsilon)$ and  $\|f-f_0\|_\infty \leq 2\epsilon$. We thus obtain the bound: 
\begin{align*}
	D(\bChf, \bCz) \leq 2(m + a) \hat{P}\{\Delta(\epsilon)\} \leq 8 \bigg(C \epsilon + \sqrt{\frac{\ln n}{n}}\bigg).
\end{align*}
Since $\epsilon \in (0, \beps/2)$, $\delta \in [\rnld, \heps{\epsilon}]$, and $f$ with $\|f-f_0\|_\infty \leq \epsilon$ were arbitrary, we have shown that  \eqref{eq:sup-statement} holds.
\end{proof}

The proof of \Cref{lem:finite-sample-bound-for-assumption2} now follows as a special case of the above theorem. Indeed, suppose $f_0$ is an $\alpha$-H\"older continuous function so that $|f_0(x)-f_0(y)| \leq C_\alpha |x-y|^\alpha$ for some constant $C_\alpha > 0$. Then from \eqref{eq:cont-coeff} we find that $\heps{\eta} \geq (\eta/C_\alpha)^{1/\alpha}$ for any $\eta > 0$. Thus we can take $\epsilon = \max(\gamma, C_\alpha (2\delta)^\alpha)$ in \Cref{thm:level-set-clustering-consistency} to obtain \Cref{lem:finite-sample-bound-for-assumption2} with  $\bar{\gamma} = \bar{\epsilon}/2$, $\bar{\delta} = \frac{1}{2}(\bar{\epsilon}/2C_\alpha)^{1/\alpha}$ and $C_0 = 8(1+C)(1+C_\alpha)2^{\alpha}$.%

\subsection{Proof of \Cref{lem:sufficiency-of-data-adaptive-estimator}}
\label{ss:analysis-of-data-adaptive-estimator}

Here we show that our data-adaptive choice of $\delta = \hat{\delta}$ from \eqref{eq:data-adaptive-delta} based on the $k$-nearest neighbor distance  $\delta_k(x) \doteq \inf\{r  > 0 : |B(x,r) \cap \cX_n| \geq k \}$ will satisfy conditions of \Cref{lem:finite-sample-bound-for-assumption2}.
 
The argument of our proof starts with the following corollary of \Cref{lem:unif-convergence} used in \cite{chaudhuri2010rates} and later works like \cite{dasgupta2014optimal,jiang2017density} to study properties of $k$-nearest neighbor density estimates.

\begin{lemma}[Lemma 2 in \cite{dasgupta2014optimal}] 
	Suppose $P$ is a probability measure on $\R^d$ and $\hat{P}(A) = n^{-1}\sum_{i=1}^n \I{X_i \in A}$ is the empirical distribution based on $n$ i.i.d. samples $X_1, \ldots, X_n$ from $P$. Pick $0 < t < 1$ and let $C_{t,n}\doteq 16 \log(2/t) \sqrt{d \log n}$. If $k \geq d\log n$ then with probability at least $1-t$, for every ball $B \subseteq \R^d$ we have:
	\begin{align*}
		P(B) \geq C_{t,n} \frac{\sqrt{d \log n}}{n} \implies \hat{P}(B) > 0\\
		P(B) \geq k/n + C_{t,n} \frac{\sqrt{k}}{n} \implies \hat{P}(B) \geq \frac{k}{n} \text{ and } \\
		P(B) \leq k/n - C_{t,n} \frac{\sqrt{k}}{n} \implies \hat{P}(B) < \frac{k}{n}.
	\end{align*}
	\label{lem:uniform-k-nbds}
\end{lemma}

This leads to the following corollary for the behavior of our $k$ nearest neighbor distance based on data $\cX_n = \{x_1, \ldots, x_n\}$ drawn independently from the assumed distribution $P_0(A) \doteq \int_A f_0(x) dx$.
\begin{corollary} Suppose $k \geq (32)^2 d\log n$. Then with probability at least $1-2e^{-\frac{1}{32}\sqrt{\frac{k}{d \ln n}}}$ uniformly over $x \in \R^d$ and $r > 0$ we have:
	\begin{align*}
		\delta_k(x) \leq r \qquad&\text{ if } P_0(B(x, r))  \geq \frac{3k}{2n} \qquad\text{ and }\\
		\delta_k(x) \geq r \qquad&\text{ if } P_0(B(x, r)) \leq \frac{k}{2n}.
	\end{align*}
	\label{cor:knn-distance}
\end{corollary}
\begin{proof} We will take $t = 2e^{-\frac{1}{32}\sqrt{\frac{k}{d \ln n}}}$ in \Cref{lem:uniform-k-nbds} noting that $C_{t,n} = \frac{\sqrt{k}}{2}$. Thus \Cref{lem:uniform-k-nbds} shows that with probability $1-2e^{-\frac{1}{32}\sqrt{\frac{k}{d \ln n}}}$:  
\begin{align*}
	P_0(B) \geq \frac{3k}{2n} &\implies \hat{P}(B) \geq \frac{k}{n} \qquad\text{ and } \\
	P_0(B) \leq \frac{k}{2n} &\implies \hat{P}(B) < \frac{k}{n}.
\end{align*}
for each $x \in \cX$ and $r > 0$ with $B=B(x,r)$ and $\hat{P}(B) = \frac{|B \cap \cX_n|}{n}$. The proof is completed by noting that $\delta_k(x) = \inf \{r | \hat{P}(B(x,r)) \geq k/n \}$. Hence when $\hat{P}(B(x,r)) \geq \frac{k}{n}$ we must  have $\delta_k(x) \leq r$ and when $\hat{P}(B(x,r)) < \frac{k}{n}$ we must have $\delta_k(x) \geq r$. 
\end{proof}

Now we are ready to prove \Cref{lem:sufficiency-of-data-adaptive-estimator}. 

\begin{proof}[of \Cref{lem:sufficiency-of-data-adaptive-estimator}]
	
	By \Cref{ass:uniform-continuity} and \Cref{lem:unif-continuity}, $f_0$ is uniformly continuous and bounded. Thus there are constants $\bar{r} > 0$ and $M > 0$ such that  
	$$
	\sup_{x \in \cX} f_0(x) \leq M < \infty
	$$
	and 
	\begin{equation*}
		\sup_{\substack{x,y \in \cX \\ \|x-y\| \leq \bar{r}}} |f_0(x) - f_0(y)| \leq \lambda/4.
		\label{eq:ucont-bound}	
	\end{equation*}

	We will assume that $k \in [L \log n, n/L]$ for a suitably large constant $L > 0$ that is independent of $n$, which can be determined by examining the details of this proof. For example, we will assume that $L$ is large enough so that the event in \Cref{cor:knn-distance} holds with high probability.
	
	First let us show that $\hat{\delta}$ from \eqref{eq:data-adaptive-delta} will be less than $\bar{\delta}$. This will follow if for any $x_i \in \Activef[\hat{f}]$ we can show that $\delta_k(x_i) \leq r_0 = \min(\bar{r},\bar{\delta}/2)$. Indeed, since $\|\hat{f} - f_0\|_\infty \leq \lambda/2$, we must have $f_0(x_i) \geq \hat{f}(x_i) - \|f_0-\hat{f}\|_\infty \geq \lambda - \lambda/2 = \lambda/2$. Further, since $r_0 \leq \bar{r}$, we must have $\inf_{y \in B(x_i, r_0)} f_0(x) \geq \lambda/4$. This shows that
	$$
	P_0(B(x_i, r_0)) \geq \frac{\lambda}{4} v_d (r_0)^d \geq \frac{3}{2L} \geq \frac{3k}{2n}
	$$
	as long as $L \geq \frac{6}{\lambda v_d (r_0)^d}$. By \Cref{cor:knn-distance}, we must have $\delta_k(x_i) \leq r_0$ as required. 
	
	Next, let us show that $\hat{\delta} \geq \rnld$. Since $\hat{\delta} \geq \inf_{x_i \in \Activef[\hat{f}]} \delta_k(x_i)$  we will in-fact show that $\delta_k (x) \geq \rnld$ for any $x \in \cX$. Indeed, this will follow from \Cref{cor:knn-distance} once we can show that $P_0(B(x,\rnld)) \leq \frac{k}{2n}$. From the definition of $\rnld=\rnldval$ and the maximum value $M$ for $f_0$, we can note that
	$$
	P_0(B(x,\rnld)) \leq M v_d (\rnld)^d = 2^{d+4} \frac{M d \ln n}{\lambda n} \leq \frac{L \log n}{2n} \leq \frac{k}{2n}
	$$
	as long as $L \geq \frac{2^{d+5} M d}{\lambda}$.
	
\end{proof}

\section{Selecting the level $\lambda$}
\label{s:ballet-tuning-parameter}

The level set threshold $\lambda > 0$ is an important parameter for our analysis, and its choice needs to align well with the nature of clustering that we seek. In order to improve interpretation and  comparison of  level set clusters across different density models and clustering methods, following \cite{cuevas2001further,scrucca2016identifying}, we choose the fraction of noise points $\nu \in (0,1)$ rather than the actual density level $\lambda > 0$. Indeed, there is a one-to-one association between the two parameters when our true data generating distribution has a continuous density. Our experiments here demonstrate at least the following three possible ways to practically choose the level $\lambda$, depending on the goals of our clustering analysis.

\begin{enumerate}
    \item \emph{A known value of the level $\lambda$.} In our sky-survey analysis (\Cref{s:real-data-analysis}), the clustering of interest corresponded to an approximately known value of $\lambda$ motivated by scientific considerations. While our analysis in \Cref{s:real-data-analysis} directly used this threshold $\lambda$, we note in \Cref{s:persistent-clustering} that exploring the persistence of clusters across nearby choices of $\lambda$ may improve clustering accuracy. Indeed, even if the target level $\lambda$ is known exactly, the need for checking persistence of clusters  across nearby levels has also appeared in theoretical studies of level set clustering \citep{steinwart2011adaptive, sriperumbudur2012consistency, jiang2017density}.
    
    \item \emph{Finding the level $\lambda$  to separate a noisy background.} Often, our clusters of interest will be  connected components of regions with significantly large data density values, separated by noisy regions of comparatively much lower density values. For example, this is the case for our toy data example from \Cref{fig:mix-uniform-normal} and our illustrative data examples in \Cref{s:toy-data-analysis} if we are interested in the connected components of the obvious regions of non-negligible data density. (Note: depending on the density model used for the RNA-seq example, there is perhaps still some ambiguity about whether some observations bordering the major regions should be called noisy or not.) For these datasets, motivated by \texttt{DBSCAN} \citep{ester1996density, schubert2017dbscan}, we have found the following elbow heuristic useful: we sort the values  of the logarithm of the density estimates $\{\log \hat{f}(x_i)\}_{i=1}^n$  at the observations, and use the `kneedle' algorithm \citep{satopaa2011kneedle} to find a so-called elbow (or knee) in the plot of the logarithm of the density estimates  versus their ranks (see \Cref{fig:toy-challenge-elbow-plots,fig:illustration-elbow-selection}).  The intuition here is that a noisy observation $x_i$  will have a much smaller value of $\log \hat{f}(x_i)$ compared to a non-noisy observation $x_i$, and since the fraction of noisy observations is assumed to be small, this will reflect as an elbow in our plot. \Cref{fig:toy-challenge-compare-elbow} shows the \texttt{BALLET} clusters for the illustrative challenge datasets, based on the level selected using this elbow heuristic.
    
    \item \emph{Finding nuanced clusters by varying the density $\lambda$.} A careful choice of the level $\lambda$ can reveal more nuanced clustering structure in the data, whereby what seemed like a single cluster at a lower value of $\lambda$ can split into more than one cluster when a higher value of $\lambda$ is used. Indeed, this has motivated estimation of an entire hierarchical clustering tree as $\lambda$ varies \cite[see][and references therein]{wang2019dbscan,campello2020density,steinwart2023adaptive}, but additional strategies are then needed to obtain a flat clustering from the hierarchical clustering  tree \citep{campello2013framework,scrucca2016identifying}. Here, particularly for the RNA-seq dataset, we visualize the \texttt{BALLET} clusters for a range of values of $\nu \in \{5\%, 10\%, 15\%\}$ (see \Cref{fig:toy-challenge-dpmm,fig:toy-challenge-compare,fig:toy-challenge-compare-high}). Some of the clusters when $\nu = 5\%$ are seen to split further when we choose $\nu = 10\%$. In \Cref{fig:tsne-persistent-clustering}, we show the persistent clusters (\Cref{s:persistent-clustering}) for this dataset obtained by post-processing the results corresponding to the noise levels $\nu \in \{5\%,10\%,15\%\}$. We  note that a related notion of post-processing of the output of level set clustering methods across different levels has been explored in \cite{steinwart2011adaptive,sriperumbudur2012consistency,steinwart2015fullyAdaptive} to consistently estimate the smallest level where the true density has more than one connected component, but their aim is different from what we need here.
\end{enumerate}

\begin{figure}
    \centering
    \includegraphics[width=0.85\textwidth]{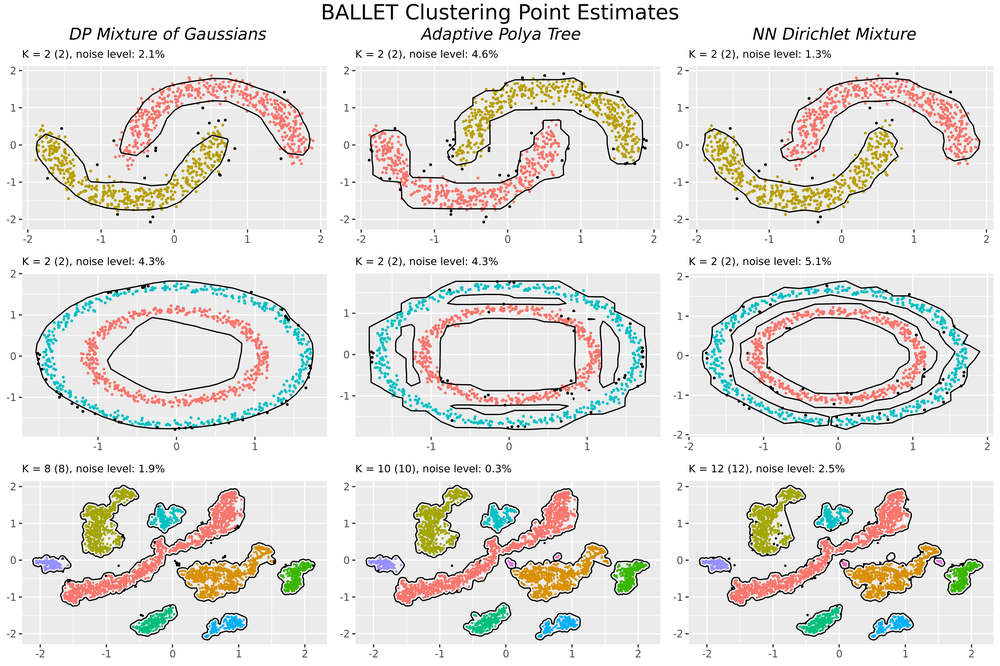}
    \caption{\texttt{BALLET} clustering point estimates obtained under the three different density models shown in \Cref{fig:toy-challenge-densities} with the level chosen using the elbow heuristic (see \Cref{fig:toy-challenge-elbow-plots}).}
    \label{fig:toy-challenge-compare-elbow}
\end{figure}

\begin{figure}
    \centering
    \includegraphics[width=0.85\textwidth]{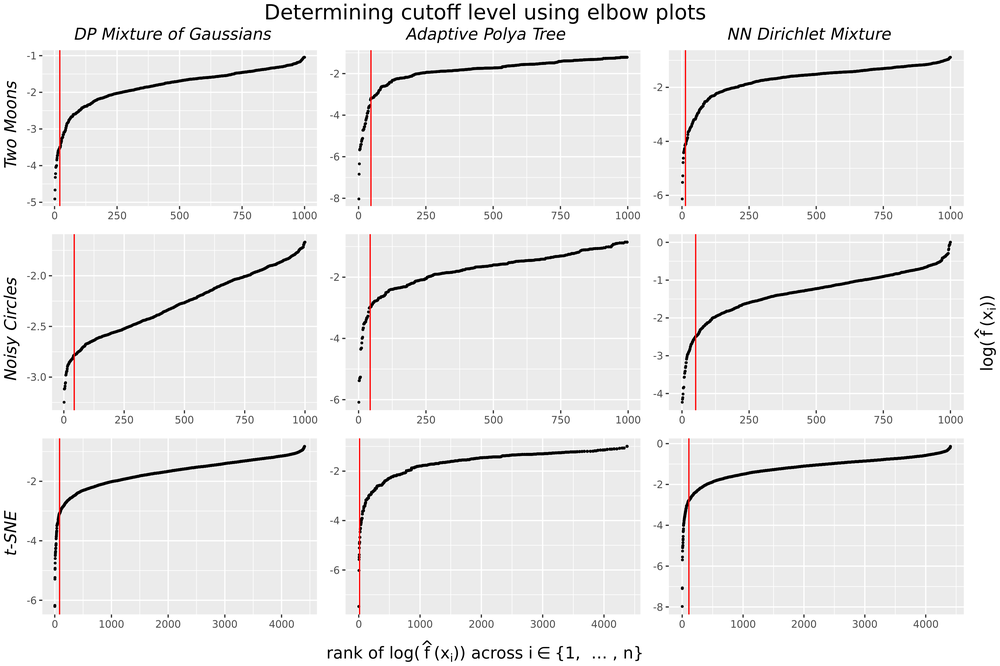}
    \caption{Elbow heuristic to choose the level for the illustrative challenge datasets across density models (\Cref{fig:toy-challenge-data})  based on sorting the log of posterior median density $\hat{f}$ at observations $\{x_i\}_{i=1}^n$ for each dataset and model pair. The elbow value (red vertical line) was automatically determined using the `kneedle' algorithm of \cite{satopaa2011kneedle}. \Cref{fig:toy-challenge-compare-elbow} shows the corresponding \texttt{BALLET} clusters.}
    \label{fig:toy-challenge-elbow-plots}
\end{figure}

\begin{figure}
    \centering
    \includegraphics[width=0.5\linewidth]{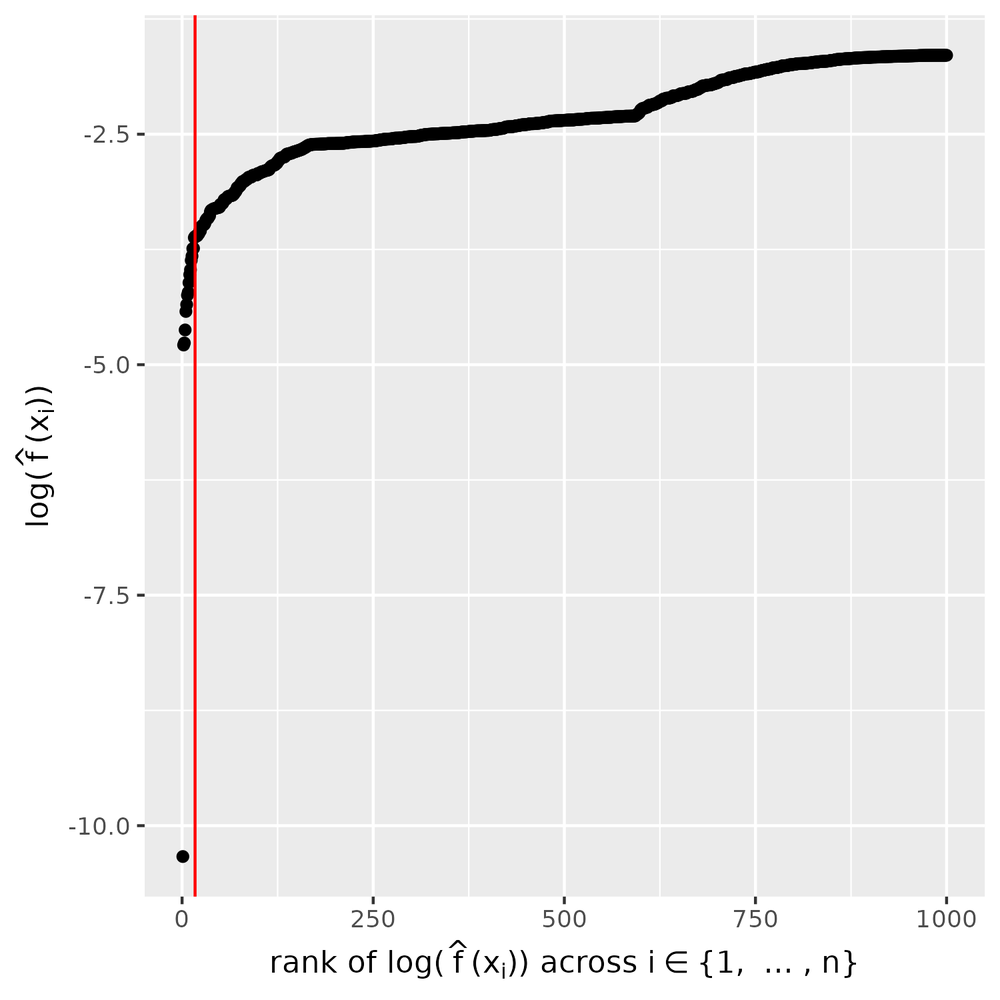}
    \caption{Elbow plot illustrating our selection of the level in \Cref{fig:mix-uniform-normal} based on sorting the log of posterior median density $\hat{f}$ evaluated at the observations $\{x_i\}_{i=1}^n$. The elbow value (red vertical line) was automatically determined using the `kneedle' algorithm of \cite{satopaa2011kneedle}.}
    \label{fig:illustration-elbow-selection}
\end{figure}

\section{Persistent Clustering}
\label{s:persistent-clustering}

\subsection{Motivation: robustness to the choice of level $\lambda$}

A key problem with level set clustering is that we may not exactly know the level \citep{campello2020density} or, worse yet, that our results can be sensitive to the exact  level that we choose for our analysis. Here we describe how to summarize clustering results  across multiple values of the level by visualizing a cluster tree \citep{clustree}, and reduce our  sensitivity to any single choice of the level by identifying clusters that remain active or ``persistent'' across all the levels in the tree.

As described in \Cref{s:discussion}, we expect the level set clusters of our Edinburgh-Durham Southern Galaxy Catalogue data to be sensitive to the exact value of the level $\lambda = (1+c) \bar{f}$,  determined by the scientific constant $c$. Since $c$ is believed to be around one \citep{jang2006nonparametric}, our preliminary analysis of this data in  \Cref{s:real-data-analysis} proceeded with the assumption that $\lambda = 2\bar{f}$,  or equivalently that $c=1$. Here we summarize our results from computing the \texttt{BALLET} clusters at various density levels  corresponding to the values $c \in \{.8, .9, \ldots, 1.2\}$.

\subsection{Visualizing the cluster tree}

\begin{figure}
    \centering
    \includegraphics[width=\textwidth]{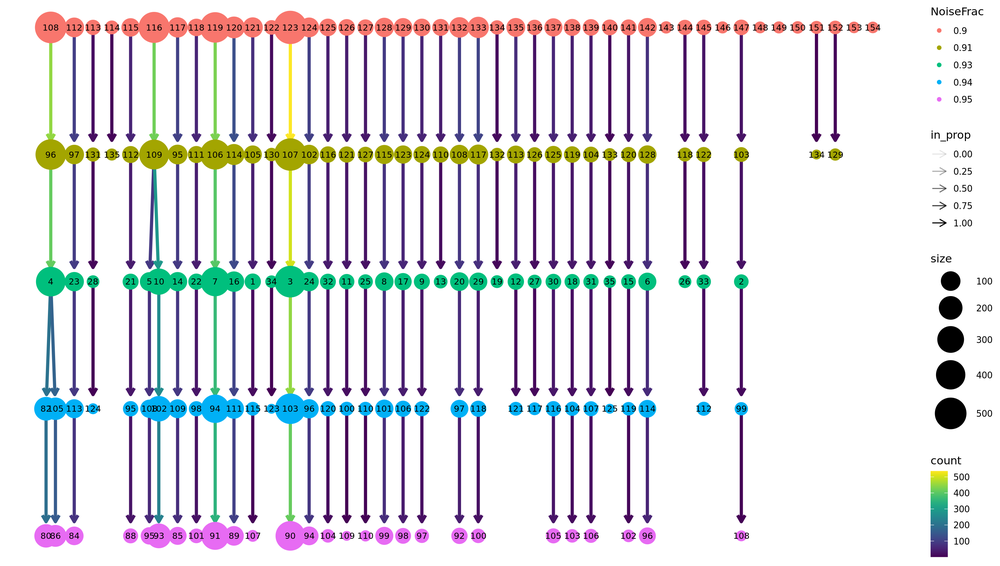}
    \caption{The \texttt{BALLET} cluster tree \citep{clustree} for the Edinburgh-Durham Southern Galaxy Catalogue data across multiple density levels corresponding to $c \in \{.8, .9, \ldots, 1.2\}$. The nodes in each row are the \texttt{BALLET} clusters for the  fixed level $\lambda=(1+c)\bar{f}$, where $c$ increases as we go down the tree. An edge between nodes in two successive levels indicates an overlap between the two corresponding clusters.  While most clusters at the top level  ($c=0.8$) have a unique child in the tree at each lower level (as $c$ increases), some clusters at the top level split into multiple children or did not have any descendent in the bottom levels. For each cluster at the bottom level, the \emph{persistent clustering} algorithm finds its topmost ascendant in the tree below any (potential) split. }
    \label{fig:edsgc-cluster-tree}
\end{figure}

It is well known \citep{hartigan1975clustering,campello2020density,menardi2016modal} that the level set clusters across different levels of the same density are nested in a way that can be organized into a tree. In particular, given two clusters from two different levels of the same density, it is the case that either both the clusters are disjoint, or one of the clusters is contained inside the other. 

We empirically found that our \texttt{BALLET} estimates across various levels could similarly be organized into a tree. We visualized this tree in \Cref{fig:edsgc-cluster-tree} by modifying code for the  \texttt{clustree} package  in R \citep{clustree}.  We see that while \texttt{BALLET} found 44 clusters at the lower level ($c=.8$), it only found 27 clusters at the higher level ($c=1.2$), indicating that more than a third of the lower level clusters disappear as the choice of the level is slightly increased.  Further, in this process, two of the lower level clusters are also seen to split into two clusters each.

\subsection{Persistent Clustering}

Given the sensitivity of level set clusters to the choice of level, we now describe a simple algorithm that processes the cluster tree to extract clusters that are active (persistent) across all the levels in the tree. Some clusters can split into multiple sub-clusters as we increase our level in the cluster tree (i.e. go down the tree). In such cases we will only focus on the cluster's descendants at the time of the last split.

Suppose a cluster tree like \Cref{fig:edsgc-cluster-tree} is given. Starting from each cluster at the bottom row of the tree,  the \emph{Persistent Clustering} algorithm involves walking up the tree until we (i) either hit the top row of the tree, or (ii) hit a node whose parent has more than one child. The collection of clusters corresponding to the final nodes obtained from these runs will be called \emph{persistent clusters}. 

\texttt{BALLET} persistent clusters for the Edinburgh-Durham Southern Galaxy Catalogue data are shown in \Cref{fig:edsgc-ballet-persistent}.
\Cref{tab:edsgc-catalogues-vs-persistent} compares the performance of \texttt{BALLET} persistent clusters to those at the fixed level ($c=1$). We find that persistent clustering improves specificity on both the Abell and EDCCI catalogs without loss in sensitivity. 

\begin{figure}
    \centering
    \includegraphics[width=1\linewidth]{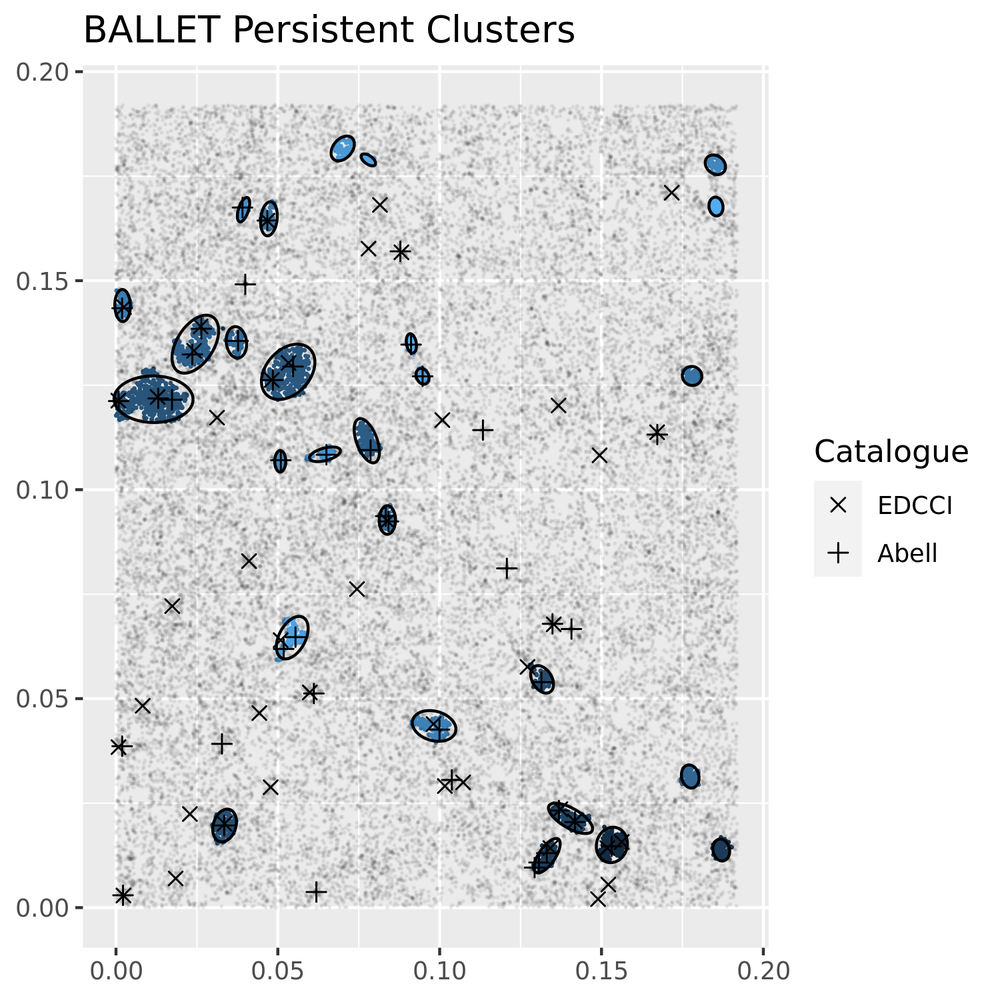}
    \caption{The \texttt{BALLET} persistent clustering estimate for the Edinburgh-Durham Southern Galaxy Catalogue data across levels $c \in \{.8, \ldots, 1.2\}$.}
    \label{fig:edsgc-ballet-persistent}
\end{figure}

\begin{table}
\centering
\begin{tabular}{|c|c|c|}
\hline
 & BALLET (persistent) & BALLET ($c=1$) \\
\hline
Sensitivity (EDCCI)  & 0.69 & 0.67 \\
Specificity (EDCCI)  & 0.74 & 0.69 \\
Exact Match (EDCCI)  & 0.48 & 0.51\\
\hline
Sensitivity (Abell)  & 0.40 & 0.40 \\
Specificity (Abell)  & 0.44 & 0.40 \\
Exact Match (Abell)  & 0.26 & 0.26\\
\hline
\end{tabular}
    \caption{Comparing results from \texttt{BALLET} persistent clusters across $c \in \{.8, \ldots, 1.2\}$ to the \texttt{BALLET} point estimate at $c=1$.  Persistent clustering improves the specificity for both the catalogues without losing sensitivity.}
    \label{tab:edsgc-catalogues-vs-persistent}.
\end{table}

While we have motivated the idea of persistent clustering by the practical concern of robustness, the idea of obtaining a single clustering by cutting the cluster tree at locally adaptive levels has been explored before in the algorithmic level set clustering literature \citep{campello2020density,campello2015hierarchical}. Such methods are useful when we want to recover density-based clusters that can only be separated by considering differing values of the levels (\Cref{fig:level-set-pathological}).

\begin{figure}
	\centering
	\includegraphics[width=0.6\textwidth]{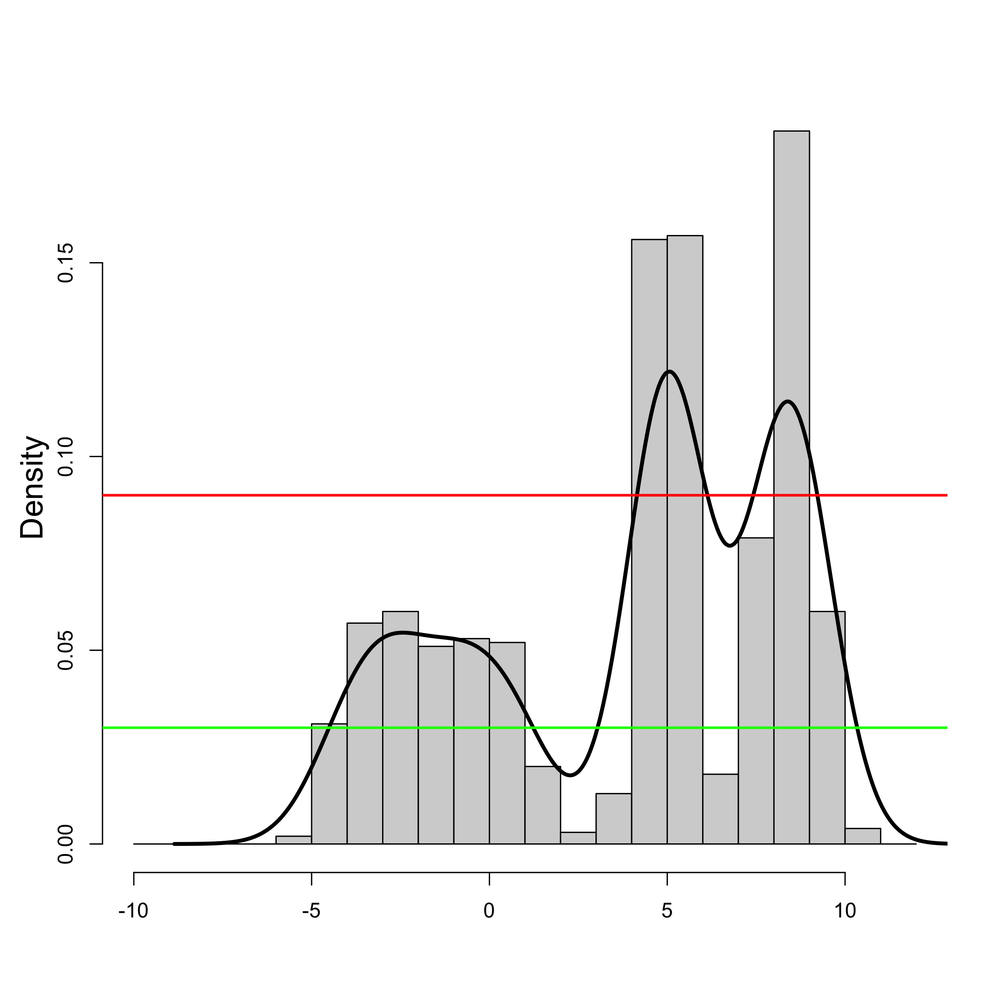}
	\caption{An example of a situation in which we might want to cluster data according to locally adaptive levels. }
	\label{fig:level-set-pathological}
\end{figure}

\section{Other clustering methods}
\label{s:other-clustering-methods}

There are a wide variety of clustering algorithms (e.g.~\cite{wani2024comprehensive,xu2015comprehensive}) because no single notion of clustering is  useful across all applications \citep{von2012clustering,hennig2015true}. Here our focus has been on Bayesian statistical approaches to clustering \citep{wade2023bayesian} that account for sampling variability within the data and have the ability to use application-dependent prior information. In principle, our  density-based clustering framework allows for the combination of statistical inference with any flexible clustering notion required by the application (provided the clustering $\psi(f)$ can be computed given the population density $f$).

As an example of our framework, our \texttt{BALLET} methodology shows the ability to find arbitrary shaped clusters in comparison to Gaussian mixture models, which have been predominantly  used for Bayesian clustering \citep{wade2023bayesian}. While additional algorithmic approaches like spectral and hierarchical clustering \citep{wani2024comprehensive} also have the ability to find arbitrary shaped clusters, their clustering can be sensitive to the presence of even a few noisy observations. This is seen in \Cref{fig:two-circles-with-outliers} with the addition of six new observations to a sample of $n=600$ observations from one of the datasets considered in \Cref{s:app-toy-challenge}.

\begin{figure}
    \centering
    \includegraphics[width=.9\linewidth]{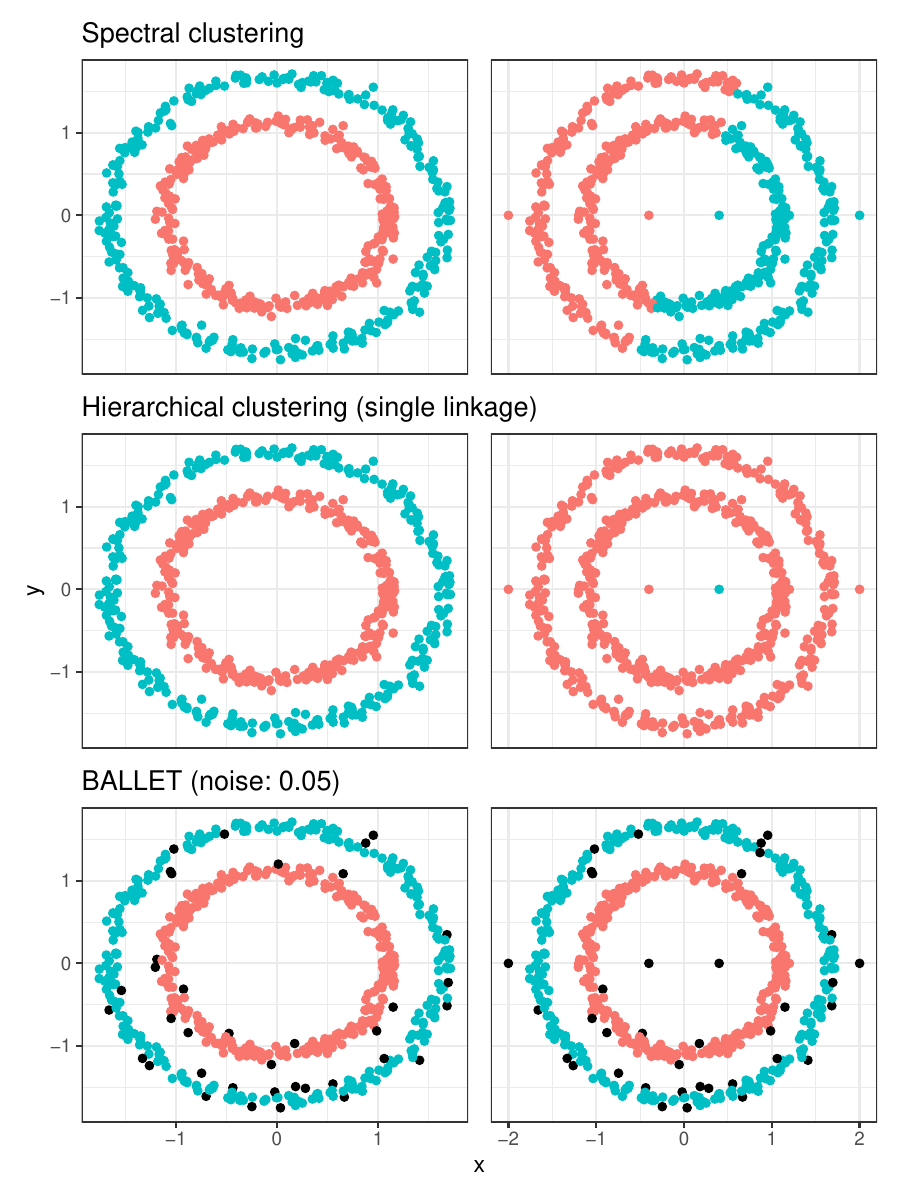}
    \caption{Spectral and hierarchical clustering results change when we add six equally spaced observations on the $y=0$ line to $n=600$ observations sampled from one of the datasets in \Cref{s:app-toy-challenge} (left: original clustering, right: clustering with six observations added). \texttt{BALLET} clustering based on $\nu=5\%$ noise points is majorly unaffected here as most of these additional points are declared to be noise.}
    \label{fig:two-circles-with-outliers}
\end{figure}

\clearpage
\putbib[bibliography]
\end{bibunit}

\end{document}